\let\oldbibliography\thebibliography
\renewcommand{\thebibliography}[1]{%
\oldbibliography{#1}%
\setlength{\itemsep}{0pt}%
}
\newlength{\off}\setlength{\off}{2em}\newlength{\temp}
\newlength{\laboff}\setlength{\laboff}{.5em} \newcounter{stepno}
\newtheorem{theorem}{Theorem}[section]
\newtheorem{corollary}[theorem]{Corollary}
\newtheorem{lemma}[theorem]{Lemma}
\newtheorem{proposition}[theorem]{Proposition}
\newtheorem{claim}[theorem]{Claim}
\newtheorem{fact}[theorem]{Fact}
\newcommand{\sq}{\hbox{\rlap{$\sqcap$}$\sqcup$}}
\newcommand{\qed}{\hspace*{\fill}\sq}
\newenvironment{proof}{\noindent {\em Proof.}\ }{\qed\par\vskip 4mm\par}
\newenvironment{proofof}[1]{\bigskip \noindent {\em Proof of #1:}\quad }{\qed\par\vskip 4mm\par}
\newcommand{\ignore}[1]{}
\newcommand{\MyQuote}[1]{\bigskip \refstepcounter{equation}%
     \parbox{0.83\columnwidth}{#1}\hfill(\arabic{equation})\\[0.2cm] }
\def\pr{\mbox{P}}
\def\supp{\mbox{supp}}
\def\ex{\mbox{E}}
\def\CC{\mathcal{C} }
\def\CA{\mathcal{A} }
\def\CL{\mathcal{L} }
\def\CF{\mathcal{F} }
\def\CG{\mathcal{G} }
\def\RR{\mathbb{R}}
\def\ZZ{\mathbb{Z}}
\def\PP{\mathbb{P}}
\def\NN{\mathbb{N}}
\def\dcg{\mbox{DC-WSN}}
\def\rgg{\mbox{RGG}}
\def\vbrgg{\mbox{VB-RGG}}
\def\prgg{\mbox{RGG}_P}
\def\pvbrgg{\mbox{VB-RGG}_P}
\def\dcc{\mbox{DC-C-WSN}}
\def\dcr{\mbox{DC-R-WSN}}
\def\Broad{\mbox{Send}}
\newif\ifcomments
\newif\ifrevision
\newif\ifrevisiontwo
\begin{document}

\title{Optimal Radius for Connectivity in Duty-Cycled Wireless Sensor Networks}
\author{Amitabha Bagchi \and Cristina Pinotti \and Sainyam Galhotra \and Tarun Mangla}





\maketitle

\begin{abstract}
We investigate the condition on transmission radius needed to achieve
connectivity in duty-cycled wireless sensor networks (briefly,
DC-WSN).  First, we settle a conjecture of Das et. al. (2012) and
prove that the connectivity condition on Random Geometric Graphs
(RGG), given by Gupta and Kumar (1989), can be used to derive a weak
sufficient condition to achieve connectivity in DC-WSN. To find a
stronger result, we define a new vertex-based random connection model
which is of independent interest. Following a proof technique of
Penrose (1991) we prove that when the density of the nodes approaches
infinity then a finite component of size greater than 1 exists with
probability 0 in this model. We use this result to obtain an optimal
condition on node transmission radius which is both necessary and
sufficient to achieve connectivity and is hence {\em optimal}. The
optimality of such a radius is also tested via simulation for two
specific duty-cycle schemes, called the {\em contiguous} and the {\em
  random selection} duty-cycle scheme. Finally, we design a
minimum-radius duty-cycling scheme that achieves connectivity with a
transmission radius arbitrarily close to the one required in Random
Geometric Graphs. The overhead in this case is that we have to spend
some time computing the schedule.
\end{abstract}

\section{Introduction}
\label{sec:intro}

Wireless Sensor Networks (WSNs) have a wide range of applications from
wildlife monitoring to critical infrastructure monitoring, from
traffic management to individual health management \cite{Yick}. The
three primary functions of a sensor are to sense, process and
communicate. After being deployed randomly over a limited area,
sensors start to sense a phenomenon on a regular basis. Then, they
process the raw data, and wirelessly forward it to a base station,
connected to the external world, via multihop paths. Since sensors
deployments are often made in environments where regular power supply
cannot be guaranteed, they have to rely on batteries and are therefore
constrained by a limited energy budget. Their monitoring activities,
however, tend to have a long time line, and so energy consumption is
the overarching problem for WSN operations.

For conserving energy in WSNs, firstly, transmission power
can be carefully controlled. This allows to
save energy for the sending node, but also it avoids to loose energy
at neighboring nodes for interferences.
However, enough transmission power has to be used to ensure that the
basic communication function of the WSN--relaying data to the base
station--can be completed successfully. This trade off translates into
a question of optimal radius for the connectivity of the WSN
graph. This has been studied using the Random Geometric Graph (RGG)
model by Gupta and Kumar~\cite{gupta-cdc:1998} among others.  

Sensors can also save energy during sensing and processing activities
by turning off the radio/sensing sensor module when possible. This
fact has been exploited by passive power conservation
mechanisms~\cite{Pantazis07}.  In fact the basic idea behind the
notion of Duty-Cycled Wireless Sensor Networks (briefly, \dcg s) is
that sensors do not need to sense and process all the time. 
\ifrevisiontwo {\bf \fi 
This is not only an option, but it might be a necessity if the sensors
harvest energy from the environment.  For example, a sensor powered by
a solar cell must harvest its energy only during the daytime, and can
release it other times.  Currently, sensors powered by solar cells are
produced which, after recharging for few hours hours in daytime, are
fully functional (i.e., transmitting a measured value every 15-30
minutes) for one day, even in complete darkness.  While these solar
cells are macro devices--they require long recharge times and offer
long self-discharging periods--we also have micro energy harvesters,
available in sizes ranging from centimeters down to micrometers, which
store enough energy just for one measurement. So, they offer very short
charge and self-discharge periods. The functioning of a sensor powered
with a micro energy harvester implies intermittent measuring and data
sending followed by scavenging and storing the energy for the next
measurement in a buffer (capacitor, battery).  Hence recharge
opportunities impact individual node operations as well as system
design considerations.  Indeed, to exploit the possible added
benefits, the nodes must optimize their capability by tuning different
node parameters, like the duration of the recharging period, and its
starting point, in a manner that the available energy is not exhausted
before the next recharge cycle~\cite{Sudevalayam:2011}.  Since these
parameters depend on the sensor technology and on the applications, we
assume that they are injected in the sensors at the time of their
deployment.  
\ifrevisiontwo } \fi

For sensors such as these and others we provide the following model:
In our duty-cycling paradigm, sensors repeat a cycle of fixed length
$L$, during which they switch between the {\em awake} and {\em sleep}
mode. During the sleep mode, the sensors recharge or conserve their
batteries by turning-off their sense/processing/radio modules; during
the active mode, the sensors sense, process and communicate regularly.
A natural method for deciding when a sensor node sleeps and when it
wakes is to probabilistically choose its sleeping times.  However, for
the duty-cycled network to function as it should, we need two
properties: (a) {\em time coverage,} i.e. data generated at any time
must be sensed and relayed by the network, and (b) {\em connectivity,}
i.e. every node should be connected to every other and to the base
station.  The study of the conditions that guarantee these properties
is the focus of this paper.

This problem was initially investigated
in~\cite{das-icuimc:2012}, where it was conjectured that the \dcg\ is
connected if in every time slot the nodes awake form a RGG connected
in that time slot. In this paper we prove this
conjecture, and also show that the radius of connectivity that this
conjecture implies is not optimal i.e. a lower connection radius (and
hence lower transmission power) is sufficient. This lower connection
radius is also shown to
be optimal in the sense that it provides a necessary condition for
connectivity. We call it the {\em optimal radius} as opposed to the {\em
  weak radius} conjectured in~\cite{das-icuimc:2012},  We present two natural duty-cycling schemes, called the
{\em contiguous} scheme and the {\em random selection} scheme that both satisfy
the time coverage property. Apart from being useful duty-cycling
schemes for real applications, these schemes also highlight the
contribution of this paper since they have the same weak radius of
connectivity but very different optimal radii.  
We also show
that if we are willing to spend some preprocessing time in defining the duty-cycle scheme, 
we can compute a \emph{deterministic duty-cycling scheme} that achieves connectivity at the minimum
possible radius, i.e. the \rgg\ radius.

In order to prove the optimal radius result for duty-cycled WSNs we
introduce a new continuum percolation model that we call the {\em
  vertex-based random connection model} which is a natural
generalization of the random connection model as defined
in~\cite{meester:1996}. In this model each node instantiates a random
variable independently of all other nodes and a connection between two
nodes (that are within transmission radius of each other) is made by
computing a function of the random values present at the two nodes. To
explain by example, we could say that each node chooses one colour at
random out of Red, Green and Blue and two nodes that are within
transmission radius of each other are connected only if they both have
different colours. Clearly, in this model edges are not formed
independently since in the example cited above we cannot have a clique
of size 4 since there are only three colors and so there must be at
least one pair of vertices that has the same colour. In other words,
in this example the probability of a 4 clique existing is 0 whereas in
a model where edges are formed independently of each other a clique of
size 4 could form with non-zero probability. To the best of our
knowledge the vertex-based random connection model has not been
studied in this generality before. We present basic results about this
model, including a high density result following Penrose's result for
the simple random connection model~\cite{penrose-aap:1991}, which
allows us to prove the sufficient condition of the optimal radius. 

Finally, our vertex-based random connection model can also be
considered a generalization to a model considered by researchers in
the area of key presharing for secure communication. In the key
presharing setting Eschenauer and Gligor~\cite{eschenauer-ccs:2002}
proposed a scheme in which each node receives a randomly selected
subset of keys and two nodes can communicate if they share a key. This
is similar to our model if we think of time slots as keys and the time
slots that a vertex is awake as those keys assigned to a
vertex. In~\cite{yagan-tit-a:2012}, the author stated a specific
conjecture regarding the connectivity of \rgg s operating with the
Eschenauer-Gligor scheme. Our Theorem~\ref{thm:vb-gamma} settles this
conjecture. Therefore, our contribution is a general and foundational
contribution, as well as a detailed and in-depth study of the
particular setting of duty-cycled WSNs.

A preliminary version of this work has appeared as a short 4 page paper
in the proceedings of ACM MSwim 2013~\cite{bagchi-mswim:2013}.  That version contains none of the proofs presented here
and contains only a few of simulation results of
Section~\ref{sec:simulation}. 

The paper is organized as follows.  Section~\ref {sec:related} relates
the previous work in this area.  Section ~\ref{sec:model}, after
introducing our duty-cycle wireless sensor network view, describes its
model and its challenges.  Section~\ref{sec:weak} introduces the weak
radius, while Section~\ref{sec:strong} presents the optimal radius
defining a new ``vertex-based'' random connection
model. Section~\ref{sec:simulation} highlights the significance of our
results applying them to the two natural contiguous and random
selection duty-cycling schemes.  Finally, in
Section~\ref{sec:deterministic}, we present a method for computing a
periodic duty-cycling scheme that achieves connectivity at the minimum
possible radius, i.e. the \rgg\ radius.  Conclusions and wider
implications of our results are discussed in
Section~\ref{sec:conclusion}.

\section{Related Work}
\label{sec:related}
Duty-cycling is a passive power conservation mechanism widely adopted
in WSNs~\cite{Pantazis07}.  The basic idea of duty-cycling is to
reduce the time a node is idle or spends overhearing unnecessary
activities by turning off radio/sensing sensor modules and thus
putting the node in the so called sleep mode.  \ifrevision {\bf \fi
  Early research on duty-cycling in WSNs considered this technique
  tightly integrated with the design of communication protocols at the
  MAC layer \cite{PolastreHC04,YeHE02,Ye06a}. The S-MAC duty-cycle
  protocol \cite{YeHE02} was proposed to minimize energy consumption
  in battery-powered wireless sensor nodes. B-MAC aims to reduce costs
  due to synchronization in S-MAC by means of long preambles and
  low-power listening~\cite{PolastreHC04}. SCP-MAC is a hybrid
  solution between S-MAC and B-MAC which relies on scheduled channel
  pollings instead of asynchronous preambles~ \cite{Ye06a}. These. and
  other advanced duty-cycle solutions for the MAC layer, are revised
  in the comprehensive A-MAC architecture, proposed in \cite{Dutta10}.
  Subsequently, sleep/wakeup protocols have been implemented at the
  network or application level because they permit a greater
  flexibility and, in principle, can be used with any MAC protocol.
  These latter protocols can be subdivided into three main categories,
  {\em on-demand}, {\em period scheduling}, and {\em asynchronous
    scheme}~\cite{Anastasi09}.  The basic idea behind on-demand
  protocols is that a node should wake up only when another node wants
  to communicate with it. This requires a way to inform the sleeping
  node that some other node is ready to communicate with it.
  Typically in such on-demand schemes multiple radios with different
  energy/performance trade offs (i.e. a low-rate and a low power radio
  for signaling, and a high-rate but more power hungry radio for data
  communication) are used, and thus they require that sensor hardware
  characteristics are adapted to the adopted duty-cycle scheme.  In
  periodic scheduling, nodes wake up according to a wakeup schedule,
  and remain active (listening to the radio) for a short time interval
  to communicate with their neighbors.  Finally, in asynchronous
  sleep/wakeup protocols a node can wake up when it wants and still be
  able to communicate with its neighbors. Both periodic and
  asynchronous schemes must guarantee that nodes are able to
  communicate with neighbors without any explicit information exchange
  among nodes.  Thus, the main challenge in these schemes is to
  guarantee that the network is connected and that there is always a
  sufficient number of awake sensors.  A detailed survey of the
  sleep/wakeup schedules up to 2008 can be found in
  \cite{Anastasi09}. In more recent years, flexible periodic
  duty-cycling schemes have been proposed. These schemes vary the
  length of the awake period to react to external conditions, like the
  amount of energy drained so far or the overall operation latency.
  For example, in \cite{Guo2009}, the authors consider duty cycling in
  an energy harvesting WSN and adapt the length of the awake period to
  the amount of available energy which varies depending on the space
  and time.  In \cite{Ghidini2012}, the flexibility idea is pushed
  even forward by proposing a Markov chain-based duty-cycling scheme.
  In that paper, the authors assume that the sensors are locally
  time-synchronized and feature a common time-slot length, but the
  time-slot length is computed along with other input parameters, like
  working schedule duty cycle and memory coefficient of the
  Markov-chain process, so as to improve the network efficiency while
  keeping a constant connection delay, or to improve connection delay
  yet not negatively affecting efficiency.

Many other works in the literature address specific communication operations, including localization, one-to-all communication, data dissemination
and collection, in duty-cycled wireless sensor networks. 
For example, in \cite{TCS08,TPDS09,TMC10}, the benefit of duty-cycling 
is studied for training duty-cycle sensors
to learn their position with respect to a central sink either when the sensors  cooperate amongst themselves or 
when the sensors adopt different periodic duty-cycle schemes.  The length of the cycle and the
length of sensor awake period are analytically determined in such a way
that the energy consumed during the training process is minimized and
all the sensors are guaranteed to learn their position.  

For networks to function as they should under power conservation mechanisms, 
connectivity needs to be maintained. Such power conservation mechanisms 
generally exploit WSN redundancy to extract a subset of active sensors that form a connected communication graph, 
like a near-optimal dominating set or a sub-optimal broadcast-tree,
which guarantees network functionalities 
~\cite{Zhou:2009,survey:2012}.
%
To the best of our knowledge, in \dcg s, the problem of maintaining connectivity by constructing near-optimal communication graphs has been addressed 
in few papers~\cite{XWXZ10,WL11,LaiR13,Han13}. 
However,
in those works, the  assumptions of their models, like the sensors density in the network, or the type of duty-cycle, substantially differ from the environment we deal with.
In~\cite{XWXZ10}, the one-to-all and the all-to-all paradigms have been addressed in DC-WSNs. However, sensors can transmit messages at any time, not only when they are active (awake), and the duty-cycle is considered only with respect to the receiving capabilities.  
In~\cite{WL11}, the broadcast problem in DC-WSN with unique identifiers is shown to be equivalent to the shortest path problem in a time-coverage graph, and accordingly an optimal centralized solution has been presented. 
In \cite{LaiR13}, the problem of least-latency end-to-end routing over asynchronous and heterogeneous \dcg s\   
is modeled as the time-dependent Bellman-Ford problem. 
In \cite{Han13}, the minimum-energy multicasting problem is studied in duty-cycle wireless sensor networks
again modeling the network as an undirected graph. 
These last three investigated approaches may result in infeasible solutions when scaling to dense 
WSNs i.e. the case that we consider in this paper.
\ifrevision
{\em [revision id \# 11]}}
\fi

Seminal work for connectivity in the area of scaling radio networks
whose sensors are uniformly and at random placed over a unit area are
reported in~\cite{gupta-cdc:1998,gupta-chapter:1999}. The authors
study scaling laws for connectivity when the sensors are always awake
(i.e., no duty-cycle) and use a result of
Penrose~\cite{penrose-aap:1991} to show that the \rgg\ is connected
with probability tending to 1 as $n \rightarrow \infty$ if and only if
\begin{equation}
\label{eq:radius}
\pi r(n)^2 = (\log n + c(n))/n, \mbox { where} \lim_{n \rightarrow
  \infty} c(n) = \infty. 
\end{equation}
Obviously, these results do not directly apply to the duty-cycle scenario.
Nonetheless, we will use them to prove the conjecture in
\cite{das-icuimc:2012}, where a preliminary
study of connectivity in uniformly and randomly distributed DC-WSNs
was initiated by modeling the DC-WSNs as a temporal series of random geometric graphs of
only awake sensors.

Gupta and Kumar also conjectured
~\cite{gupta-cdc:1998,gupta-chapter:1999} that if each edge between
two vertices that are at most $r(n)$ apart is formed independently
with probability $p(n)$ then connectivity can be obtained if $\pi
r(n)^2 p(n) = (\log n + c(n))/n$ and $c(n) \rightarrow \infty$ as $n
\rightarrow \infty$. This conjecture was recently proved in a slightly
more general setting in~\cite{mao-tit:2013}, but always assuming that
edges are formed independently. Although in our \dcg s\ the edge
between two sensors are formed with a certain probability which is the
same for all pairs of sensors within transmission radius, 
the edges are not formed independently. Hence, the results
in~\cite{mao-tit:2013,yi-tcomm:2006} do not apply to our
case. Moreover, the necessary condition of the Gupta and Kumar's
conjecture had been earlier proved by Yi et. al.~\cite{yi-tcomm:2006}
who used a geometric approach which was closer in spirit to the
approach used by Gupta and Kumar themselves to prove a result about the
distribution of isolated nodes. We will say more about the technique
used in~\cite{yi-tcomm:2006} in Section~\ref{sec:strong:main-theorem}
and using some key aspects of it we prove the necessary condition
in Theorem~\ref{thm:vb-gamma}.   In addition, in~\cite{yi-dmaa:2010} a slightly more general model that
includes random independent node removals is studied and a result on
the distribution of isolated nodes similar to that
of~\cite{yi-tcomm:2006} was shown. However this model too falls short
of the generality of our vertex-based random connection model.

Connectivity has also been studied  in~\cite{Shakkottai:2003} for the random grid model
assuming that each sensor fails with independent probability $1-p(n)$. 
Although the failure probability may be referred to the duty-cycle ratio,
this paper does not require that  all nodes  be connected, but only the nodes that are active at a certain time. 
This model is different from ours. Besides, their sufficient condition for
connectivity is weaker than ours, even in
the grid case (which is considered easier to analyze than the uniformly
distributed case).

Finally, our model can be considered a generalization of the so called
key graph of the Eschenauer-Gligor scheme, which can be seen as an
intersection of a random geometric graph with an Erd\"os-Renyi
graph~\cite{eschenauer-ccs:2002} .  Our Theorem~\ref{thm:vb-gamma}
settles a specific conjecture for connectivity stated for such graphs
in ~\cite{yagan-tit-a:2012}, thus improving on the connectivity
conditions previously known for such a model.  Moreover, our result
uses a new continuum percolation model, which does not spring from the
usual techniques applied for the Eschenauer-Gligor scheme.

\section{Modeling Dense duty-cycled wireless sensor networks}
\label{sec:model}
%
In this section, we describe the network setting that we are studying
(Section~\ref{sec:model:network}) and then explain the graph model
by which we try to capture the properties of the setting that are
relevant to the study of connectivity under the family of duty-cycling
schemes we consider (Section~\ref{sec:model:model}.)

\ifrevisiontwo
{\bf 
\fi
\subsection{The network setting}
\label{sec:model:network}
In our view, {\em duty-cycled wireless sensor networks} (DC-WSNs)
consist of a large population of tiny, anonymous, mass produced
commodity sensors, uniformly and randomly deployed on a vast
geographical area, perhaps via an unmanned vehicle. 
The sensors must work unattended for long periods of times.
They can be either provided with a limited and nonrenewable power supply
or with a limited and rechargeable power supply.
Each sensor is equipped with a processing unit, a sensing unit, a
short-range radio transceiver and, if it applies, with a circuit to harvest the energy.
 In order to save or store energy, the sensors
follow a periodic pattern of sleeping and waking, known as a {\em duty
  cycle}. When a sensor sleeps, only its internal clock and its timer
are on.  During the awake periods, the sensors sense in their
proximity and, if required, they process the collected data and send
radio messages.


We assume that just prior to the deployment (perhaps onboard of the
vehicle that drops them in the terrain), the sensors are provided with
the parameters required to set a functioning network.
As will be discussed later in this paper, the sensors need to know the
total number $n$ of sensors deployed, the adopted periodic duty-cycled
scheme along with its period \emph{length} $L$, the \emph{number of
  waking slots} $d$ where $d < L$, and the probability $\gamma$ that
two nodes within the range of transmission can communicate.  In fact,
radio messages sent by a sensor can reach only the sensors in its
immediate proximity that are awake at transmission time.  Namely, only
a fraction $\gamma$ of the overall sensor population in the sensor
proximity can hear the radio message.

Moreover, each sensor is provided with a standard public domain
pseudo-random number generator, which is used for generating the
random information of the selected duty-cycle scheme, and with an
$L$-bit register $R$ where the generated duty cycle scheme is
memorized.  Precisely, each bit in $R$ represents a time slot of the
period, and it is set to $1$ if the sensor is awake and $0$ otherwise.
To make this description more concrete, we will consider how the awake
period is selected in the two duty-cycle schemes presented in
Section~\ref{sec:simulation}.  The first scheme, called the {\em
  contiguous} model has been studied in~\cite{das-icuimc:2012}. For
this scheme, by means of the pseudo-random number generator, each
sensor $u$ independently chooses an integer $i_u$ from the set
$\{0,1,\ldots,L-1\}$ and it sets the entries of register $R$ from
$i_u$ to $i_u + d-1$ to $1$ because it is awake for $d$ consecutive
time slots.  The remaining entries of $R$ are set to $0$ because the
sensors sleeps. In the following, we will denote this model \dcc.  The
second scheme, called the {\em independent random selection} model, is
one in which each node chooses the set of the awake $d$ time slots at
random from $\{0,1,\ldots,L-1\}$ and sets these entries of $R$ to $1$.
We will use the notation \dcr\ to refer to this scheme.
Note that the \dcc\ well models a network of rechargeable sensors,
while the \dcr \ a network of sensors equipped with unrenewable energy. 

In addition, before deployment the sensors receive an {\em initial
  time}.  On the terrain, at the initial time, all the clocks are
synchronous and share the same slot length.  The sensors follow their
periodic scheme in a totally distributed way.
Each sensor computes the time slot number using its internal clock and
autonomously follows the duty-cycle scheme memorized in its register
$R$.  Specifically, the sensor indexes the register $R$ by the time
slot number modulo $L$ and stays awake if $R$ is equal to $1$ or goes
to sleep if $R$ is equal to $0$.  As long as the clocks remain
synchronized, the time slot number is the same for all sensors, each
duty cycle begins and ends at the same time for all the nodes, while
the sleeping and waking patterns of different sensors may be
different.
%
During the network lifetime, due to clock drift, synchronization may
become weaken and it may happen that the sensors no longer share the 
 same time slot length or the same time slot number.
Nonetheless, this can be tolerated as long as the probability $\gamma$ that two sensors communicate remains the same. 
Hence, we conclude that synchronization is {\em not} critical
for our study. Our results hold for synchronized and non-synchronized
settings. The only thing that the entire of family of duty-cycle
schemes that come in our ambit require is that there should be a
well-defined probability $\gamma$ for the event that two nodes within
transmission range of each other overlap in such a way that they can
communicate. If this is defined and is the same for all pairs of nodes
then our results hold.


In our setting no centralized or distributed algorithm is deployed to
create a connected network of sensors. We aim to study the situation
where at deployment time sensors opportunistically make connections
with every other sensor that they are able to communicate with. Hence
our scheme for network creation is a very simple and greedy scheme
involving making all possible connections. At deployment time each
sensor sends requests to connect at every waking slot and handshakes
with those neighbors who are close enough to receive and transmit and
also awake for long enough to communicate meaningfully. 
While this
scheme may appear overly simple it has the major advantage of not
incurring any computational overhead and offering a basic communication mechanism. 
Moreover, it is
worth noting that no localization algorithm is required for
establishing the wireless multihop communications. It is a separate matter that the
sensing application may itself in many cases require localization so
that the raw data can be associated with the location from which it is
collected.

The goal of our paper is to give
bounds on the transmission power (expressed as transmission radius)
that allow densely placed sensors operating with a periodic sleep
schedule to form a connected network. Once this prequisite of any communication
protocol is guaranteed,
communication protocols can be
adopted for optimizing the routing process.
However, this is  a separate matter, not studied in this paper.


In the following, we model the DC-WSNs and formalize the
challenges we encounter.

\ifrevisiontwo
}
\fi

\subsection{The duty-cycled graph model}
\label{sec:model:model}
We first define our notation and the model
following~\cite{das-icuimc:2012}. A {\em random geometric graph}
$\rgg(n,r)$ is a graph with vertex set $V$ of $n$ points distributed
uniformly at random in the unit circle centred at the origin. 
\ifrevision
{\bf 
\fi
These
points model the sensor nodes distributed randomly through the area of
interest. 
\ifrevision
{\em [revision id \#1]}}
\fi
We also
superimpose one point at the origin itself. There are edges between
any two $u,v \in V$ such that $d(u,v) \leq r$ where $d(\cdot,\cdot)$
is a distance metric defined on $\RR^2$. 
\ifrevision
{\bf 
\fi
The quantity $r$ models the
transmission radius of the sensor nodes.
\ifrevision
{\em [revision id \#1]}}
\fi

As described in~\cite{das-icuimc:2012}, the primary parameters of the periodic duty-cycle
are $L$, the \emph{length} of the duty cycle, and
$d$, the \emph{number of waking slots} where $d < L$. We use the
notation $\delta = \lceil d/L \rceil$ to indicate the {\em duty-cycle
ratio}, which is a measure of the energy spent by each sensor in
each cycle. 
In addition,
we provide a more general definition of a duty-cycled
graph than that given in~\cite{das-icuimc:2012}. Each sensor $u$
chooses its waking slots which we denote by the set $A_u$ where $A_u
\subseteq \{0,1,\ldots,L-1\}$ and $|A_u|=d$. Given a scheme $\CA$ for
choosing these waking slots, we define the {\em duty-cycle graph}
$\dcg_{\CA}(n,r,\delta,L)$ as follows: it has the same vertex set as
$\rgg(n,r)$ and its edge set is: $E = \{(u,v) : d(u,v) \leq r, A_u
\cap A_v \ne \emptyset\}$. Namely, for two vertices $u$ and $v$ that
are within transmission range of each other to be connected, they must
share a slot where they are both awake. 
Specifically, in the previously introduced \dcc s, sensor $u$ chooses $A_u= i_u, \ldots, i_u+d-1$,
where $i_u$ is a random number from the set
$\{0,1,\ldots,L-1\}$; whereas, in \dcr s,  the set
$A_u$ is a set of size $d$ selected at random in $\{0,1,\ldots,L-1\}$.

\subsubsection{Connectivity}
\label{sec:model:model:connectivity}

As explained, a fundamental property desired of any duty-cycled sensor network is
connectivity. More precisely, connectivity means that {\em it should
  be possible to send data generated at any time at any node to any
  other node in the network (within reasonable time)}. 

We make a simple observation about connectivity. 
\begin{fact}
\label{fct:deltahalf}
If $\delta > 1/2$ then
$\dcg(n,r,\delta,L)$ is connected whenever $\rgg(n,r)$ is connected.
\end{fact}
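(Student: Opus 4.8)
The plan is to observe that when $\delta > 1/2$ the awake-slot intersection requirement in the definition of the edge set becomes vacuous, so that $\dcg(n,r,\delta,L)$ and $\rgg(n,r)$ have exactly the same edges.

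First I would recall that the duty-cycle ratio is $\delta = d/L$, so the hypothesis $\delta > 1/2$ is equivalent to $2d > L$. Now fix any two vertices $u,v \in V$. By construction $A_u$ and $A_v$ are both subsets of $\{0,1,\ldots,L-1\}$ with $|A_u| = |A_v| = d$. The key step is a pigeonhole (equivalently, inclusion-exclusion) argument: since $A_u \cup A_v \subseteq \{0,1,\ldots,L-1\}$ we have $|A_u \cup A_v| \le L$, and therefore
\[
|A_u \cap A_v| = |A_u| + |A_v| - |A_u \cup A_v| \ge 2d - L > 0 .
\]
Hence $A_u \cap A_v \ne \emptyset$ for every pair of vertices, irrespective of which scheme $\CA$ was used to choose the awake slots.

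Consequently the edge set $E = \{(u,v) : d(u,v) \le r, \ A_u \cap A_v \ne \emptyset\}$ of $\dcg(n,r,\delta,L)$ reduces to $\{(u,v) : d(u,v) \le r\}$, which is precisely the edge set of $\rgg(n,r)$. The two graphs share the same vertex set by definition, so they are identical as graphs. Therefore $\dcg(n,r,\delta,L)$ is connected exactly when $\rgg(n,r)$ is, which in particular yields the claimed implication.

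There is essentially no analytic obstacle here: the entire content is the elementary fact that two subsets of size greater than $L/2$ of an $L$-element ground set must intersect. The only point worth stating carefully is that this holds uniformly over all pairs and all schemes, so that the \emph{identity} of the two edge sets — and not merely the containment $E \subseteq \{(u,v) : d(u,v) \le r\}$ that always holds — is genuinely established.
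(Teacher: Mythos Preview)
Your proof is correct and follows essentially the same approach as the paper: both arguments use the Pigeonhole Principle (you phrase it via inclusion--exclusion) to conclude that any two awake-slot sets of size $d > L/2$ inside $\{0,\ldots,L-1\}$ must intersect, so every edge of $\rgg(n,r)$ survives in $\dcg(n,r,\delta,L)$. Your additional observation that the two graphs are in fact identical (not merely that connectivity transfers) is a slight strengthening but uses nothing beyond the paper's own reasoning.
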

\ifrevision
{\bf 
\fi
To see why this is the case note that whenever the awake period of
each sensor is (strictly) more than half the duty cycle then each edge
of the original graph $\rgg(n,r)$ is available for at least one time
slot because any two waking periods must, by the Pigeonhole Principle,
share a slot.
\ifrevision
{\em [revision id \#3]}}
\fi

However, for $\delta \le 1/2$, connectivity is not guaranteed under
our current definition. Not only is it a random event whose
probability needs to be determined, it may also be an event which
occurs with probability 0. Consider a scheme where $d=3, L=10$ and
each node chooses either $A = \{0,1,2\}$ or $B = \{3,4,5\}$ as its set
of waking slots (with probability 1/2 each, independently of all other
nodes). In this scheme all the nodes with waking cycle $A$ can {\em
  never} communicate with all the nodes which have waking cycle
$B$. Hence we need a condition on the duty-cycling scheme. We call
this the reachability condition.

\paragraph{The reachability condition} Consider a scheme $\CA$ for
selecting the waking slots of nodes. Given the set $\CL = \{A : A
\subset \{0,1,\ldots,L-1\}, |A| = d\}$, let us denote by $\CL(\CA)$
  all those subsets in $\CL$ that have non-zero probability of being
  selected as a waking schedule for a node. Then, the reachability
  condition on $\CA$ is the following:

\MyQuote{
\label{cond:reachability}
{\em Reachability}. There is a finite $k \geq 0$ such that for any
$A_1, A_2 \in \CL(\CA)$, there exists a sequence $B_0,\ldots,B_k$
where $B_0$ is $A_1$ and $B_k$ is $A_2$ such that $B_i \in \CL(\CA), 0
\leq i \leq k$, and $B_i \cap B_{i+1} \ne \emptyset, 0 \leq i < k$.  }
Clearly, as the example above shows, the reachability condition is
necessary for connectivity.

It is easy to see that for the contiguous
duty cycle scheme, given two nodes whose duty cycles begin at $i$ and
$j$, it is possible to find a chain of overlapping awake cycles
beginning at $i + m(d-1)$, $1 \leq m < \lceil (j-i)/(d-1) \rceil$. And
since the probability of picking an awake cycle beginning at $i +
m(d-1)$ for the relevant value of $m$ is greater than 0 (in fact
$1/L$), the reachability condition is easily satisfied. A similar
argument can be made for the random selection scheme, where, in fact
$k = 2$ is sufficient since for any two non-overlapping awake periods
we can always pick a third awake period that overlaps with both with
non-zero probability.

\subsubsection{Time Coverage}
\label{sec:model:model:coverage}

Since the sensor network's primary function is to sense data from the
environment, it is essential that a sleep schedule should keep a
significant fraction of the sensors awake at any time point in such a
way that the area being sensed is covered. This is different from the
notion of spatial coverage which is widely studied in the literature:
there the problem is to ensure that a static set of randomly
distributed sensors is able to sense each point in the region of
interest. For us the notion of time coverage is this: {\em a
  significant fraction of the nodes of the network should be awake in
  each time slot.}

Since we primarily work with probabilistic duty-cycling schemes, we
state the coverage requirement in probabilistic terms.
\begin{quote}
{\em Time coverage}. For each $k \in \{0,1,\ldots,L-1\}$, the probability
that a node $u$ is awake in slot $k$ is $\delta_k > 0$, where
$\delta_k$ may be a function of $d$ and $L$ but is not dependent on
the number of nodes in the network.
\end{quote}
Since each node in the network remains awake for $d$ out of $L$ slots
we can also think of a stronger condition on the duty-cycling scheme
which ensures symmetry across all the slots in $\{0,1,\ldots,L\}$.
\begin{quote}
{\em Uniform time coverage}. For each $k \in \{0,1,\ldots,L-1\}$, the probability
that a node $u$ is awake in slot $k$ is $\delta = \lceil d/L \rceil$.
\end{quote}
We will see that the distinction between coverage and uniform coverage
plays a role in determining the connectivity radius.

\section{A weak connectivity result for duty-cycled WSNs}
\label{sec:weak}

In this section we will show that Gupta and Kumar's result on
connectivity in high density random geometric graphs gives us a
condition on the transmission radius of a node that is sufficient to
achieve connectivity. The main result of this section,
Theorem~\ref{thm:conn-weak} is a generalization of the result first
presented in Das et. al.~\cite{das-icuimc:2012}. 
\ifrevision
{\bf 
\fi
We note that
Theorem~\ref{thm:conn-weak} generalizes the earlier theorem that was
proved in the previous paper for only \dcc\ to a whole family of
duty-cycling schemes restricted only by the reachability and coverage
conditions. The former is necessary for connectivity, as
discussed above, and the latter is necessary for the sensing
application to not drop any data.
\ifrevision
{\em [revision id \#5]}}
\fi
\begin{theorem}
  \label{thm:conn-weak} Given a duty-cycling scheme $\CA$ with $0 < \delta \leq 1/2$ and  $d = \lceil \delta  L\rceil > 1$, and the marginal probability of a node being awake in slot $i$ denoted by $\delta_i$, the probability that $\dcg_{\CA}(n,r(n),\delta,L)$ is connected tends to 1 as $n \rightarrow \infty$ if $\CA$  satisfies the reachability condition and the
  coverage condition and if
\begin{equation}
\label{eq:das-radius}
\pi r^2(n) \delta_{\min} = (\log n + c(n))/n,
\end{equation}
such that $c(n) \rightarrow \infty$ as $n \rightarrow \infty$, where
$\delta_{\min} = \min_{k=0}^{L-1} \delta_k$.
\end{theorem}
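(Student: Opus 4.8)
The plan is to reduce the connectivity of $\dcg_{\CA}(n,r(n),\delta,L)$ to $L = O(1)$ separate applications of the Gupta and Kumar threshold~(\ref{eq:radius}), one per time slot, and then to glue the resulting connected pieces together using the reachability condition. For a fixed slot $k \in \{0,1,\ldots,L-1\}$ let $C_k$ denote the set of nodes awake in slot $k$ and let $H_k$ be the subgraph of $\dcg_{\CA}$ induced by the geometric edges among the nodes of $C_k$; every edge of $H_k$ is a genuine edge of $\dcg_{\CA}$, since two nodes both awake in slot $k$ automatically share a waking slot. The first observation is that, because each node fixes its waking schedule independently of its spatial position, conditioning on $|C_k| = m$ leaves the $m$ surviving points distributed uniformly in the unit disk. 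Hence, conditionally, $H_k$ is a copy of $\rgg(m,r(n))$ and the Gupta--Kumar sufficiency direction of~(\ref{eq:radius}) applies to it.

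First I would show that each $H_k$ is connected with probability $1 - o(1)$. By the coverage condition $\delta_k > 0$ is a constant independent of $n$, so $|C_k|$ is a binomial variable concentrating sharply around $m_k := n\delta_k \to \infty$. It then suffices to verify that the hypothesis $\pi r^2(n)\,\delta_{\min} = (\log n + c(n))/n$ forces $\pi r^2(n)\,m \ge \log m + \omega(1)$ throughout the concentration window of $|C_k|$. Indeed, $\pi r^2(n)\, n \delta_k = (\delta_k/\delta_{\min})(\log n + c(n)) \ge \log n + c(n)$, and since $\delta_k \le 1$ gives $\log(n\delta_k) \le \log n$, we obtain $\pi r^2(n)\, m_k \ge \log m_k + c(n)$ with $c(n) \to \infty$. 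The choice of $\delta_{\min}$ is exactly what makes this inequality hold simultaneously for every slot, the slots attaining the minimum being the binding ones. Applying~(\ref{eq:radius}) conditionally and averaging over the concentrated value of $|C_k|$ yields $\Pr[H_k \text{ connected}] \to 1$, and since $L$ is constant a union bound gives that $H_0,\ldots,H_{L-1}$ are all connected with probability $1-o(1)$.

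Next I would carry out the gluing step, which is deterministic once the good event holds. Because each valid schedule $B \in \CL(\CA)$ is selected with positive constant probability, a first--moment (binomial) argument shows that with probability $1-o(1)$ every schedule in $\CL(\CA)$ is used by at least one node; I condition on this together with the connectivity of all $H_k$. Any node $u$ lies in $C_k$ for every $k \in A_u$, and since $d > 1$ and each such $H_k$ is connected, all the blobs $\{C_k : k \in A_u\}$ collapse into one connected component of $\dcg_{\CA}$; in particular all nodes sharing a schedule lie in a single component. If two valid schedules $B, B'$ overlap in a slot $s$, the node sets using $B$ and $B'$ both meet the connected graph $H_s$ and hence merge. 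The reachability condition supplies, for any two schedules, a finite chain of pairwise--overlapping schedules, so iterating the merge along the chain places all nodes in one component, proving $\dcg_{\CA}(n,r(n),\delta,L)$ connected on the good event.

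The step I expect to be the main obstacle is the rigorous passage from the Gupta--Kumar statement, phrased for exactly $n$ uniform points, to the thinned graph $H_k$ whose vertex count is random. The care lies in the conditioning on $|C_k|$: one must check that the radius inequality holds uniformly over the entire high--probability range of $|C_k|$ rather than merely at its mean $n\delta_k$, and that the uniformity of the surviving points is genuinely preserved by the position--independent thinning, so that~(\ref{eq:radius}) may legitimately be invoked conditionally. The remaining ingredients---binomial concentration, the constant-size union bound over slots, and the combinatorial gluing---are routine by comparison.
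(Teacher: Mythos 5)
Your overall skeleton --- per-slot decomposition, Gupta--Kumar applied to each slot graph $H_k$, then gluing via reachability --- is the same as the paper's, but your gluing step is genuinely different and cleaner. The paper glues probabilistically: it extracts a chain of slots from the reachability condition (Claim~\ref{clm:indices}) and then bounds the probability of the event that no node with schedule $B_i$ has a \emph{geometric neighbor} with schedule $B_{i+1}$, which requires an extra Chernoff estimate on the number of points within distance $r(n)$ of a given node. You instead observe that every edge of $\dcg_{\CA}$ is an edge of some $H_s$, so the components of $\dcg_{\CA}$ are exactly the transitive merges of the connected graphs $H_s$ along shared vertices; then ``every schedule in $\CL(\CA)$ is used by some node'' (a union bound over the finitely many schedules, each missed with probability $(1-\beta_B)^n$) plus connectivity of all $H_s$ lets you walk along any reachability chain, because two used schedules overlapping in slot $s$ both contribute vertices to the single connected graph $H_s$. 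This is correct, avoids the paper's neighborhood-counting argument entirely, and is the more transparent of the two gluing proofs.

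The step you flag as the main obstacle is a genuine issue, and as written you have not closed it: you verify $\pi r^2(n)\,m \geq \log m + c(n)$ only at the mean $m = n\delta_k$, but over a \emph{fixed}-width window $m \geq (1-\epsilon)n\delta_k$ the inequality can fail. Concretely, for $\delta_k = \delta_{\min}$ and $c(n) = \log\log n$ one gets $\pi r^2(n)\,m - \log m = -\epsilon\log n + O(\log\log n) \rightarrow -\infty$. The repair is to shrink the window with $n$: condition on $|C_k| \geq (1-\epsilon_n)\delta_k n$ with $\epsilon_n = \min\{1/2,\; c(n)/(4\log n)\}$. Chernoff still gives $\Pr\bigl(|C_k| < (1-\epsilon_n)\delta_k n\bigr) \leq \exp(-\epsilon_n^2 \delta_k n/2) \rightarrow 0$ since $\epsilon_n^2 n \rightarrow \infty$, while on the window $\pi r^2(n)\, m \geq (1-\epsilon_n)(\log n + c(n)) \geq \log m + c(n)/4$, so Gupta--Kumar applies uniformly over the conditioning. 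It is worth knowing that the paper itself stumbles at exactly this point: it conditions on $|V_i| \geq \delta_i n$ and asserts that $\Pr(|V_i| < \delta_i n)$ tends to $0$, which is false (that probability tends to $1/2$); its Chernoff bound only controls $\Pr(|V_i| < (1-\epsilon)\delta_i n)$ for fixed $\epsilon$, which does not feed back into the Gupta--Kumar hypothesis when $c(n) = o(\log n)$. So your proposal, once patched with the shrinking-$\epsilon_n$ device you yourself anticipated, is not only correct but repairs a soft spot that the published proof shares.
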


Before we prove this theorem we note that the form of this result is
non-trivial, especially the role of the quantity $c(n)$. We provide a
discussion of the role of $c(n)$ in
Section~\ref{sec:strong:main-theorem} right after the statement of
Theorem~\ref{thm:vb-gamma} which is, in our view, the appropriate
place for this discussion.

\begin{proof}
We prove the theorem by considering a set of $L$ subgraphs of
$\dcg_{\CA}(n,r,\delta,L)$, one for each time slot in a typical duty
cycle. Let us denote these by $\CG_i, 0 \leq i < L$. To be clear, the
vertices of $\CG_i$ are $V_i = \{u : u \in V, i \in A_u\}$ i.e. the vertices
that are awake in time slot $i$, and the edges of $\CG_i$ are $E_i =
\{(u,v) : u,v\in A_i, d(u,v) \leq r\}$. 

The scheme of the proof is as follows. We will first show that if the
condition given in the theorem holds then each $\CG_i$ is connected
with probability $1 - o(1)$. However
this is not enough because it may be that there is some time slot $j$
such that $V_j$ and $V_{j+1}$ are completely disjoint leading to a
partition in the graph. To complete the proof we will show that this
happens with probability $o(1)$.

Consider the vertex set $V_i$ of $\CG_i$. By the coverage condition,
if $|V| = n$ then $\ex(|V_i|) = \delta_i n$. Using Gupta and Kumar's
result on connectivity~\cite{gupta-cdc:1998}, it is clear that for the
case that $|V_i| \geq \delta_i n$, then subgraph $\CG_i$ is connected
with probability tending to 1 as $n \rightarrow \infty$ if
condition~(\ref{eq:das-radius}) is satisfied, since $\delta_i \geq
\delta_{\min}$ by definition. This can be seen by mechanically
substituting $\delta_i n$ in place of $n$ in Gupta and Kumar's theorem
and observing that since $\delta_i$ is a constant w.r.t. $n$, so
$(\log \delta_i)/n \rightarrow 0$ as $n \rightarrow \infty$.

Now we note that the probability $|V_i| < \delta_i n$ tends to 0 as $n
\rightarrow \infty$. This is a straightforward application of the law
of large numbers, but we formalize it anyway using Chernoff bounds:
For each $u \in V$ we define an $X_u^i$ which takes value 1 if $u \in
V_i$ and is 0 otherwise. Hence:
\[ |V_i| = \sum_{u\in V} X_u^i,\]
and, by Chernoff bounds, for any $\epsilon > 0$,
\[ \pr(|V_i| < (1 - \epsilon) \delta_i n ) \leq  e^{-(\epsilon^2\delta_i
  n)/2},\] which tends to 0 as $n \rightarrow \infty$. 
\ifrevision
{\bf 
\fi
We note that
Chernoff bounds are applicable in this case since each node chooses
its awake cycle independent of all other nodes, and hence for any
given $i$, the probability that $u$ is awake at time slot $i$ is
independent of the corresponding event for all other nodes and so the
collection of random variables $\{X_u^i : u \in V\}$ is an independent
collection.
\ifrevision
{\em [revision id \#4]}}
\fi

Suppose we denote by $H_i$ the event that $\CG_i$ is connected. Since
\[\pr(H_i^c) \leq \pr(H_i^c \mid |V_i| \geq \delta_i n)\cdot \pr(|V_i|
\geq \delta_i n) + \pr(|V_i| < \delta_i n),\] 
we get that $\pr(H_i^c) = o(1)$ since Gupta and Kumar's theorem tells
us that the first term goes to 0 and the Chernoff bound argument tells
us that the second term vanishes as $n \rightarrow \infty$.

Now let us define the event that there is time partitioning among the
$\CG_i$s. Let $F_i^c$ be the event that $V_i \cap V_{i+1 \mod L} =
\emptyset$. Recall that we denote by $\CL(\CA)$ the set of all subsets
of $\{0,1,\ldots,L-1\}$ that have non-zero probability of being chosen
as a waking cycle under the duty-cycle scheme $\CA$. Consider the sets
$\CC_i, \CC_{i+1} \subset \CL(\CA)$ such that $i$ is in each of the
sets in $\CC_i$ and $i+1$ is in each set of $\CC_{i+1}$. Note that the
nodes that choose their waking schedule from $\CC_i$ are precisely the
nodes of $V_i$. By the coverage condition these $\CC_i$ and
$\CC_{i+1}$ are non-empty. The reachability condition guarantees that
for every $A \in \CC_i$ and every $B \in \CC_{i+1}$ there is a $k$ and
$B_1, \ldots, B_{k-1}$ such that all the $B_i$ belong to $\CL(\CA)$
and $A \cap B_1 \ne \emptyset$, $B_i \cap B_{i+1} \ne \emptyset, 1
\leq i \leq k-1$ and $B_{k-1} \cap B \ne \emptyset$. From this
condition we can deduce the following:
\begin{claim}
\label{clm:indices}
There is a sequence of indices $i = j_0, j_1, \ldots j_l = i+1$ such
that for every $0 \leq k < l$
there is an $A \in \CC_{j_k}$ and a $B \in \CC_{j_{k+1}}$ such that
$A \cap B \ne \emptyset$. 
\end{claim}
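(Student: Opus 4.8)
The plan is to convert the chain of overlapping \emph{waking sets} guaranteed by the reachability condition into the required chain of \emph{time-slot indices}. First I would invoke the coverage condition to note that $\CC_i$ and $\CC_{i+1}$ (where $i+1$ is read mod $L$) are non-empty, and fix any $A \in \CC_i$ and $B \in \CC_{i+1}$; by the definition of these sets, $i \in A$ and $i+1 \in B$. Applying the reachability condition to the pair $A, B$ then yields a sequence $A = B_0, B_1, \ldots, B_k = B$ with every $B_m \in \CL(\CA)$ and $B_m \cap B_{m+1} \ne \emptyset$ for $0 \le m < k$. This set chain is the raw material from which the index chain will be built.

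Next I would extract the indices by choosing, for each $0 \le m < k$, a witness element $t_m \in B_m \cap B_{m+1}$, and setting $j_0 = i$, $j_{m+1} = t_m$ for $0 \le m < k$, and $j_{k+1} = i+1$ (so $l = k+1$). The key observation is that each consecutive pair of indices in this sequence lies inside a single set of the chain: $i$ and $t_0$ both lie in $B_0$; the bridging elements $t_m$ and $t_{m+1}$ both lie in $B_{m+1}$, since $t_m \in B_m \cap B_{m+1}$ and $t_{m+1} \in B_{m+1} \cap B_{m+2}$; and $t_{k-1}$ and $i+1$ both lie in $B_k$. Whenever two indices $p$ and $q$ lie in a common set $C \in \CL(\CA)$, that single set satisfies both $C \in \CC_p$ and $C \in \CC_q$, so taking $A = B = C$ gives $A \cap B = C \ne \emptyset$, which is precisely the witness the claim demands for the pair $(p,q)$.

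The argument is essentially bookkeeping, so the only thing to watch is the translation between the two chains — verifying that the bridging elements $t_m$ really do force each pair of consecutive indices to share a set, and handling the degenerate case $k = 0$, in which $A = B$ already contains both $i$ and $i+1$ and the index sequence collapses to $i, i+1$ with witness $B_0$. I do not expect a genuine obstacle here: once the witnesses $t_m$ are fixed, every link of the index chain is certified by one set of the reachability chain, and the sequence manifestly runs from $j_0 = i$ to $j_{k+1} = i+1$ as required.
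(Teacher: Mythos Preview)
Your proposal is correct and follows essentially the same constructive approach as the paper: invoke reachability on a pair $A\in\CC_i$, $B\in\CC_{i+1}$ to obtain the chain $B_0,\ldots,B_k$, then harvest indices from the successive intersections. The only cosmetic difference is that you take the single set $B_{m+1}$ (containing both $t_m$ and $t_{m+1}$) as the common witness $A=B$ for each link, whereas the paper's terser sketch implicitly uses the two overlapping sets $B_m$ and $B_{m+1}$; both choices certify the required nonempty intersection, and your bookkeeping of the endpoints $j_0=i$, $j_{k+1}=i+1$ and the degenerate case $k=0$ is in fact more explicit than the paper's.
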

We can build this sequence of indices constructively. Take any set
from $\CC_i$ and find a $B_1$ with respect to any set in $\CC_{i+1}$
as given by the reachability condition. Choose any index from $B_1
\cap B_2$ and call it $j_1$. Similarly pick an index from $B_2 \cap
B_3$ and call it $j_2$ and continue all the way till we reach $B_k \in
\CC_{i+1}$. 

Since we have seen earlier that under the
condition~(\ref{eq:das-radius}) each $\CG_i$ is connected with
probability tending to 1 as $n \rightarrow \infty$, and that the nodes
that choose their waking schedules from $\CC_i$ are exactly the nodes
$V_i$, by the definition of $\CC_i$, hence the implication of the
claim is that if the sequence of subgraphs $\CG_{j_0},
\CG_{j_1},\ldots, \CG_{j_l}$ are connected to each other then there is
a path from $\CG_i$ to $\CG_{i+1}$ in
$\dcg_{\CA}(n,r,\delta,L)$. Hence the probability that $V_i$ and
$V_{i+1}$ are disconnected is upper bounded by the probability that
there is an $i$ such that $\CG_{j_i}$ is disconnected from
$\CG_{j_{i+1}}$. Since the sequence $\{j_i\}_{i=0}^l$ is constructed
using overlapping schedules, the disconnection of $\CG_{j_i}$ and
$\CG_{j_{i+1}}$ can happen if either (a) one of $\CG_{j_i}$ or
$\CG_{j_{i+1}}$ are disconnected, or (b) none of the nodes of
$V_{j_i}$ that choose $B_i$ as their waking schedule have a node of
$V_{j_{i+1}}$ as neighbor that chooses $B_{i+1}$ as its schedule
(denote this event $M_{j_i}$).  We have already shown that the
probability of (a) is $o(1)$. So let us consider the event $M_{j_i}$.

Let us denote the probability of $B_i$  being chosen as a waking
schedule by $\beta_i$ for all relevant $i$.
The event $M_{j_i}$ occurs if none of the nodes of $V_{j_i}$ that
choose $B_i$ have a
neighbour in $V_{j_{i+1}}$ that chooses $B_{i+1}$. Now if a node of $V_{j_i}$ has $k$
neighbours then the probability that none of them chooses $B_{i+1}$
is $(1 - \beta_{i+1})^k$. Since $\beta_i > 0$, there is with
probability tending to 1 as $n \rightarrow \infty$ at least one node
$u \in V_{j_i}$ that chooses $B_i$ as its waking schedule. We denote
by $\Gamma_u$ the set of points of $V$ that lie within distance
$r(n)$ of $u$. Denote by $V^{B_{i+1}}$ those nodes of $V$ that
choose $B_{i+1}$ as their waking schedule.

Now, conditioning on the size of $\Gamma_u$ and using a Chernoff bound
argument to upper bound the probability of $|\Gamma_u|$ being smaller
than its expected value as done above we get 
\begin{eqnarray*}
\pr(M_{j_i}) & \leq&  \pr\left(\Gamma_u \cap V^{B_{i+1}} = \emptyset  | |\Gamma_u| \geq \frac{\log(n) + c(n)}{\delta_{\min}}\right)
\\ 
& & + \ o(1)\\
& \leq & (1 - \beta_{i+1})^{\frac{\log(n) + c(n)}{\delta_{\min}}} +
o(1)\\
& \leq & \exp - \left\{ \frac{ \beta_{i+1} \cdot (\log(n) +
  c(n))}{\delta_{\min}}\right\} + o(1),\\
\end{eqnarray*}

which is $o(1)$ since $\beta_{i+1}$ and $\delta_{\min}$ do not depend
on $n$. Also since the index $l$ given in
Claim~\ref{clm:indices} is constant with respect to $n$ (it depends
only on $d$ and $L$), we have shown that the probability that
$M_{j_i}$ occurs for any $i$ such that $0 \leq i \leq l$ is $o(1)$.
\end{proof}
In the uniform time coverage situation, i.e. for duty-cycling schemes like
the contiguous and the  random  schemes,
Theorem~\ref{thm:conn-weak} yields the following corollary:
\begin{corollary}
  \label{cor:conn-weak} For
  any $0 < \delta \leq 1/2$ and $L > 0$ such that $d = \lceil \delta
  L\rceil > 1$, the probability that $\dcg_{\CA}(n,r(n),\delta,L)$ is
  connected tends to 1 as $n \rightarrow \infty$ for a duty-cycling
  scheme $\CA$ that satisfies the reachability condition and the
  uniform time coverage condition if
\begin{equation}
\label{eq:unif-das-radius}
\pi r^2(n) \delta = (\log n + c(n))/n,
\end{equation}
such that $c(n) \rightarrow \infty$ as $n \rightarrow \infty$.
\end{corollary}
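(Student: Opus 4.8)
The plan is to obtain this statement as an essentially immediate specialization of Theorem~\ref{thm:conn-weak}, so that the entire argument reduces to checking that the hypotheses of that theorem are satisfied under the uniform time coverage assumption and that its radius condition~(\ref{eq:das-radius}) collapses to~(\ref{eq:unif-das-radius}). I would therefore structure the proof as a short verification of hypotheses followed by a direct invocation of the theorem.

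First I would observe that uniform time coverage is a strengthening of the coverage condition required by Theorem~\ref{thm:conn-weak}: it asserts that for every slot $k \in \{0,1,\ldots,L-1\}$ the marginal probability $\delta_k$ that a node is awake in slot $k$ equals the common value $\delta$, and since $d = \lceil \delta L\rceil > 1$ forces $\delta > 0$, each $\delta_k = \delta > 0$ is a positive constant independent of $n$. This is exactly the coverage requirement, so that hypothesis of the theorem holds; the reachability condition is assumed directly in the statement of the corollary. Next I would record the crucial simplification that follows from all the slot marginals coinciding: $\delta_{\min} = \min_{k=0}^{L-1}\delta_k = \delta$. Substituting $\delta_{\min} = \delta$ into the general radius condition~(\ref{eq:das-radius}) yields precisely $\pi r^2(n)\delta = (\log n + c(n))/n$, which is condition~(\ref{eq:unif-das-radius}). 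Hence the radius hypothesis of the corollary is identical to that of the theorem in this uniform case.

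Finally, invoking Theorem~\ref{thm:conn-weak} with these verified hypotheses gives that $\dcg_{\CA}(n,r(n),\delta,L)$ is connected with probability tending to $1$ as $n \to \infty$, as claimed. There is no genuine obstacle here: the only point worth stating carefully is that uniform time coverage forces every slot marginal to equal the duty-cycle ratio, so the minimum $\delta_{\min}$ appearing in the general bound is replaced by the single value $\delta$. The contiguous and random selection schemes discussed in Section~\ref{sec:simulation} are the motivating instances for which this symmetry across slots holds, and it is precisely for such schemes that the cleaner radius condition~(\ref{eq:unif-das-radius}) is the relevant one.
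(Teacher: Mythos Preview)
Your proposal is correct and matches the paper's own treatment: the corollary is stated immediately after Theorem~\ref{thm:conn-weak} as a direct specialization, with the only observation being that under uniform time coverage $\delta_{\min}=\delta$, so condition~(\ref{eq:das-radius}) becomes~(\ref{eq:unif-das-radius}). The paper gives no separate proof beyond this remark, and your verification of hypotheses is exactly what is needed.
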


\section{A strong connectivity result for duty-cycled WSNs}
\label{sec:strong}

In this section we develop and present our optimal connectivity result
for duty-cycled WSNs i.e. the most important contribution of our
paper. Proving this result involves defining a new ``vertex-based''
random connection model and claiming certain properties for it. We
begin by motivating the need for this new definition.

\subsection{Stochastic domination and the duty-cycling graph}
\label{sec:strong:stochastic}

\ifrevision
{\bf 
\fi
It is natural to believe that the connectivity properties of the
Gupta-Kumar graph are sufficient to prove the stronger theorem that we
want.  Let us consider the following simple generalization of the Gupta-Kumar
graph that was mentioned in~\cite{gupta-chapter:1999}: Given the
random geometric graph $\rgg(n,r)$ and a parameter $\gamma$ such that
$0 < \gamma < 1$, retain each edge of $\rgg(n,r)$ with probability
$\gamma$ independent of all other edges. Let us denote this model
$\rgg(n,r,\gamma)$. 

In $\rgg(n,r\gamma)$, just like in the duty-cycling graph, it is not
necessary that two nodes that are within transmission range of each
other are able to communicate. They can communicate with a probability
$\gamma$. Hence it is tempting to believe that by choosing the correct
values of $\gamma$ we can use the properties of the generalized
Gupta-Kumar graph to determine under what necessary and sufficient
conditions the duty-cycling graph is connected. However, that would
require us to be able to compare the probability of certain events
(like the event that a subset of nodes is isolated) across the two
models. In general this is done using the theory of stochastic
domination that allows us to compare probabilities of classes of
events across two probability spaces.  But the interesting thing here,
which pushed us to define the vertex-based random connection model
separately, is that there is no stochastic domination between these
models, and hence we cannot use what we know about $\rgg(n,r,\gamma)$
to tell us what we need to know about $\dcg(n,r,\delta,L)$. We
document the details of this stochastic non-domination now. The reader
who does not want to be weighed down by the formal proof can skip the
rest of Section~\ref{sec:strong:stochastic}.  
\ifrevision {\em
  [revision id \#7]}} \fi

First, we recall the definition of stochastic domination. Suppose we
have a lattice $\Omega$ whose partial order is $\preceq$. A function
$f : \Omega \rightarrow \RR$ is called increasing if $f(\omega_1) \leq
f(\omega_2)$ whenever $\omega_1 \preceq \omega_2$, for all $\omega_1,
\omega_2 \in \Omega$. Now, suppose we have a measurable space
$(\Omega,\CF)$. For two probability measures $\mu_1$ and $\mu_2$
defined on this space, we say that $\mu_1$ stochastically dominates
$\mu_2$, denoted $\mu_2 \preceq \mu_1$, if $\ex_{\mu_2}(f) \leq
\ex_{\mu_1}(f)$ for all increasing functions $f$ where $\ex_\mu(f)$
denotes the expectation of the function $f$ under measure $\mu$. 

Suppose we denote the probability measure defined on
$\rgg(n,r,\gamma)$ as $\mu_\gamma$ and the probability measure on
$\dcg(n,r,\delta,L)$ as $\mu_\delta$. Consider the event
$A(u_1,\ldots,u_k)$ to be the event that all the points $u_1, \ldots,
u_k$ are isolated (i.e. have no edges incident on them). This is a
decreasing event in the sense that $- I_{A(u_1,\ldots,u_k)}$ (i.e. the
negative of the indicator function of the event) is an increasing
event. 
\ifrevision
{\bf 
\fi
To see why this is the case we need to understand the lattice
structure of the space on which these graphs are defined. Note that
every configuration contains a set of points and some edges between
these points i.e. each configuration $\omega$ can be described by a
tuple $\omega = (V,E)$. We define a relation $\preceq$ as follows:
$\omega = (V,E) \preceq \omega' = (V', E')$ if $V \subseteq V'$ and $E
\subseteq E'$. Now it is easy to see that if $u_1, \ldots, u_k$ are
isolated in configuration $\omega'$ then they must be isolated in
configuration $\omega$ whenever $\omega \preceq \omega'$. If some of
$u_1, \ldots, u_k$ do not exist in $\omega$ then they can trivially be
assumed to be isolated since they have no neighbors.
\ifrevision
{\em [revision id \#6]}}
\fi

Now consider the case where
we have $k > 1/\delta + 1$ points, and they are all within distance
$r$ of each other. In this case $\mu_\gamma(A(u_1,\ldots,u_k))$ is some
non-zero value, whereas $\mu_\delta(A(u_1,\ldots,u_k))$ is 0 since it is
not possible to have more than $1/\delta$ non-overlapping waking
periods in the duty-cycled network. Therefore $\ex_{\mu_\gamma}(-
  I_{A(u_1,\ldots,u_k)}) < 0 = \ex_{\mu_\delta}(- I_{A(u_1,\ldots,u_k)})$
    for this value of $k$, which implies that $\mu_\delta \not \preceq
    \mu_\gamma$.

The argument presented above is general for any duty-cycling scheme
with parameter $\delta$. To the prove that there is no stochastic
domination in the other direction we consider the contiguous
duty-cycling scheme \dcc\ defined in Section~\ref{sec:model} where
each node $u$ chooses a value $i_u$ uniformly at random from
$\{0,\ldots, L-1\}$ and is awake at time slots $i_u, i_u + 1 \bmod L,
\ldots i_u + d - 1\bmod L$.  With this scheme operating, consider the
case that there are three point $u,v$ and $w$ that all lie within
distance $r$ of each other in the point process. If we denote the
event that any set of edges $e_1, \ldots, e_k$ is in $E$ by
$B(e_1,\ldots,e_k)$, we have that $\mu_\gamma(B((u,v),(v,w),(u,w)))) =
\gamma^3$. Now, note that that if $u$ is connected to $v$, then for
$w$ to be connected to both $u$ and $v$, $w$'s waking cycle must
overlap with the slots of the duty cycle which are common to both $u$
and $v$. We fix the position of $u$ and condition on the event that
$u$ and $v$ have exactly $i$ slots in common (which happens with
probability $2/L$ for all $1\leq i\leq d-1$ and with probability $1/L$
for $i=d$), we get
\[\mu_\delta(B((u,v),(u,w),(v,w))) = 
\sum_{i=1}^{d} \frac{d+i-1}{L} \cdot \frac{2}{L} - \frac{2d-1}{L^2} =
\frac{3d^2 - 3d +1}{L^2}.\]
It is easy to verify that there are settings $d$ and $L$ for which
this value is actually strictly less than \[\mu_\gamma(B((u,v),(u,w),(v,w))) = \gamma^3
= ((2d-1)/L)^3.\] Hence, for those settings $\ex_{\mu_\delta}(
I_{B((u,v),(v,w),(u,w))}) <
\ex_{\mu_\gamma}(I_{B((u,v),(v,w),(u,w))})$, and since
$I_{B((u,v),(v,w),(u,w))}$ is an increasing function:  $\mu_\gamma \not \preceq
    \mu_\delta$.

Hence we have found that the two models are not related through
stochastic domination, and this motivates us to define a new model
that can describe the duty-cycled setting better and in which
connectivity results have to be proved anew. We define a general model
of this nature, we call it the {\em vertex-based random connection
  model}, in Section~\ref{sec:strong:vertex-based}.

\subsection{A vertex-based random connection model}
\label{sec:strong:vertex-based}

We now formally define our vertex-based random geometric graph
model. This model has four parameters. There are two finite positive
real numbers $\lambda, r$.  The third parameter is a random variable
$Z$ defined on some probability space $(\Omega,\CF,\pr)$, that is a
function of the form $Z : \Omega \rightarrow Q$ where $Q$
is some domain. The fourth parameter is a function $f: Q \times
Q \rightarrow \{0,1\}$. The vertex set $V$ is a Poisson point
process in $\RR^2$ with density $\lambda$ with an additional point at
the origin. Now we define the edge set $E$. With each $u \in V$ we
associate a random variable $Z_u$ which is a copy of $Z$. All the
random variables in the collection $\{Z_u : u \in V\}$ are independent
of each other. Moreover.
\begin{equation}
\label{eq:def}
 g(u,v) =
  \begin{cases}
   \pr'(f(Z_u,Z_v) = 1)  & \text{if } d(u,v) \leq r, \\
   0      & \text{otherwise}, 
  \end{cases}
\end{equation}
where $\pr'$ is the product measure defined on the product space of
the two random variables $Z_u$ and $Z_v$.  In other words, the edge
$(u,v)$ exists if $f(Z_u,Z_v)$ is 1, but only if $d(u,v) \leq
r$. Clearly, for this model to be useful, there should be non-zero
probability of an edge being formed between two points that are within
distance $r$ of each other. Also note that if $\pr'(f(Z_u,Z_v)=1) =1$
then the model reduces to the \rgg.

\subsubsection{Some restrictions on the vertex-based random connection
  model}
\label{sec:strong:vertex-based:restrictions}

Since this model is defined in a fairly general setting, we now define
some restrictions which make it more useful for us.

\paragraph{\sf Non-triviality} 
In this model the edge $(u,v)$ exists if $f(Z_u,Z_v)$ is 1, but only
if $d(u,v) \leq r$. A basic condition we need on the function $f$ and
the probability space on which $Z$ is defined is that the probability
of making a connection between two points should be non-zero. We call
this the ``non-triviality condition''.

\MyQuote{
\label{cond:non-triviality}
{\em Given two independent copies $Z_1$ and $Z_2$
of $Z$,  $0 < \pr'(f(Z_1,Z_2) = 1) < 1$.}
}
\paragraph{Finite reachability}
\label{sec:strong:vertex-based:restrictions:reachability}
The model as defined so far admits a serious anomaly. Consider the
case where $Z$ takes values from $\{0,1\}$ with equal probability and
$f$ is the equality function i.e. $f(x,y) = 1$ if $x=y$ and 0
otherwise. Clearly the non-triviality condition is satisfied. However
with this definition of $f$, the random graph that will be formed will
have two distinct classes of points: $\{u \in V: Z_u = 0\}$ and $\{u
\in V: Z_u = 1\}$. In this case, it will be like we have two random
graph models superposed on the same space (with appropriately thinned
Poisson processes) with no possibility of any edge between these
points. If both these processes may be supercritical
independently, there are two infinite components. To mitigate this
problem and to ensure that the uniqueness of the infinite component
that is seen in the random connection model is seen here as well, we
introduce a condition on $f$ and $Z$ that we call {\em finite
  reachability}. 

Let us first consider the case where
$\Lambda$ is a finite or countable set. We denote the support of
$\pr(\cdot)$ by $\supp(\Lambda,\pr)$ i.e. the set $\{ x \in \Lambda:
\pr(x) > 0\}$. Now, given $x,y \in \supp(\Lambda,\pr)$ we say that $x$
and $y$ are {\em 0-reachable from each other} if $f(x,y) = 1$, and are
$k$-reachable from each other if there exists $w \in \supp(\Lambda,P)$
such that $x,w$ are 0-reachable from each other and $w,y$ are $k-1$
reachable from each other. The finite reachability condition on $f$
and $Z$ is that all $x,y \in \supp(\Lambda,\pr)$ are $k$-reachable
from each other for some finite $k$ i.e. 

\MyQuote{
\label{cond:finite-reachability}
\em  $x,y \in \supp(\Lambda,\pr)$ are said to be $k$-reachable from each
  other, there is a sequence $w_0, w_1, \ldots, w_k $ such that $w_0$
  is $x$ and $w_k$ is $y$ and
  $w_i \in \supp(\Lambda,\pr)$, $0 < i < k$ and $f(w_i, w_{i+1}) = 1$,
  $0 \leq i < k$.
}

\paragraph{Connection Diversity} Non-triviality and the assumption
that $\{Z_u: u \in V\}$ is an independent collection implies a
property we call the ``connection diversity condition.'' We are
stating it separately for convenience.  Consider $k+1$ copies of $Z$,
$Z_0,Z_1,\ldots,Z_k$, all independent of each other. There is a
constant $c \in (0,1]$, depending only on $Z$ and $f$, such that
\begin{equation}
\label{eq:conn-div}
\pr\left(f(Z_0,Z_1) = 0 \cap \bigcup_{i=2}^{k} f(Z_0,Z_i) = 1\right) >c, \forall k
\geq 2,
\end{equation}
i.e. given a copy of $Z$ called $Z_0$ and $k$ independent copies of
$Z$, $Z_1,\ldots,Z_k$, there is non-zero probability that even if
$f(Z_0,Z_1)$ is 0 there is at least one $Z_i$ in the remaining
$Z_2,\ldots,Z_k$ such that $f(Z_0,Z_i)$ is 1.

In the following we will assume that whenever we talk of the
vertex-based random connection model, we are talking about a model
where $f$ satisfies the non-triviality condition, and hence the
connection diversity condition as well. 

\subsubsection{Basic properties of the vertex-based random connection model}

We now state some fundamental properties of this model. Since this
model is a generalization of the random connection model defined
in~\cite{meester:1996}, it is natural to ask whether it shares
some properties with that model. In fact, under the restrictions
described in Section~\ref{sec:strong:vertex-based:restrictions}, the
vertex-based random connection model has a non-trivial critical
density and has at most one infinite component. We state these
properties formally now. 

We will denote by $W(x)$ the connected component containing the point
$x \in V$. For the special case $W(0)$ i.e. the connected component
containing the origin, we will simply write $W$. We will use the
notation $\theta_g(x,\lambda)$ to denote the probability that the
point $x \in V$ is part of an infinite cluster. We will drop the $x$
when $x$ is the origin, writing simply $\theta_g(\lambda)$.

\begin{proposition} (Critical phenomena and non-triviality of critical density)
\label{prp:criticality}
For the vertex-based random connection model with parameters $\lambda$
and $r < \infty$ and a connection function $g$ based on a function
$f$ and a random variable $Z$ that satisfy the non-triviality
condition~(\ref{cond:non-triviality}), there is a critical value
$\lambda_c$ such that $\theta_g(\lambda) > 0$ whenever $\lambda >
\lambda_c$ and $\theta_g(\lambda) = 0$ whenever $\lambda <
\lambda_c$. Moreover $0 < \lambda_c < \infty$.
\end{proposition}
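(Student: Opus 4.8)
The plan is to prove three statements: that $\theta_g(\lambda)$ is monotone in $\lambda$ (which produces a well-defined threshold $\lambda_c$), that $\lambda_c>0$, and that $\lambda_c<\infty$. For the monotonicity I would couple the models at two densities $\lambda_1<\lambda_2$ on one probability space: generate a Poisson process of rate $\lambda_2$, attach an independent copy of $Z$ to each point, and retain each point independently with probability $\lambda_1/\lambda_2$ to obtain a rate-$\lambda_1$ process. Because an edge of the vertex-based model is present exactly when both endpoints are present, lie within distance $r$, and satisfy $f(Z_u,Z_v)=1$ — a test that does not involve any other point — deleting points can only remove edges. Hence the origin's component $W$ in the $\lambda_1$-model is contained in its component in the $\lambda_2$-model, so $\{|W|=\infty\}$ is monotone and $\theta_g(\lambda_1)\le\theta_g(\lambda_2)$. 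Setting $\lambda_c=\inf\{\lambda:\theta_g(\lambda)>0\}$ then gives the stated threshold behaviour immediately.

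For $\lambda_c>0$ I would dominate the origin's cluster by a branching process. Since every vertex-based edge requires $d(u,v)\le r$, the edge set is contained (under the natural coupling) in that of the Gilbert disc graph of radius $r$, so $W$ is contained in the origin's cluster there. A standard exploration shows that cluster size is stochastically dominated by the total progeny of a Galton--Watson process with Poisson$(\lambda\pi r^2)$ offspring, the mean being the expected number of points within distance $r$ of a given point. For $\lambda\pi r^2<1$ this process is subcritical and dies out almost surely, giving $\theta_g(\lambda)=0$; therefore $\lambda_c\ge(\pi r^2)^{-1}>0$.

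The substantive direction is $\lambda_c<\infty$, for which I would run a block renormalisation in the spirit of the random connection model, comparing with supercritical Bernoulli percolation on $\ZZ^2$. The \emph{crucial difficulty}, and the reason the ordinary treatment does not transfer, is that edges here are \emph{not} independent: in the random connection model a box with many points is internally connected with high probability (its induced graph behaves like a dense Erd\H{o}s--R\'enyi graph), whereas in our model a densely populated box need not induce a connected subgraph — for instance when $f$ is the equality test on $Z\in\{0,1\}$ the box splits into two disjoint cliques, and the infinite component lives on only one of the two reachability classes. I would resolve this in two stages. First, reduce to a model with honest intra-box connectivity: by non-triviality~(\ref{cond:non-triviality}) there is a positive-probability connectable pair, so $\supp(\Lambda,\pr)$ contains a reachability class $S$ (in the sense of~(\ref{cond:finite-reachability})) of positive probability carrying internal connections; restricting the process to $\{u:Z_u\in S\}$ yields a subgraph of density $\lambda\,\pr(Z\in S)>0$ that satisfies finite reachability and still connects with positive probability, and it suffices to percolate this sub-model. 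Second, tile $\RR^2$ into squares of side $s$ with $3s\sqrt2\le r$, so that all points in any $3\times3$ block of squares are pairwise within distance $r$, and call a square \emph{good} if it contains at least $N$ restricted points. For large $\lambda$ every square is good with probability near $1$; finite reachability forces the restricted points of a good square into a single component with probability tending to $1$ as $N\to\infty$ (any two support values are joined by a bounded chain, whose intermediate positive-probability values are realised in the square when $N$ is large), and the connection-diversity bound~(\ref{eq:conn-div}) makes two adjacent good squares share an edge with probability tending to $1$. Declaring the sites and bonds of $\ZZ^2$ open accordingly gives a finite-range-dependent percolation with marginals tending to $1$, which therefore dominates supercritical Bernoulli percolation (e.g. via a Liggett--Schonmann--Stacey comparison). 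An infinite open cluster of good squares carries an honest infinite component of the restricted graph, hence of the vertex-based model, so $\theta_g(\lambda)>0$ for large $\lambda$ and $\lambda_c<\infty$.

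I expect the intra-box connectivity step — equivalently, controlling the dependence injected by the shared vertex variables $\{Z_u\}$ — to be the main obstacle, since it is precisely the feature distinguishing the vertex-based model from the independent-edge random connection model; it is what forces the reachability-class reduction and what must be tracked so that a chain of good boxes certifies a genuinely connected path rather than a sequence of mutually unreachable clusters.
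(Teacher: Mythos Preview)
Your monotonicity and $\lambda_c>0$ arguments (thinning coupling, and branching-process domination via the enclosing Gilbert disc graph) are exactly the standard ones; the paper in fact gives no detailed proof here, saying only that monotonicity follows from the coupling in Meester--Roy together with Kolmogorov's 0-1 law, and that $0<\lambda_c<\infty$ ``is a standard proof not very different from that presented in [Penrose 1991]'', which it then omits. So you have supplied what the paper merely gestures at.

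For $\lambda_c<\infty$ your renormalisation is correct in outline but more elaborate than needed, and one step does not quite hold as written. The claim that ``finite reachability forces the restricted points of a good square into a single component with probability tending to $1$ as $N\to\infty$'' can fail when the reachability class $S$ is infinite: chain lengths between values in $S$ need not be uniformly bounded, so a box with $N$ points may well contain a rare $Z$-value whose chain to the bulk requires intermediate values too improbable to appear among those $N$ points. The fix---which also renders the reachability-class reduction, the connection-diversity bound~(\ref{eq:conn-div}), and the Liggett--Schonmann--Stacey comparison all unnecessary---is that non-triviality alone already furnishes a single pair $a,b$ in the support with $f(a,b)=1$. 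Declare a box \emph{good} when it contains at least one $a$-point and at least one $b$-point; goodness of disjoint boxes is then genuinely independent (it depends only on the Poisson points and $Z$-labels inside the box), adjacent good boxes are joined deterministically once the tiling is fine enough relative to $r$ (the $a$-point in one box sees the $b$-point in the next), and a direct comparison with supercritical Bernoulli site percolation on $\ZZ^2$ finishes the argument. This two-type shortcut is presumably what the paper has in mind by ``standard''.
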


The proof of the first part of this proposition follows by observing
that a standard coupling argument (see e.g. Meester and
Roy~\cite{meester:1996}, ) implies $\theta_g(\lambda) \leq
\theta_g(\lambda')$ whenever $\lambda \leq \lambda'$, and then
applying Kolmogorov's 0-1 law.  The second part, the non-triviality of
the critical probability, involves a proof, but it is a standard proof
not very different from that presented in~\cite{penrose-aap:1991}, so
we omit it here.

As discussed in
Section~\ref{sec:strong:vertex-based:restrictions:reachability}, it is
easy to see that for a general choice of $Z$ and $f$ the vertex-based
random connection model could contain multiple infinite-sized
connected components (clusters). However, the finite reachability
restriction disallows this and forces the model to behave in a
reasonable manner similar to the random connection model, thereby
making it of some use to us.

\begin{proposition} (Uniqueness of the infinite component)
\label{prp:uniqueness}
The vertex-based random connection model with parameters $\lambda$
and $r < \infty$ and a connection function $g$ based on a function
$f$ and a random variable $Z$ that satisfy the non-triviality
condition~(\ref{cond:non-triviality}) and the finite reachability
condition~(\ref{cond:reachability}) contains at most one infinite
connected component.
\end{proposition}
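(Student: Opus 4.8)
The plan is to adapt the classical Burton--Keane uniqueness argument, as carried out for the ordinary random connection model in Meester and Roy~\cite{meester:1996}, to our vertex-based setting. The underlying point set is a homogeneous Poisson process and the marks $\{Z_u : u \in V\}$ are independent copies of $Z$, so the entire configuration is stationary and ergodic under translations of $\RR^2$ (the single extra point at the origin is irrelevant for counting infinite components). Consequently the number $N$ of infinite clusters is almost surely equal to some constant $k \in \{0,1,2,\ldots\} \cup \{\infty\}$. It then suffices to rule out $2 \leq k < \infty$ and $k = \infty$.

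The crucial preliminary step, and the one where the finite reachability condition~(\ref{cond:finite-reachability}) enters, is to establish an insertion-tolerance (finite-energy) property suited to the vertex-based model. In the ordinary random connection model this property is immediate because edges are placed independently given the points; here it fails in general, as the equality-function example of Section~\ref{sec:strong:vertex-based:restrictions} shows, since marks of different ``colours'' can never connect. I would prove the following substitute: for any bounded box $B$ and any two clusters that reach within distance $r$ of $B$, conditioned on the configuration outside $B$, there is positive probability that a finite chain of inserted points inside $B$ joins the two clusters. The construction uses finite reachability directly: given the marks $z, z'$ of two points we wish to bridge, there is a finite sequence $w_0 = z, w_1, \ldots, w_m = z'$ of marks in the support with $f(w_i, w_{i+1}) = 1$ for each $i$; we place points carrying these marks along a short path inside $B$. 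Because the Poisson process deposits points in any sub-region of $B$ with positive probability, and because each $w_i$ lies in the support of $Z$, this whole insertion event has positive probability and it alters the configuration only inside $B$.

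With insertion-tolerance in hand, the two remaining cases follow the standard route. To rule out $2 \leq k < \infty$, choose $B$ large enough that with positive probability at least two distinct infinite clusters come within distance $r$ of $B$; the insertion lemma merges them with positive conditional probability, producing a configuration with at most $k-1$ infinite clusters. Since the inserted measure is absolutely continuous with respect to the original law on the modified region, this forces $\pr(N \leq k-1) > 0$, contradicting $N = k$ almost surely. To rule out $k = \infty$, I would call a bounded region a trifurcation if a local modification splits a single infinite cluster into exactly three infinite pieces, and use insertion-tolerance to show trifurcations occur with positive density, so that the expected number of trifurcations in the box $\Lambda_n = [-n,n]^2$ is at least $c n^2$. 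The Burton--Keane combinatorial-topological lemma, however, bounds the number of trifurcations in $\Lambda_n$ by the number of cluster-boundary crossings of $\partial \Lambda_n$, which is $O(n)$. For large $n$ this is a contradiction, so $k = \infty$ is impossible, and together these give $k \leq 1$.

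The main obstacle is the insertion-tolerance step: everything else is a faithful transcription of Burton--Keane, but the loss of edge-independence means finite energy is no longer automatic, and its replacement must be engineered entirely from the finite reachability condition. Care is also needed to verify that the bridging path can be realised inside a \emph{bounded} region whose inserted marks are chosen independently of the exterior, so that the modification genuinely preserves the conditional law outside $B$ and the absolute-continuity comparison used in both cases is valid.
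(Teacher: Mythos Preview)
Your proposal is correct and follows essentially the same approach as the paper: the paper also invokes ergodicity to conclude that the number of infinite components is almost surely constant, then adapts the Meester--Roy uniqueness argument for the Boolean model (their Proposition~3.3), noting that the finite reachability condition is precisely what is needed to connect multiple infinite components within a large finite box with positive probability. Your write-up is in fact more detailed than the paper's own sketch, which omits the proof and simply points to Meester--Roy while highlighting where finite reachability enters.
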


The proof of Proposition~\ref{prp:uniqueness} proceeds by first noting
that due to the ergodicity of the process the number of infinite
components is almost surely constant. Then we proceed by
contradiction, assuming that there are greater than 2 infinite
components and showing how multiple components can be connected with
positive probability. This proof is long and involved and is almost
exactly similar to the proof of the same result for the Boolean model
(Proposition 3.3 of~\cite{meester:1996}) so we do not repeat it here,
only noting that a critical part of the proof involves showing that
for a large enough but finite sized box, multiple infinite components
enter it and so can be connected within it with finite
probability. Connecting multiple infinite components within a finite
sized box in the vertex-based random connection model requires the
finite reachability condition, which establishes that this condition
is not just necessary but also sufficient to establish the uniqueness
of the infinite component (if it exists).

\subsubsection{A high density result for the vertex-based random
  connection model}

We now come to our main result. Define the quantity $q_k(\lambda) =
\pr_\lambda(|W| = k), k \geq 1$ where $W$ is the connected component
containing the origin i.e. $q_k(\lambda)$ is the probability that the
component containing the origin has size $k$.  Our key contribution is
that we can show that the following result proved by
Penrose~\cite{penrose-aap:1991} for the high-density setting of the
random connection model also holds for the vertex-based random
connection model:
\begin{lemma}
\label{lem:penrose-modified}
\[ \lim_{\lambda \rightarrow \infty} \frac{\sum_{k = 1}^{\infty}
  q_k(\lambda)}{q_1(\lambda)} = 1\]
\end{lemma}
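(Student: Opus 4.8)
The plan is to show $\sum_{k\ge 2}q_k(\lambda)=o(q_1(\lambda))$; since $q_1\le\sum_{k\ge1}q_k$ the ratio is always at least $1$, so this suffices. The whole difficulty is that, unlike in Penrose's random connection model, edges are not independent: all edges incident to a vertex $u$ are coupled through the single mark $Z_u$. My strategy is to push all of this dependence into the i.i.d.\ marks and condition on them. Conditionally on the mark $Z_0=z$ of the origin, the events ``the Poisson point $y$ connects to the origin'' are independent across $y$, each of probability $p_z\,\mathbf 1(|y|\le r)$ with $p_z:=\pr'(f(z,Z)=1)$; a Poisson void / generating-function computation then gives the exact identity $q_1(\lambda)=\EE_Z\!\left[\exp(-\lambda\pi r^2 p_Z)\right]$. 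The non-triviality condition~(\ref{cond:non-triviality}) and finite reachability guarantee $p_z>0$ on the support (a type connecting to nothing could not be reached), hence a uniform lower bound $p_{\min}>0$ in the finite/countable support case; this pins down the decay rate of $q_1$.

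Then for $k\ge 2$ I would condition on the marks of the $k$ cluster vertices. This is the crux: once the marks $z_1,\dots,z_k$ are fixed, (a) the internal edges are deterministic functions of $f$, and (b) the isolation event factorizes over the external Poisson process, contributing a void probability $\exp(-\lambda\int\rho(y)\,dy)$ where $\rho(y)=\pr'\!\big(\exists i:|y-x_i|\le r,\ f(z_i,Z_y)=1\big)$ is, again, a product-measure probability in the single fresh mark $Z_y$. So after conditioning, Penrose's machinery applies essentially verbatim, with $p_z$ and the joint ``fail-to-connect'' probabilities playing the role of his scalar connection function.

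The dominant contribution is $k=2$, and I would treat it explicitly. Writing $O(u)$ for the lens area $|B(0,r)\cap B(u,r)|$, the suppression exponent for the pair $\{0,u\}$ with marks $z_0,z_u$ simplifies to $\pi r^2(p_{z_0}+p_{z_u})-O(u)\,\pr'(f(z_0,Z)=1,\,f(z_u,Z)=1)$, exceeding the $q_1$ exponent $\pi r^2 p_{z_0}$ by at least $(\pi r^2-O(u))\,p_{z_u}\ge 0$. Using that a weighted average is bounded by a supremum, $q_2/q_1\le \lambda\int_{|u|\le r}\sup_{z_0}\psi(z_0,u)\,du$, where the inner mark-expectation $\psi(z_0,u)$ (over $z_u$) satisfies $\psi(z_0,u)\le e^{-\lambda(\pi r^2-O(u))p_{\min}}$. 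Since $\pi r^2-O(u)\approx 2r|u|$ near the origin, the integral is $O\big((\lambda p_{\min})^{-2}\big)$ and the prefactor $\lambda$ leaves $q_2/q_1=O(1/\lambda)\to 0$. The only delicate point is the degenerate region $u\to 0$ (disks almost coincident), where the exponential gain vanishes and one must rely on the small Lebesgue measure of that region --- exactly the geometric balancing in Penrose's argument.

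Finally, for the tail $\sum_{k\ge 3}q_k$ I would use a standard cluster / spanning-tree bound: the positions of the extra $k-1$ vertices integrate to a factor $\lambda^{k-1}/(k-1)!$ over the number of tree shapes, internal connection probabilities are bounded by $1$, and the isolation factor is at most $\exp\big(-\lambda\,|\bigcup_i B(x_i,r)|\,c\big)$. Here the connection diversity condition~(\ref{eq:conn-div}) is what supplies a uniform constant $c>0$ lower-bounding $\rho(y)$ on the cluster's neighborhood, so each additional vertex incurs a multiplicative penalty $O(1/\lambda)$ and the series $\sum_{k\ge2}q_k$ is geometrically dominated by its first term, giving $\sum_{k\ge2}q_k=O(q_2)=o(q_1)$. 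I expect the main obstacle to be precisely item (b) above --- verifying that conditioning on marks genuinely decouples the external process and that the per-vertex suppression cost is bounded below \emph{uniformly in the mark} --- since this is where the vertex-based correlations, absent in Penrose's setting, must be controlled, and it is exactly the role played by non-triviality and connection diversity.
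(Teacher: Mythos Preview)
Your approach differs substantially from the paper's. The paper follows Penrose's discretization: map the cluster $W$ to the lattice $\delta\ZZ^2$ via nearest-point projection $F_\delta$, let $S_\delta$ be the image, and partition $\{|W|=k\}$ by the value $|S_\delta|=m$. Two lemmas then do the work: (i) $m=1$ (all points of $W$ in a single $\delta$-box): the box volume contributes $(\lambda\delta^2)^{k-1}/(k-1)!$, which sums over $k$ to $e^{\lambda\delta^2}$, while connection diversity gives an isolation exponent exceeding that of $q_1$ by a fixed amount, so for $\delta$ small the ratio vanishes; (ii) $m\ge 2$ (cluster spread over several boxes): pick the leftmost, rightmost, and topmost occupied boxes of $\eta=S_\delta$ and carve out disjoint half-planes / strips $A_l,A_r,A_t\subset\RR^2\setminus F_\delta^{-1}(\eta)$ on which the isolation integrand is bounded below, producing a strict exponential gain over $q_1$. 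Your route --- condition on marks, do $k=2$ by the lens computation, and bound $\sum_{k\ge 3}$ by a spanning-tree estimate --- never discretizes.

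Your mark-conditioning is in fact more careful than the paper's Proposition~\ref{prp:penrose}, which writes $q_k$ as a product of $g_2$ and $\exp(-\lambda\int g_1)$ with $g_1,g_2$ already averaged over all marks; that factorization is not obviously exact, since internal connectivity and external isolation share the marks $Z_0,\dots,Z_{x_{k-1}}$. Your formula $q_1=\EE_Z[\exp(-\lambda\pi r^2 p_Z)]$ is the correct one, and your $k=2$ estimate is clean.

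The gap is the tail. The claim ``each additional vertex incurs a multiplicative penalty $O(1/\lambda)$'' fails when the new vertex $x_k$ lands where $B(x_k,r)\subseteq\bigcup_{i<k}B(x_i,r)$: then $|\bigcup B|$ does not grow, the isolation factor gains nothing, and the Poisson prefactor $\lambda$ is unsuppressed. With only $g_2\le 1$ and your isolation bound $\exp(-\lambda|\bigcup B|\,p_{\min})$ you get at best $q_k/q_1\lesssim (\lambda\pi r^2)^{k-1}k^{k-2}/(k-1)!$, which diverges in $\lambda$. The obstruction is not uniformity of suppression \emph{in the mark} --- $p_{\min}>0$ handles that --- but uniformity \emph{in the configuration}: tightly packed clusters give no per-vertex geometric gain. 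Penrose's discretization is exactly the device that separates ``tight'' ($|S_\delta|=1$) from ``spread'' ($|S_\delta|\ge 2$) configurations and treats each with the appropriate mechanism. To salvage your direct approach you must either reproduce that dichotomy, or extract extra suppression for tight clusters from the \emph{multiplicity} of marks (with $k\ge 2$ independent marks in a small ball, an external point connects with probability strictly exceeding $\gamma$, which is what connection diversity is for) --- but that gain evaporates once you have conditioned on the marks, so you would need to postpone or undo that conditioning at the right step.
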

Since $\theta_g(\lambda) = 1 - \sum_{k = 1}^{\infty} q_k(\lambda)$,
the implication of this theorem is that as $\lambda \rightarrow
\infty$, the origin is either isolated or part of the infinite
component. Any other situation occurs with probability 0. The proof of
this Theorem is involved and technical so we move it to the Appendix.

\subsection{The strong connectivity result}
\label{sec:strong:main-theorem}

Denote by $\vbrgg(n,r,\gamma)$, the vertex-based random connection
model graph with vertex set consisting of $n$ points uniformly
distributed in the unit circle centred at the origin, with radius
bound $r$, and a connectivity function $g$ as defined
in~(\ref{eq:def}) using a function $f$ and and random variable $Z$
such that for any $Z_1$ and $Z_2$ that are independent copies of $Z$,
$\gamma = \pr(f(Z_1,Z_2) = 1)$. Now, we are ready to state our optimal
connectivity result.

\begin{theorem}
\label{thm:vb-gamma}
$\pr(\vbrgg(n,r,\gamma) \mbox{ is connected})\rightarrow 1$ as $n
\rightarrow \infty$ if and only if
\begin{equation}
\label{eq:vb-radius}
 \pi r(n)^2 \gamma = (\log n + c(n))/n,
\end{equation}
where $c = \lim_{n \rightarrow \infty} c(n) = \infty$ as $n \rightarrow \infty$.
\end{theorem}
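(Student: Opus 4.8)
The plan is to follow the classical Penrose / Gupta--Kumar strategy and show that the threshold for connectivity coincides with the threshold for the disappearance of isolated vertices; the whole statement then reduces to a first- and second-moment analysis of isolated vertices, with the contribution of larger finite components absorbed by Lemma~\ref{lem:penrose-modified}.

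For the sufficiency direction I would assume (\ref{eq:vb-radius}) with $c(n)\to\infty$ and bound $\pr(\vbrgg(n,r,\gamma)\text{ disconnected})$ by the probability that there is an isolated vertex plus the probability that there is a connected component of size between $2$ and $n/2$. For the isolated-vertex term I would compute the expected number of isolated vertices: conditioning on $Z_u$ and on the number of points within distance $r(n)$ of $u$, a vertex $u$ is isolated with probability $\approx (1-\pi r(n)^2\gamma)^{n-1}\approx e^{-n\pi r(n)^2\gamma}=e^{-(\log n+c(n))}$ (here one uses that the marginal per-vertex connection probability is $\gamma$), so the expectation is $\approx e^{-c(n)}\to 0$ and Markov's inequality finishes this term, with vertices near the boundary contributing even less and dispatched by the usual boundary argument. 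For the size-$\ge 2$ term I would invoke Lemma~\ref{lem:penrose-modified}: after rescaling space by $1/r(n)$ the local picture around any point is a Poisson-type process whose density tends to infinity, and the lemma says that in this regime a finite cluster of size $\ge 2$ is vanishingly unlikely compared with an isolated point, so the expected number of finite size-$\ge 2$ components is $o(e^{-c(n)})=o(1)$. Adding the two terms gives $\pr(\text{disconnected})\to 0$.

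For the necessity direction I would prove the contrapositive: if $c(n)\not\to\infty$ then along a subsequence $c(n)\le C$, so the expected number of isolated vertices stays $\ge e^{-C}>0$. Following the geometric approach of Yi et al.~\cite{yi-tcomm:2006}, I would then show that the number of isolated vertices is asymptotically Poisson with this mean, via a second-moment computation establishing that for $u\neq v$ the events ``$u$ isolated'' and ``$v$ isolated'' are asymptotically independent, i.e. their pair-correlation vanishes. Hence $\pr(\text{at least one isolated vertex})\to 1-\exp(-e^{-c(n)})$, which is bounded away from $0$; since an isolated vertex forces disconnection, $\pr(\text{connected})$ cannot tend to $1$.

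The hardest step is the size-$\ge 2$ term in sufficiency: transferring Lemma~\ref{lem:penrose-modified}, a statement about the Poisson model on all of $\RR^2$ as the density tends to infinity, to the finite $n$-point model on the bounded unit region. This de-Poissonization and rescaling, together with the control of boundary effects, is the technical heart of the argument. A secondary difficulty, peculiar to the vertex-based setting, arises in the second-moment step of necessity: because all edges are built from the shared vertex variables $\{Z_u\}$, the isolation indicators are not independent in the way they are for the plain \rgg, so proving that the pair-correlation vanishes requires care with the joint law of $(Z_u,Z_v)$ under $f$.
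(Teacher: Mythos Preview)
Your proposal is correct and matches the paper's proof almost exactly: the paper proves sufficiency via a rescaling/coupling lemma (your ``hardest step'') that transfers Lemma~\ref{lem:penrose-modified} from the infinite Poisson model to the finite disc, then bounds the isolated-vertex probability by $e^{-c}$ and combines with a Gupta--Kumar style argument; for necessity it follows Yi et~al.\ to show the isolated-vertex count is asymptotically Poisson with mean $e^{-c}$, the vertex-based dependence being handled by showing that the $k$-point isolation probability is \emph{bounded above} by $(1-\gamma\nu_r)^{n-k}$ with equality precisely when the $k$ balls are pairwise disjoint.

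One small remark on necessity: a second-moment computation alone does not yield Poisson convergence, and the paper in fact uses Brun's sieve, verifying $n^k\pr(\cap_{j=1}^k B_j)\to e^{-ck}$ for every $k$. That said, for the bare statement of Theorem~\ref{thm:vb-gamma} you only need $\pr(\text{some isolated vertex})$ bounded away from $0$, for which your second-moment approach is sufficient and slightly simpler than what the paper does.
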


\ifrevisiontwo
{\bf 
\fi
\paragraph{\sf Discussion on $c(n)$} Before we get to the proof, let
us briefly discuss the quantity $c(n)$ that appears
in~(\ref{eq:vb-radius}) and has appeared before in~(\ref{eq:das-radius})
and~(\ref{eq:unif-das-radius}). Since the probability that a node is connected to a
neighbor is $\gamma$ in $\vbrgg(n,r,\gamma)$ and since the density of
the points in the unit disk is $n$, we can reorganize the terms
of~(\ref{eq:vb-radius}) to see that the average degree of a vertex in
this graph is $\log n + c(n)$. Hence, what this result is saying is
that this graph is connected if and only if the average degree is
greater than $\log n$ by a quantity that is asymptotically significant
i.e. that tends to infinity when $n \rightarrow \infty$. Another way
of writing this is that average degree of $(1 + \omega(1/(\log
n))) \log n$ is necessary and sufficient for connectivity.
\ifrevisiontwo
}
\fi

\begin{proof} The theorem says that the condition~(\ref{eq:vb-radius})
  is both necessary and sufficient to establish connectivity. We begin
  with the sufficient condition. 

The proof of the sufficient condition uses
Lemma~\ref{lem:penrose-modified} which has been proved for the
vertex-based random connection model in the infinite plane. We first
need to show that this result can be used in the finite disc, a step
in the proof that was omitted by Gupta and
Kumar~\cite{gupta-chapter:1999}. We begin by fixing our
notation. Given $n, r \in \RR_+$ and $\gamma \in [0,1]$ we define
three random geometric graph models that are clearly related to each
other. $B(r)$ denotes the disk of radius $r$ centred at the origin and
$\PP(\lambda)$ denotes a Poisson Point Process of density $\lambda$ in
$\RR$. We define three graphs with the following vertex sets: (1)
$\vbrgg(n,r,\gamma)$: $n$ points distributed uniformly at random in
$B(1)$ and one point at the origin.  (2) $\vbrgg(n,r,\gamma,\ell)$:
$n$ points distributed uniformly at random in $B(\ell)$ and one point
at the origin.  (3) $\pvbrgg(n,r,\gamma, \infty)$: $\PP(n)$ and one
point at the origin.  The edge set of these random graphs is defined
as specified above for $\vbrgg(n,r,\gamma)$ i.e. using a connection
function $g$ that uses a function $f$ and a random variable $Z$ whose
independent copies are associated with each vertex.

We will use the notation $W^{\lambda}_\ell(x)$ to denote the connected
component containing the point $x$, and use only the notation
$W^{\lambda}_\ell$ when $x$ is the origin, where $\lambda$ will be the
density of the model ($n$ for all three models in the description
above) and $\ell$ will be the radius of the disc around the origin in
which the points of the model are placed.
\begin{lemma}
\label{lem:isolated}
For $\vbrgg(n,r(n),\gamma)$ if we denote by $A_n$ the event that there
exists a sequence $\{s_n\}_{n\geq 1}$ such that $|W^n_1| \geq s_n$ and $1 \leq s_n \leq n,
\forall n\geq 1$ and $s_n \rightarrow \infty$ as $n \rightarrow \infty$, then
\[\lim_{n\rightarrow \infty} (1 - \pr(A_n)) = \lim_{n \rightarrow
  \infty} \pr(|W^n_1| = 1),\]
as long as 
\begin{equation}
\label{eq:condition-radius}
\exists c : r(n)\cdot n^{1/3} \leq c,
\forall n \geq 1
\end{equation}
\end{lemma}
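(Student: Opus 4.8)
The plan is to transfer the infinite-plane dichotomy of Lemma~\ref{lem:penrose-modified} into the finite disc $B(1)$ by coupling the binomial disc model $\vbrgg(n,r(n),\gamma)$ with the infinite Poisson model $\pvbrgg(n,r(n),\gamma,\infty)$, exploiting that condition~(\ref{eq:condition-radius}) forces $r(n)\to 0$ so that a small component of the origin is localized well inside $B(1)$. First I would observe that the assertion is equivalent to
\[
\lim_{n\to\infty}\pr\bigl(2\le |W^n_1| < s_n\bigr) = 0,
\]
since $\{|W^n_1|=1\}\subseteq\{|W^n_1|<s_n\}=A_n^c$ once $s_n>1$ (which holds eventually, as $s_n\to\infty$), and the set-difference of these two events is exactly $\{2\le|W^n_1|<s_n\}$. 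So the entire content of the lemma is to rule out intermediate-sized components of the origin.

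Next I would pass to a convenient scale and to the Poisson setting. Rescaling coordinates by $1/r(n)$ turns the disc model into one with connection radius $1$, intensity $\lambda_n$ of order $n\,r(n)^2$, on a disc $B(1/r(n))$ whose radius tends to infinity. Under the connectivity radius~(\ref{eq:vb-radius}) one has $n\,r(n)^2 = (\log n + c(n))/(\pi\gamma)\to\infty$, which is the regime in which Lemma~\ref{lem:penrose-modified} has force; if instead $\lambda_n$ stays bounded the origin is isolated with probability $\to 1$ and the claim is immediate. I would then couple the binomial point set in $B(1)$ with a Poisson process of matching intensity in the plane so that the two configurations coincide throughout $B(1)$ on an event of probability $1-o(1)$. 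The passage between the fixed-count and the Poisson models is the usual de-Poissonization: sandwich $n$ by $n\pm\sqrt{n}\,\log n$ points and use that $\pr(|W^n_1|<s_n)$ is monotone in the number of points, since adding a point (with its own independent mark) can only merge components and hence only enlarge $|W^n_1|$.

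On this coupling the origin's component in the disc model agrees with its component in the Poisson model as long as the former does not reach within $r(n)$ of $\partial B(1)$. I would therefore split $\{2\le|W^n_1|<s_n\}$ according to whether the component stays inside $B(1-r(n))$ or reaches the boundary annulus. On the first part the disc-model component coincides with a finite component of size at least $2$ in the infinite Poisson model, so its probability is bounded by $\sum_{k\ge 2} q_k(\lambda_n) = o\bigl(q_1(\lambda_n)\bigr) = o(1)$ by Lemma~\ref{lem:penrose-modified}. The second part is exactly where condition~(\ref{eq:condition-radius}) enters: since the origin sits at the centre and every edge has length at most $r(n)$, a component reaching the boundary annulus must contain at least $1/r(n)\ge n^{1/3}/c$ points; this event is therefore vacuous whenever $s_n\le n^{1/3}/c$, and in general confines the boundary-touching contribution to genuinely large components.

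The main obstacle is precisely this boundary term for fast-growing $s_n$: bounding the probability that the origin lies in a component that is simultaneously of intermediate-to-macroscopic size (order at least $n^{1/3}$) yet does not span the whole disc. I expect to dispose of it with a geometric first-moment estimate in the spirit of the approach of Yi et al.~\cite{yi-tcomm:2006} discussed in Section~\ref{sec:strong:main-theorem}: a finite component reaching the boundary region must be surrounded by a vacant ``moat'' of width $r(n)$ enclosing at least $n^{1/3}/c$ points, and the expected number of such separated point sets is $o(1)$ in the connectivity regime. Assembling the inside-estimate and the boundary-estimate then gives $\pr(2\le|W^n_1|<s_n)\to 0$, and hence the asserted equality of limits.
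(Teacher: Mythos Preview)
Your approach is workable but differs substantively from the paper's, and the role you assign to condition~(\ref{eq:condition-radius}) is not the one the paper intends. The paper does not split into inside/boundary cases, does no de-Poissonization, and invokes no moat argument. Instead it rescales by the specific factor $n^{1/3}$ (not $1/r(n)$): this sends $B(1)$ to $B(n^{1/3})$, leaves the $n$ points in place so the density becomes $n^{1/3}/\pi\to\infty$ automatically, and turns the connection radius into $r(n)\,n^{1/3}$. Condition~(\ref{eq:condition-radius}) is used \emph{solely} to guarantee that this rescaled radius is bounded by $c$, so that the connection function keeps bounded support and Lemma~\ref{lem:penrose-modified} can be applied. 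The paper then simply asserts that the rescaled finite-disc model ``converges'' to $\pvbrgg(\lambda,r,\gamma,\infty)$ and reads off the dichotomy. Your use of~(\ref{eq:condition-radius}) to force a boundary-touching component to have size at least $n^{1/3}/c$ is a different---and arguably more honest---deployment of the hypothesis; the paper's one-line convergence step effectively hides the boundary issue you work to control.

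One genuine slip in your outline: the claim that ``if $\lambda_n$ stays bounded the origin is isolated with probability $\to 1$'' is false. A bounded but non-vanishing $\lambda_n=n\,r(n)^2$ leaves $\pr(|W^n_1|=1)$ bounded away from both $0$ and $1$, and the lemma's conclusion is not immediate in that regime. The paper's $n^{1/3}$-scaling avoids this entirely because the rescaled density $n^{1/3}/\pi$ diverges \emph{regardless} of how $r(n)$ behaves, whereas your $1/r(n)$-scaling yields density $n\,r(n)^2/\pi$, which need not diverge under~(\ref{eq:condition-radius}) alone. In the intended application within the proof of Theorem~\ref{thm:vb-gamma} one does have~(\ref{eq:vb-radius}) and hence $\lambda_n\to\infty$, so this is a defect of your proof of the lemma as a stand-alone statement rather than of the overall argument; but if you want the lemma on its own terms, the paper's choice of scale is the cleaner route.
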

\begin{proof} By scaling we couple
$\vbrgg(n,r,\gamma)$ to a random graph model on a disc of larger radius
such that the probability that the component containing
the origin is of any particular size remains
exactly the same in the coupled model. This coupled model is
$\vbrgg(n^{1/3},r\cdot n^{1/3}, \gamma, n^{1/3})$ i.e. the random graph
model with a lower density than $\vbrgg(n,r,\gamma)$ by a factor of
$n^{2/3}$ and a radius longer by a factor of $n^{1/3}$ on a disc of
radius $n^{1/3}$. This basically involved expanding the unit disc with
density to a
disc of radius $n^{1/3}$. All the edges and non-edges are preserved
since the connection radius increases in exactly the same proportion
as the distances between points. The increase in distances brings the
density down by a factor of the square of the increase in distances
i.e. by $n^{2/3}$. Hence it is easy to see that:
\begin{equation*}
\pr(|W^n_1| = k) = \pr(|W^{n^{1/3}}_{n^{1/3}}| = k), \forall n,k \geq 1.
\end{equation*}
 In particular, taking the limit as $n \rightarrow \infty$ on both
 sides for $k = 1$,
\begin{equation}
\label{eq:lim-1}
\lim_{n \rightarrow \infty} \pr(|W^n_1| = 1) = \lim_{n \rightarrow \infty}\pr(|W^{n^{1/3}}_{n^{1/3}}| = 1).
\end{equation}
But $\vbrgg(n^{1/3},r\cdot n^{1/3}, \gamma, n^{1/3})$ has the
property that as $n \rightarrow \infty$, the density of the process
tends to infinity, and the disc it covers expands to the entire
plane. In other words it converges to
$\vbrgg(\lambda,r(\lambda),\gamma,\infty)$ in the limit $\lambda
\rightarrow \infty$. So (\ref{eq:lim-1}) implies that 
\begin{equation}
\label{eq:lim-1-infty}
\pr(|W^n_1| = 1) = \lim_{\lambda \rightarrow \infty} \pr(|W^{\lambda}_{\infty}| = 1).
\end{equation}
The condition on the connection function ($r(n)\cdot n^{1/3} < c$)
implies that the connection function has bounded support (i.e. beyond
a radius that is at most $c$ the probability of forming an edge is
0). Hence we can use Lemma~\ref{lem:penrose-modified}.
This lemma along with (\ref{eq:lim-1-infty}) implies that 
\begin{equation}
\label{eq:lim-meester-infty}
\lim_{n \rightarrow \infty} \pr(|W^n_1|=1) = \lim_{\lambda \rightarrow
  \infty} (1 - \pr(|W^\lambda_\infty| = \infty)).
\end{equation}
Noting that as $n \rightarrow \infty$, the event $A_n$ tends to the
event $\{|W^\lambda_\infty| = \infty\}$ as $\lambda \rightarrow
\infty$, the lemma follows from (\ref{eq:lim-meester-infty}).
\end{proof}

The following lemma gives upper bound on the probability of an isolated node
existing. 
\begin{lemma}
\label{lem:vb-poisson}
For $\vbrgg_P(n,r,\gamma)$, if (\ref{eq:vb-radius}) holds then
\[\lim \sup_{n \rightarrow \infty} \pr(\exists x \in V : |W^n_1(x)| = 1)
\leq e^{-c},\]
where $c = \lim_{n\rightarrow \infty} c(n)$.
\end{lemma}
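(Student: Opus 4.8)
The plan is to bound the probability that an isolated vertex exists by the first moment of the number of such vertices. Let $N = \#\{x \in V : |W^n_1(x)| = 1\}$ be the number of isolated vertices of $\vbrgg_P(n,r,\gamma)$. Since the event $\{\exists x \in V : |W^n_1(x)| = 1\}$ is exactly $\{N \geq 1\}$, Markov's inequality gives $\pr(\exists x : |W^n_1(x)| = 1) \leq \ex(N)$, so it suffices to prove $\limsup_{n \to \infty} \ex(N) \leq e^{-c}$.

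To compute $\ex(N)$ I would invoke the Mecke (Slivnyak) formula for the Poisson process, which expresses the expected number of points with a given property as the integral of the intensity against the Palm probability of that property for a point inserted at $x$:
\[ \ex(N) = \int_{B(1)} n \, \pr_x(\text{$x$ is isolated}) \, dx, \]
the extra deterministic point at the origin contributing only $\exp(-n\gamma\pi r^2) = e^{-c(n)}/n = o(1)$. The crucial quantity is $\pr_x(\text{$x$ is isolated})$. Conditioning on the random variable $Z_x$ carried by $x$, the variables of the remaining points are i.i.d.\ copies of $Z$ and independent of $Z_x$; hence, given $Z_x$, each point of the process within distance $r$ of $x$ connects to $x$ independently, and the connecting points form a thinned Poisson process. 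Using that the marginal per-neighbour connection probability is $\gamma$, the void probability of the Poisson process yields
\[ \pr_x(\text{$x$ is isolated}) = \exp\!\big(-n\gamma\, |B(x,r)\cap B(1)| \big). \]

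I would then split the integral into the bulk, where $B(x,r) \subseteq B(1)$, and the boundary annulus of width $r$. In the bulk $|B(x,r) \cap B(1)| = \pi r^2$, so each bulk point is isolated with probability $\exp(-n\gamma\pi r^2) = \exp(-(\log n + c(n))) = e^{-c(n)}/n$ by (\ref{eq:vb-radius}); as the bulk contains $n(1-o(1))$ points in expectation (since $r \to 0$), it contributes $e^{-c(n)}(1-o(1)) \to e^{-c}$ to $\ex(N)$. For the boundary annulus, writing $|B(x,r)\cap B(1)| = \pi r^2 \phi(d/r)$ for a point at distance $d \in [0,r)$ from the rim, where $\phi:[0,1]\to[\tfrac12,1]$ increases from $\phi(0)=\tfrac12$ to $\phi(1)=1$ with $\phi(t) = \tfrac12 + \tfrac{2}{\pi}t + o(t)$ near $0$, and approximating the boundary locally by a straight line (valid since $r \to 0$), the annulus contribution is of order
\[ n r \int_0^1 n^{-\phi(t)} e^{-c\,\phi(t)} \, dt. \]
Here $nr = O(\sqrt{n \log n})$, while $n^{-\phi(t)}$ attains its maximum $n^{-1/2}$ at $t=0$ and decays like $n^{-2t/\pi}$, so a Laplace-type estimate gives $\int_0^1 n^{-\phi(t)}\,dt = \Theta(n^{-1/2}/\log n)$ and the whole annulus contributes $O(1/\sqrt{\log n}) = o(1)$. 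Combining the two pieces gives $\ex(N) = e^{-c} + o(1)$, and the lemma follows.

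The step I expect to be the main obstacle is the boundary estimate: points near the rim of the disc have fewer potential neighbours and are \emph{individually} far more likely to be isolated, so one must show that their vanishing areal measure outweighs their larger per-point probability. This is exactly the finite-disc step that Gupta and Kumar omitted, and making it rigorous requires the linear approximation of the boundary together with the Laplace-type bound on $\int_0^1 n^{-\phi(t)}\,dt$ above. A secondary point to treat carefully is the reduction of $\pr_x(\text{$x$ is isolated})$ to the per-neighbour probability $\gamma$: it is legitimate only after conditioning on $Z_x$, which makes the connection events of distinct neighbours conditionally independent, each of marginal probability $\gamma$.
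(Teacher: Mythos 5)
Your overall strategy --- bounding the probability that an isolated vertex exists by the expected number of isolated vertices --- is the same first-moment idea as the paper's, but your implementation is genuinely different. The paper conditions on the Poisson process placing $j$ points, expands the isolation probability of a fixed point as a binomial sum over the number $k$ of points falling in its $r$-disc to obtain $(1-\gamma\pi r^2(n))^{j-1}$, applies a union bound over the $j$ points, and then deconditions over $j$ ``just as Gupta and Kumar do''; boundary effects are dismissed by citing the appendix of~\cite{gupta-chapter:1999}. You instead use Markov's inequality plus the Mecke--Slivnyak formula to write the expected number of isolated vertices as an intensity integral of a Palm isolation probability, and you treat the boundary yourself via the bulk/annulus split and the Laplace-type estimate on $\int_0^1 n^{-\phi(t)}\,dt$ (your expansion $\phi(t)=\tfrac12+\tfrac2\pi t+o(t)$ is correct). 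Your route buys self-containedness --- the annulus computation is precisely the step the paper declines to reproduce --- at the cost of invoking Palm calculus where the paper stays elementary.

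There is, however, one step that is wrong as you have written it, and it deserves to be spelled out because the paper's own proof commits the identical error. Writing $A_x := |B(x,r)\cap B(1)|$ and $p(z) := \pr(f(z,Z)=1)$ for a fixed value $z$ in the support of $Z$, your claimed equality $\pr_x(x \mbox{ is isolated}) = \exp(-n\gamma A_x)$ does not hold in general: conditional on $Z_x$, each neighbour connects to $x$ independently with the \emph{conditional} probability $p(Z_x)$, not with the marginal $\gamma$, so the correct formula is $\pr_x(x \mbox{ is isolated}) = \ex\bigl[\exp(-n\,p(Z_x)\,A_x)\bigr]$, which by Jensen's inequality is $\geq \exp(-n\gamma A_x)$ --- the wrong direction for the upper bound the lemma needs. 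Your closing remark that conditioning on $Z_x$ makes the events independent ``each of marginal probability $\gamma$'' is exactly where this slips: after conditioning, the per-neighbour probability is $p(Z_x)$. The gap is substantive for the model in its stated generality: take the paper's three-colour example but with colour probabilities $(1/2,1/4,1/4)$ and $f(x,y)=1$ iff $x\neq y$; then $\gamma = 5/8$ while $p(\mathrm{Red})/\gamma = 4/5$, so the bulk contribution to your $\ex(N)$ is of order $\tfrac12 n^{1/5}e^{-4c/5}\rightarrow\infty$, your bound collapses, and a standard second-moment argument shows isolated vertices then exist with probability $1-o(1)$. Both your proof and the paper's are therefore implicitly using the symmetry hypothesis that $p(z)\equiv\gamma$ for a.e.\ $z$ in the support of $Z$ --- the paper hides it in the assertion that $k$ points in the disc all fail to connect to $x_1$ ``with probability $(1-\gamma)^k$'', when in fact those $k$ events share the randomness $Z_{x_1}$ and their joint probability is $\ex[(1-p(Z_{x_1}))^k]\geq(1-\gamma)^k$. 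That hypothesis does hold for every application in the paper (the contiguous scheme, the random-selection scheme, and Eschenauer--Gligor key predistribution), so your proof is exactly as correct as the paper's, and strictly more complete on the boundary issue; but to be rigorous you should state $p(\cdot)\equiv\gamma$ as an assumption before replacing the thinning probability by $\gamma$.
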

\begin{proof}
Let us assume that the Poisson process places $j$ points, $x_1, \ldots,x_j$ in the unit
disc. For any given point out of these $j$, the probability that it is
isolated (i.e. its component has size 1) is computed by observing that
this happens only when the points lying in disc of
radius $r(n)$ around it are not
connected to it. To compute this probability we observe that if $k
\leq j-1$ of these points lie inside this disc then they must all be
not connected which happens with probability $(1 - \gamma)^k$ for a
fixed set of $k$ points. The number of ways of choosing $k$ points out
of $j-1$ is ${j-1 \choose k}$ and for a fixed set of $k$ points out of
$j-1$, the probability that they lie within the disc of radius $r(n)$
around the point of interest while the other $j-1-k$ do not is given
by $(\pi r^2(n))^k (1 - \pi
r^2(n))^{j-1-k}$.  Hence we have that 
\begin{equation}
\label{eq:vb-poisson}
\pr(x_1 \mbox{ is isolated}) \leq
\sum_{k=0}^{j-1} {j-1 \choose k} (\pi r^2(n))^k (1 - \pi
r^2(n))^{j-1-k}\cdot (1 - \gamma)^{k} =(1 - \gamma\pi
r^2(n))^{j-1},
\end{equation}
and so 
\[ \pr(\exists i: 1\leq i\leq j, x_i \mbox{ is isolated}) \leq j \cdot
(1 - \gamma\pi r^2(n))^{j-1}.\] 
From here we get the proof of the lemma just as Gupta and Kumar do by
conditioning on the event that the Poisson point process places $j$
points in the unit disc. 
\end{proof}
We note that to be precise we must observe that the disc of radius
$r(n)$ centred on an arbitrary point in the unit disc may not lie
entirely within the unit disc. It is easy to see that the this problem
occurs in a ring of width $r(n)$ at the boundary of the unit
disc. This complication disappears in the limit since $r(n)
\rightarrow 0$ as $n \rightarrow \infty$. Gupta and Kumar have handled
this complication in precise and tedious detail in the appendix
of~\cite{gupta-chapter:1999} and so we don't repeat that here.

Finally we show that the bound on $\pvbrgg(n,r(n),\gamma)$ containing
an isolated vertex translates into a bound on $\vbrgg(n,r(n),\gamma)$
being disconnected if (\ref{eq:vb-radius}) holds.  Since the radius
bound of (\ref{eq:vb-radius}) satisfies the condition
(\ref{eq:condition-radius}), we can apply Lemma~\ref{lem:isolated} to
claim that for any $\epsilon > 0$ there is a sufficiently large $n$
such that $\pr(\pvbrgg(n,r(n),\gamma)\mbox{ is disconnected})$ is
upper bounded by $(1 + \epsilon) \cdot \pr(\exists x \in V :|W^n_1(x)|
= 1)$.  We follow Gupta and Kumar's calculations, noting only that in
our case $\pr(\mbox{node }k\mbox{ is isolated in
}\vbrgg(k,r(n),\gamma))$ is upper bounded by $(1 - \gamma \pi
r^2(n))^{k-1}$, 
Hence, 
\begin{align*}
\pr  (&\vbrgg(n,r(n),\gamma)\mbox{ is disconnected})  \\
& \leq 2(1 - 4 \epsilon)\left(\pr(\exists x \in V :
|W^n_1(x)| = 1) + \frac{e^{-\gamma \pi r^2(n)}}{\gamma \pi
  r^2(n)}\right)
\end{align*}
Under condition (\ref{eq:vb-radius}) and using Lemma~\ref{lem:vb-poisson}
we get 
\begin{align*}
\lim_{n \rightarrow \infty} \pr(\vbrgg(n,r(n),\gamma)&\mbox{ is
  disconnected}) \\
& \leq 2(1 - 4
\epsilon)\left( e^{-c}\right).
\end{align*}
Since $\epsilon$ can be taken to be arbitrarily small, the sufficient
part of the theorem follows since  $c(n) \rightarrow \infty$ as $n
\rightarrow \infty$ implies that $e^{-c} = 0$ .

We now move on to the necessary condition, to establish which we will
show that when $c(n)$ is a positive constant, the probability of an
isolated node existing in $\vbrgg(n,r(n), \gamma)$ is non-zero as $n
\rightarrow \infty$. This implies that $\vbrgg(n,r(n),\gamma)$ is
disconnected with positive probability when $c$ is a positive
constant. Since it can easily be shown using a coupling argument that
the probability of disconnection increases as $c$ decreases, this is
sufficient to show that $\vbrgg(n,r(n),\gamma)$ is disconnected with
non-zero probability for all values $c$ which are less than $+\infty$.

Specifically we will prove the following proposition:
\begin{proposition}
\label{prp:necessary-poisson}
If 
\begin{equation}
\label{eq:yi}
\pi r(n)^2 \gamma = (\log n + c)/n,
\end{equation}
for some constant $c \geq 0$
then the distribution of the number isolated nodes in
$\vbrgg(n,r(n),\gamma)$ is Poisson with mean $e^{-c}$.
\end{proposition}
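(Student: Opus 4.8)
The plan is to establish the limiting distribution by the method of moments: a sequence of $\NN$-valued random variables converges to a Poisson law of mean $\mu$ precisely when its $r$-th factorial moment tends to $\mu^r$ for every $r \geq 1$, since the Poisson distribution is determined by its moments. Writing $N$ for the number of isolated vertices of $\vbrgg(n,r(n),\gamma)$, it therefore suffices to show that $\ex[(N)_r] \rightarrow e^{-rc}$ for each fixed $r$, where $(N)_r = N(N-1)\cdots(N-r+1)$. The single-vertex isolation probability, which is the basic building block, has already essentially been computed in the proof of Lemma~\ref{lem:vb-poisson}: conditioned on there being $j$ points in the disc, a fixed vertex fails to connect to each other vertex independently with probability $1 - \gamma \pi r^2(n)$, so it is isolated with probability $(1 - \gamma \pi r^2(n))^{j-1}$.

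For the first moment I would condition on the number and positions of the points and use the exact per-vertex expression above. Under (\ref{eq:yi}) we have $\gamma \pi r^2(n) = (\log n + c)/n$, so a fixed vertex is isolated with probability $(1 - (\log n + c)/n)^{n-1} = (1+o(1)) e^{-(\log n + c)} = (1+o(1)) e^{-c}/n$. Summing over the $\approx n$ vertices gives $\ex[N] \rightarrow e^{-c}$, which already shows the mean is the claimed $e^{-c}$ and, in particular, is strictly positive for every finite $c$.

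For the higher factorial moments I would expand $\ex[(N)_r]$ as a sum, over ordered $r$-tuples of distinct vertices, of the probability that all $r$ of them are simultaneously isolated, and split this sum according to the geometry of the tuple. For a \emph{well-separated} tuple, in which every pairwise distance exceeds $2r(n)$, the $r$ neighbourhood discs are disjoint, the isolation events decouple, and the joint probability factorizes as $(1+o(1))(e^{-c}/n)^r$; since there are $(1+o(1))n^r$ such tuples this block contributes $(e^{-c})^r$ in the limit. It remains to show that tuples containing at least one \emph{close pair} (two vertices within distance $2r(n)$) contribute $o(1)$. Here the two neighbourhoods overlap in a lens of area $a \leq \pi r^2(n)$ and the two close vertices are additionally forbidden to connect to each other; a short calculation shows the joint isolation probability of such a pair is at most $(1+o(1))(1-\gamma)\,e^{-2\gamma\pi r^2(n)\,n}\,e^{\gamma^2 a\, n} = O\!\left(n^{\gamma-2}\right)$, because $\gamma^2 a n \leq \gamma(\log n + c)$ and $\gamma < 1$. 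Multiplying the $O(n\log n)$ ordered close pairs by this per-pair bound, and by the bounded ($O(1)$) contribution of the remaining $r-2$ coordinates, leaves a total of order $n^{\gamma-1}\log n = o(1)$.

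The crux of the argument --- and the place where the vertex-based model must be handled with care --- is exactly this control of dependence between the isolation events of nearby vertices: the fact that $\gamma < 1$ (guaranteed by the non-triviality condition~(\ref{cond:non-triviality})) is what makes the exponential gain $e^{\gamma^2 a n} = n^{O(\gamma)}$ strictly sub-linear and hence lets the vanishing factor $r^2(n) \sim \log n / n$ from the pair-count win. The one further technicality is the boundary effect: for vertices within distance $r(n)$ of the edge of the unit disc the neighbourhood disc is clipped, inflating the isolation probability; as with Lemma~\ref{lem:vb-poisson} this is handled exactly as in Gupta and Kumar~\cite{gupta-chapter:1999} and vanishes in the limit since $r(n) \rightarrow 0$. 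Combining the three blocks gives $\ex[(N)_r] \rightarrow e^{-rc}$ for every $r$, and the method of moments then yields that $N$ is asymptotically Poisson with mean $e^{-c}$.
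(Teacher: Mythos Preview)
Your overall architecture is the same as the paper's: Brun's sieve is exactly the factorial-moment method, and your split into ``well-separated'' tuples (the paper's $C_{k,k}$) versus tuples containing a close pair (the paper's $C_{k,m}$ with $m<k$) is the same decomposition. The first-moment computation and the well-separated block are fine and match Claim~\ref{clm:sieve-vbrgg} part~3 together with (\ref{eq:yi-3}).

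The gap is in your close-pair estimate. Your bound
\[
(1+o(1))(1-\gamma)\,e^{-2\gamma\pi r^2(n)\,n}\,e^{\gamma^2 a\, n}=O(n^{\gamma-2})
\]
comes from writing the probability that a point $w$ in the lens connects to neither $u$ nor $v$ as $(1-\gamma)^2$. That is an \emph{independent-edge} computation, and it is precisely what fails in the vertex-based model: the events $\{f(Z_w,Z_u)=0\}$ and $\{f(Z_w,Z_v)=0\}$ share the random variable $Z_w$ and can be positively correlated. For instance, take $Z$ uniform on $\{0,1\}$ and $f(x,y)=\mathbf{1}\{x\ne y\}$, so $\gamma=1/2$; conditioned on $f(Z_u,Z_v)=0$ (i.e.\ $Z_u=Z_v$), a lens point $w$ misses both with probability $1-\gamma=1/2$, not $(1-\gamma)^2=1/4$. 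In such models your per-pair bound $O(n^{\gamma-2})$ is false and the ``$O(n\log n)$ pairs times $O(n^{\gamma-2})$'' argument does not go through.

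What the paper does instead is prove only the model-robust upper bound
\[
\pr\!\left(\bigcap_{i=1}^k B_i \,\middle|\, X_i=x_i\right)\le (1-\gamma\,\nu_r(x_1,\ldots,x_k))^{\,n-k},
\]
using merely that a point in $\nu_r(x_1,\ldots,x_k)$ misses \emph{all} of the $x_i$ with probability at most $1-\gamma$ (Claim~\ref{clm:sieve-vbrgg}, part~2). This bound is too weak to give $o(1)$ by a crude count-times-maximum argument; one must integrate it over positions, and that is exactly the content of Yi et~al.'s geometric lemma (\ref{eq:yi-2}). So the fix is not a different idea but a different bookkeeping: replace your $(1-\gamma)^2$-in-the-lens computation by the $(1-\gamma\nu_r)^{n-k}$ bound and invoke (\ref{eq:yi-2}) for the integration over $C_{k,m}$, $m<k$.
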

\begin{proof}
We note that the basic idea of the proof in the \rgg\ setting is
already present in Gupta and Kumar's
paper~\cite{gupta-chapter:1999}. Yi et. al. have put this proof on a
more rigorous basis and extended it to more general scenarios such as
\rgg s with nodes being ``active'' independently with probability $p$,
what they call Bernoulli nodes~\cite{yi-tcomm:2006}, and also to the
case where nodes and edges are both active or not
independently~\cite{yi-dmaa:2010}. Mao and Anderson have obtained the
same result using the Chen-Stein method~\cite{mao-tit:2013} but we
will follow the proof of~\cite{yi-tcomm:2006} since it is much more
direct and geometrical and hence easier to adapt for our purposes.

The proof of Yi et. al.~\cite{yi-tcomm:2006} is based on a
probabilistic version of Brun's sieve theorem (see
e.g.~\cite{alon:2000}) which is as follows: 
\begin{lemma}
\label{lem:brun}
Given a sequence $B_1, B_2, \ldots, B_n$ of events, define $Y_n$ to be
the (random) number of $B_i$ that hold. Now, if for any set $\{i_1,
\ldots i_k\}$ it is true that
\begin{equation}
\label{cond:brun1}
\pr\left(\cap_{j=1}^k B_j \right) = \pr\left(\cap_{j=1}^k B_{i_j}
\right),
\end{equation}
and there is a constant $\mu$ such that for any fixed $k$
\begin{equation*}
\label{cond:brun2}
n^k \pr\left(\cap_{j=1}^k B_j \right) \rightarrow \mu^k \mbox{ as }
n \rightarrow \infty,
\end{equation*}
then the sequence $\{Y\}_n$ converges in distribution to a Poisson
random variable with mean $\mu$.
\end{lemma}

As in the proof of Yi et. al.~\cite{yi-tcomm:2006}, we define the
event $B_i$ as the event that the $i$-th vertex of
$\vbrgg(n,r(n),\gamma)$, denoted $X_i$, is isolated. In this setting
it is clear that the condition (\ref{cond:brun1} holds because for any
$k$ the joint probability of any $k$ vertices being isolated is the
same as that of any other $k$ vertices due to the fact that all the
points are placed uniformly at random in $B(1)$ independent of each
other. Hence to prove Proposition~\ref{prp:necessary-poisson} using
Lemma~\ref{lem:brun}, we only need to show that
\begin{equation}
\label{cond:sieve}
n^k \pr\left(\cap_{j=1}^k B_j \right) \rightarrow e^{-ck} \mbox{ as }
n \rightarrow \infty, \mbox{ for all } k \geq 1.
\end{equation}

We reproduce some necessary notation from Yi
et. al.~\cite{yi-tcomm:2006}.  Given a finite set of points $x_1,
\ldots, x_k$ from $B(1)$ (i.e. the unit disk centred at the origin),
$G_r(x_1,\ldots,x_k)$ is the graph formed by placing an edge between
each pair of points that is at a distance of at most $r$. $C_{k,m}$ is
defined as the set of $k$-tuples $(x_1,\ldots,x_k) \in B(1)^k$ such
that $G_{2r}(x_1,\ldots,x_k)$ has exactly $m$ connected components. We
note that the set $C_{k,k}$ consists of those tuples of $k$ points
which have the property that a disk of radius $r$ around each of the
points contains none of the other points of the tuple. For a set of
points $S \subseteq B(1)$, Yi et. al. denote by $\nu_r(S)$ the area of
the union of the $r$-radius disks centred at the points of $S$
intersected with $B(1)$ i.e. the Lebesgue measure of the set of points
from $B(1)$ that are at most distance $r$ from one of the points of
$S$.

When $r$ satisfies~(\ref{eq:yi}), Yi et. al.~\cite{yi-tcomm:2006}
prove the following geometric properties:
\begin{equation}
\label{eq:yi-1}
n \int_{B(1)} (1 - \gamma \nu_r(x))^{n-1} dx  \rightarrow  e^{-c}
\mbox{ as } n \rightarrow \infty.
\end{equation}
\begin{equation}
\label{eq:yi-2}
n^k \int_{C_{k,m}} (1 - \gamma \nu_r(x_1,\ldots,x_k))^{n-k}
\prod_{i=1}^k dx_i  \rightarrow  0 \mbox{ as } n \rightarrow \infty,
\mbox{ for } k \geq 2, 1 \leq m < k.
\end{equation}
\begin{equation}
\label{eq:yi-3}
n^k \int_{C_{k,k}} (1 - \gamma \nu_r(x_1,\ldots,x_k))^{n-k}
\prod_{i=1}^k dx_i  \rightarrow  e^{-kc} \mbox{ as } n \rightarrow
\infty, \mbox{ for } k \geq 2.
\end{equation}
Of these three properties, (\ref{eq:yi-1}) was previously demonstrated
in~\cite{gupta-chapter:1999}. To be able to use these properties to
prove that (\ref{cond:sieve}) holds in our case  we
need to show certain properties of the probabilities of the events
$B_i$. We state these as a claim.

\begin{claim}
\label{clm:sieve-vbrgg}
\begin{enumerate}
\item For any  $x \in B(1)$, 
\begin{equation}
\label{eq:sieve-prob-1}
 \pr(B_1 \mid X_1 = x) = (1 - \gamma \nu_r(x))^{n-1},
\end{equation}
\item For any  $k \geq 2$ and $(x_1, \ldots, x_k) \in B(1)^k$,
\begin{equation}
\label{eq:sieve-prob-2}
\pr\left(\bigcap_{i=1}^k B_i \mid X_i = x_i, 1 \leq i \leq k\right) \leq (1 -
\gamma \nu(x_1, \ldots, x_k))^{n-k}.
\end{equation}
\item The equality in (\ref{eq:sieve-prob-2}) is achieved for
$(x_1,\ldots,x_k) \in C_{k,k}$.
\end{enumerate}
\end{claim}
\begin{proof}
We have already demonstrated that (\ref{eq:sieve-prob-1}) holds for
$\vbrgg(n,r(n),\gamma)$ in the proof of Lemma~\ref{lem:vb-poisson}
(see (\ref{eq:vb-poisson})). So we move to (\ref{eq:sieve-prob-2}).

Let us abuse  notation slightly and use $\nu_r(x_1,\ldots,x_k)$ to
refer to the region that lies within distance $r$ of any point of
$x_1, \ldots, x_k$ as well as the area of the region.
Clearly if the event $\{\bigcap_{i=1}^k B_i \mid X_i = x_i, 1 \leq i
\leq k\}$ occurs then it must be the case that even if any of the $n -
k$ nodes $X_{k+1}, \ldots, X_n$ lies in $\nu_r(x_1, \ldots, x_k)$  it must not be connected to any point in that
subset. If we assume that a point $X_l$ lies within the
$\nu_r(x_1,\ldots,x_k)$, it must lie with distance $r$ of at least one
of the points of $x_1, \ldots, x_k$ and at most $k$ of them. Since
each of these $k$ points choose $Z_{x_i}, 1 \leq i \leq k$
independently, the probability of $X_l$ being not connected to each
one of them is at most $(1-\gamma)$ and at least $(1 - \gamma)^k$. The
upper bound is relevant to us here. If we say that some $j$ of the
points $X_{k+1}, \ldots, X_n$ lie within $\nu_r(x_1,\ldots,x_k)$,
since each of them choose their $Z$ independently, the probability
that all $j$ of them are not connected to any of $x_1, \ldots, x_k$ is
at most $(1-\gamma)^j$. Hence we get:
\begin{equation}
\label{eq:sieve-prob-helper}
\pr\left(\bigcap_{i=1}^k B_i \mid X_i = x_i, 1 \leq i \leq k\right)
\leq \sum_{j = 0}^{n-k} {n - k \choose j} (1 - \nu_r(x_1, \ldots,
x_k))^{n-k-j} \nu_r(x_1,\ldots,x_k)^j (1 - \gamma)^j.
\end{equation}
The RHS simplifies to give us (\ref{eq:sieve-prob-2}).

To see that the equality holds for the case where $(x_1,\ldots,x_k)
\in C_{k,k}$ we note that for any $x_i, 1 \leq i \leq k$, it cannot be
connected to any of the $x_j,  1 \leq j \leq k, i\ne j$ since by the
definition of $C_{k,k}$ the distance between any pair of these points
is at least $2r$ which is more than $r$. Also, since, by the
definition of $C_{k,k}$ these points are spaced $2r$ apart, the disc
of radius $r$ around each of these points overlaps with none of the
other discs. Hence, any point of $X_{k+1}, \ldots, X_n$ that lands in
$\nu_r(x_1,\ldots,x_k)$ is within distance $r$ of exactly one point of
$x_1, \ldots, x_k$ and hence has probability exactly $(1-\gamma)$ of
not being connected to it. From this argument we can see that the two
upper bound approximations we used to calculate the RHS of
(\ref{eq:sieve-prob-helper}) actually hold exactly whenever
$(x_1,\ldots,x_k) \in C_{k,k}$.
\end{proof}

Now we see that (\ref{cond:sieve}) is satisfied for the case when
$k=1$ by combining (\ref{eq:yi-1}) and (\ref{eq:sieve-prob-1}). For
the case when $k \geq 2$, (\ref{eq:yi-2}) and (\ref{eq:sieve-prob-2})
imply that when $(x_1, \ldots,x_k) \notin C_{k,k}$ $n^k
\pr\left(\cap_{j=1}^k B_j \right) \rightarrow 0$ as $n \rightarrow
\infty$, but when $(x_1, \ldots,x_k) \in C_{k,k}$, (\ref{eq:yi-3}) and
part 3 of Claim~\ref{clm:sieve-vbrgg} give us that $\pr\left(\cap_{j=1}^k B_j
\right) \rightarrow e^{-kc}$. This concludes the proof of
Proposition~\ref{prp:necessary-poisson}. \sq
\end{proof}

As argued above Proposition~\ref{prp:necessary-poisson} implies that
if $c(n)$ does not grow to $\infty$ as $n \rightarrow 0$, then there
is a non-zero probability of $\vbrgg(n,r(n),\gamma)$ being
disconnected as $n \rightarrow \infty$ i.e. the condition that $c(n)
\rightarrow \infty$ as $n \rightarrow \infty$ is necessary as well as
sufficient. \sq
\end{proof}

\section{Simulation results for the Contiguous and Random Duty-Cycle Schemes}
\label{sec:simulation}

\ifrevision
{\bf 
\fi
In this section we present the results of an extensive simulation
study. The main aim of this study is to support the theoretical
results we have presented so far. Hence we present simulations that
show that the weak radius presented in Theorem~\ref{thm:conn-weak} is
indeed sufficient. We also show through a series of experiments that
the strong radius presented in Theorem~\ref{thm:vb-gamma} is optimal
in the sense that it is both necessary and sufficient. 

In order to demonstrate these results through simulation we need
concrete duty-cycling schemes. For this purpose we use the {\em
  contiguous} and the {\em random selection} duty-cycle schemes
(\dcc\ and \dcr,\ resp.), introduced at the end of
Section~\ref{sec:model}.  In the former, the sensor selects a slot in
the cycle period and stays awake for $d$ consecutive time slots.  In
the latter, the sensor selects at random $d$ awake slots during the
cycle period. And while the prime focus of the simulations is to
validate our theoretical results we also study the sensitivity of
\dcc\ and \dcr\ to the duty-cycling parameters $\delta$ and $L$, as
well as to the number of nodes $n$.
\ifrevision 
{\em  [revision id \#8]}} 
\fi

\subsection{The two duty-cycle schemes}
\label{sec:simulation:discussion}

Before describing the experiments, observe that the contiguous duty
cycle scheme satisfies the reachability condition given
in~(\ref{cond:reachability}), which is an essential condition for the
optimal connectivity result (i.e. Theorem~\ref{thm:vb-gamma}) to apply.
To see this note that given two nodes $u$ and $v$ whose duty cycles
begin at $i_u$ and $i_v$, it is possible to find a chain of
overlapping awake periods beginning at $i_u+m(d-1)$, $1 \leq m <
\lceil (i_v-i_u)/(d-1) \rceil$. And since the probability of picking
an awake cycle beginning at $i_u + m(d-1)$ for the relevant value of
$m$ is greater than 0 (in fact $1/L$), the reachability condition is
easily satisfied.  

In regard to the random selection scheme,
whenever $d \geq 2$ the reachability condition can be easily achieved
for any two non-overlapping awake periods. In fact, we can always pick
a third awake period that overlaps with both with non-zero
probability. Hence, Theorem~\ref{thm:vb-gamma} can be applied here.

\ifrevision
{\bf} 
\fi
The contiguous scheme is very natural and has several advantages.
First it is easy and cheap to hard code in each sensor. Namely, the
scheme for each sensor $u$ is uniquely defined by the three constants
$d$, $L$ and $i_u$, consuming a negligibly amount of memory. Only
$\log L$ random bits are needed to generate $i_u$. Moreover, since the
$d$-awake periods are consecutive, the scheme can tolerate 
clock drift. To clarify this let us take a concrete example. Say $d =
5$ and $L = 100$ and there are nodes $u$ and $v$ such that $i_u = 10$
and $i_v = 12$ i.e. they have an overlap of 3 slots. Even if the
clock of $u$ drifts backward by one whole slot and the clock of $v$
drifts forward by 1 whole slot, they still overlap for 1 slot, which
is all they need to communicate. Hence the contiguous scheme is robust
to problems in synchronization. 
Finally, the energy spent in commuting between the sleep
and the active mode is minimal since only two state-transitions occur
in each cycle.

The random selection scheme is less thrifty than the
contiguous one. Firstly, $\theta(d \log L)$ bits are needed to
generate a schedule for each sensor.  Secondly, each node needs to
memorize $d$ constants in addition to the values of $d$ and $L$ and it
incurs in more than $2$ mode-transitions in each cycle, hence spending
more energy. However, as we will see, the extra expense incurred is
justified since its optimal radius for connectivity is definitely
smaller than the one required by the contiguous model. This highlights
the main contribution of our paper: Theorem~\ref{thm:conn-weak}
conjectured by Das et. al.~\cite{das-icuimc:2012} gives a connectivity
condition which is same for both the random and contiguous schemes
although the random scheme is much more complex than the contiguous
scheme, whereas our optimal connectivity result,
Theorem~\ref{thm:vb-gamma}, establishes that a much lower radius is
needed to achieve connectivity in the random scheme as we will see in
Corollaries~\ref{cor:vb-gamma-DC-C-WSN} and
\ref{cor:vb-gamma-DC-R-WSN} below.

\subsection{Experimental setup}
\label{sec:simulation:setup}
In the rest of this section, we  test the weak and the optimal radius by
measuring the percentage of sensors belonging to the largest component in \dcc\
and \dcr\  for various values of $n$, $L$ and $\delta$.
For both duty-cycle schemes, we discuss the benefit 
of the optimal radius over the weak radius as well as
the difference between the optimal radius and the lowest possible radius,
i.e., the RGG radius for connectivity. 
%
%

Our experiments are performed on a workstation equipped with a 4 GHz
Intel processor and 4 GB of main memory.  We implemented the algorithm
in C++. We generates $n$ sensors placed uniformly at random
in a unit disk, with $10^5 \le n \le 10^6$.
To generate uniformly
distributed points we place the points one at a time. For each point
we first choose an angle $\theta$ at random.  Then, for a fixed small
value $\epsilon \theta$, we choose a point uniformly in the triangle
which has one point at the origin $O=(0,0)$, one point at $A=(cos
\theta, sin \theta)$ and one point at $B=(cos (\theta+ \epsilon
\theta), sin (\theta + \epsilon \theta))$.
\ifrevisiontwo {\bf\fi Note that, since we generate $n$ sensors in a unit disk, the
  expected number of points that reside in a circle of area $\pi r^2$
  is $\pi r^2 n$.\ifrevisiontwo }\fi


For efficient processing, we store the sensors (i.e., points) in a
$kd$-tree, a spatial data structure that recursively subdivides the
unit disk into boxes till each individual box at the leaf level
contains at most a predetermined number of points. The points
contained in each box were stored in a file. The number of points in a
leaf box is determined such that at least two files could be
simultaneously stored in the main memory.

In regard to the duty-cycle parameters, we select $\delta$ varying
between $0.05$ and $0.5$ since by Fact~\ref{fct:deltahalf}, the model
reduces to \rgg\ for $\delta > 0.5$. Then either we calculate
$d=\lceil \delta L \rceil$ by fixing $L=100$, or we derive $L=\lceil
d/\delta \rceil$ by fixing $d=5$.  In addition to $d$, $L$, and
$r(n)$, some information is stored in the file for each sensor,
depending on the duty-cycle scheme.  For \dcc,\ we store for each
sensor its start time, generated at random.  For \dcr,\ we store a
bitmap of length $L$.  This bitmap will have $d$ of its bits set to
$1$, which implies that the sensor is awake in that time slot, and the
rest of them set to $0$.  To add an edge between two sensors, we check
if they lie within  the connectivity radius and if they share a common
awake slot.
To find the connected component, we make use of the Union Find
algorithm.  We initially assign unique flags to each point (i.e., each
sensor belongs to an isolated component). Initially these flags point
to themselves. As and when we get an edge between two points, we
combine the connected components of both the points by pointing the
head flag of one component to the head flag of the other component.
When all edges have been added, the number of the distinct connected
components in the graph and their sizes are traced.  Each experiment
is repeated at least five times and the average value and standard
deviation are reported. The $kd$-tree spatial data structure was used
to process pairs of points in $\theta(n)$ time, and the connected
components were created in $\theta(n\log n)$ time. As a result testing
connectivity for a random graph model with $10^6$ nodes took
approximately 20 minutes on the hardware mentioned above.

\subsection{Weak connectivity condition}
\label{sec:simulation:weak}
Let the {\em weak radius} be the radius $r(n)$ that satisfies
Theorem~\ref{thm:conn-weak}.  Since both contiguous and random selection schemes
satisfies the uniform time coverage situation with $\delta= \lceil d/L
\rceil$, the weak radius for both schemes yields:
\begin{equation}
 \label{eq:radius-weak-formula}
r(n) = \sqrt{\frac{\log n +c(n)}{\pi n \delta}}
\end{equation}
with $c(n) \rightarrow +\infty$ as $n \rightarrow +\infty$. 
In our experiments, we set $c(n)$ to $\log \log n$ if not otherwise stated.

We find that the percentage of connectivity achieved by adopting the
weak radius is extremely high. In all our experiments for \dcr s using
the weak radius all the nodes are part of the largest component (i.e.,
a percentage of connectivity equal to 100\% is reached).  For \dcc s,
as depicted in Figure~\ref{fig:weak-condition}, the percentage of
sensors that belong to the largest connected component is always above
$99\%$.  This result holds already for $n=2*10^5$ although it presents
a slightly larger standard deviation than for the highest values of
$n$ (see Figure~\ref{fig:weak-condition-DC-C-WSN-varN}) and it is true
for any the value of $\delta$ (see
Figure~\ref{fig:weak-condition-DC-C-WSN-varDELTA}). These results,
which converge to $1$ much more rapidly than those for all awake RGG reported in Figure~\ref{fig:rgg-radius},
make us feel that the weak radius is larger than required and are the first
motivation for our trying to find a better radius.

\begin{figure*}[t]
	\centering
	\subfloat[\label{fig:weak-condition-DC-C-WSN-varN}]{
		\includegraphics[scale=0.38]{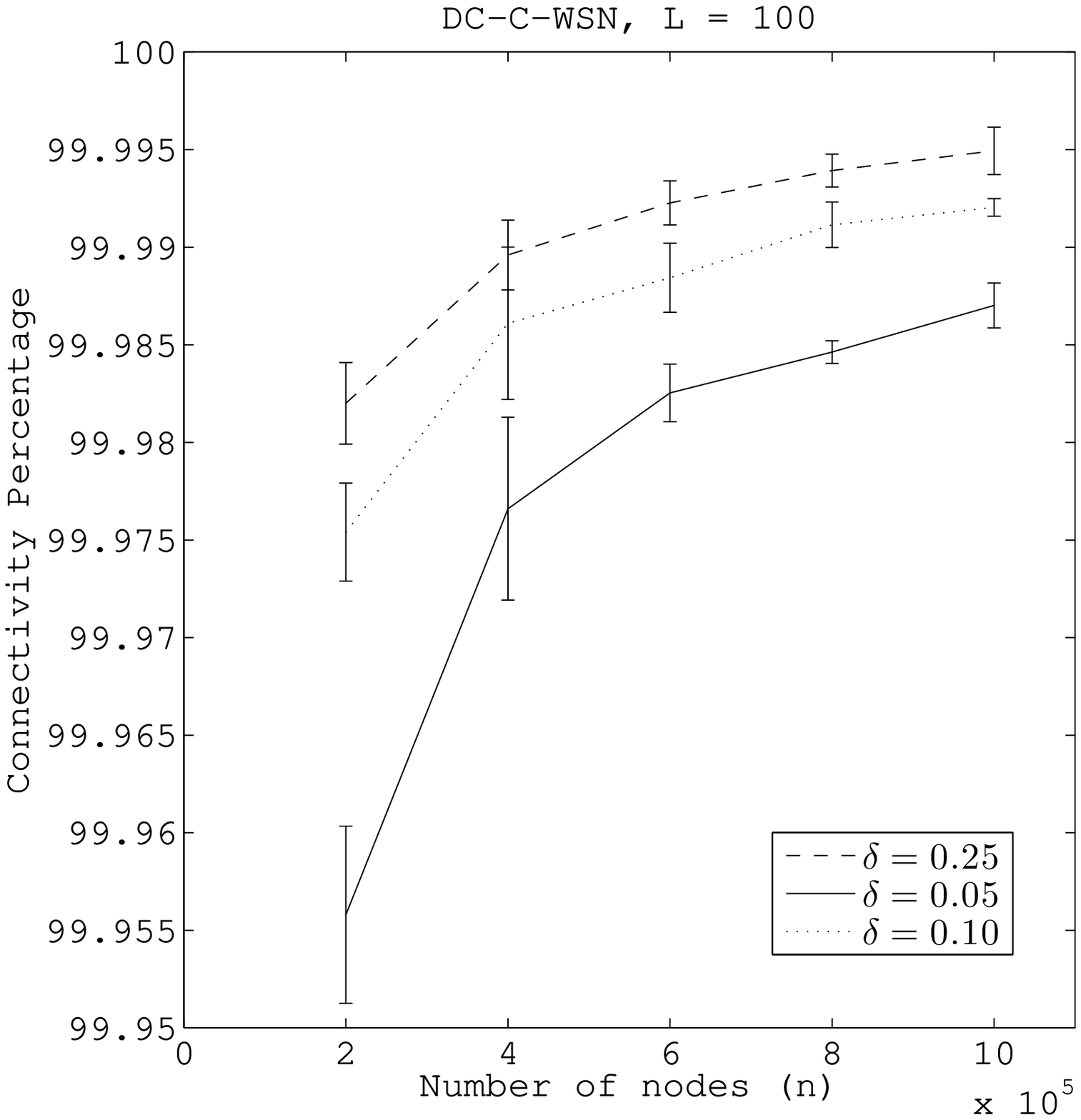}
	}
	\subfloat[\label{fig:weak-condition-DC-C-WSN-varDELTA}]{
		\includegraphics[scale=0.38]{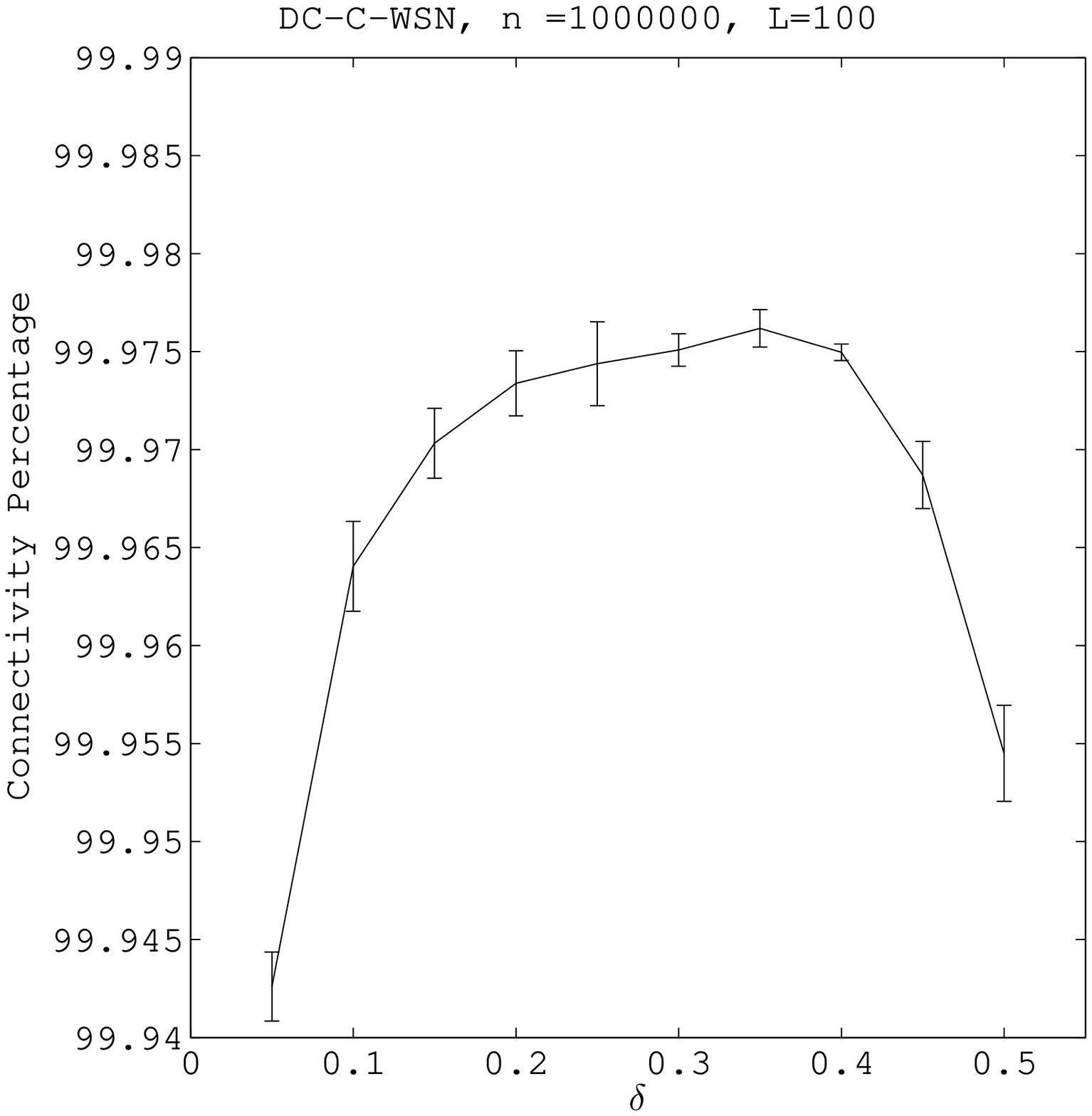}
	}
	\caption{ Weak radius: Percentage of sensors in the largest connected component in DC-C-WSN  when \protect \subref{fig:weak-condition-DC-C-WSN-varN} $n$ varies, or \protect  \subref{fig:weak-condition-DC-C-WSN-varDELTA} $\delta$ varies. The vertical bars represent the standard deviation.} \label{fig:weak-condition}
\end{figure*}
%

Comparing this percentage with the percentage of sensors in the largest component when the radius of connectivity is the \rgg\ radius in regular (i.e. all awake) wireless sensor networks
 (a radius that is $\frac{1}{\sqrt{\delta}}$ lower than the weak radius) we find that at the weak radius this percentage is from 5\% to 10\% higher
(see Figure~\ref{fig:rgg-radius}).
In other words the weak radius is not a very useful theoretical result
since we get a connectivity that is not that much higher than the
connectivity at the \rgg\ radius, but we have to transmit
$1/\sqrt{\delta}$ times the distance.  Indeed, on average since the
power spent by each node in transmission is proportional to the square
of the radius and since $n \delta$ sensors are awake in one time slot,
the overall energy spent is almost the same as in regular WSNs,
thereby negating the effect of duty-cycling.

This duty-cycling energy inefficiency is the second motivation that
drove us to find a better theoretical result than that provided by the
weak radius result of~\cite{das-icuimc:2012}. We now move on to
presenting that better result.
\begin{figure*}[t]\centering{
		\includegraphics[scale=0.38]{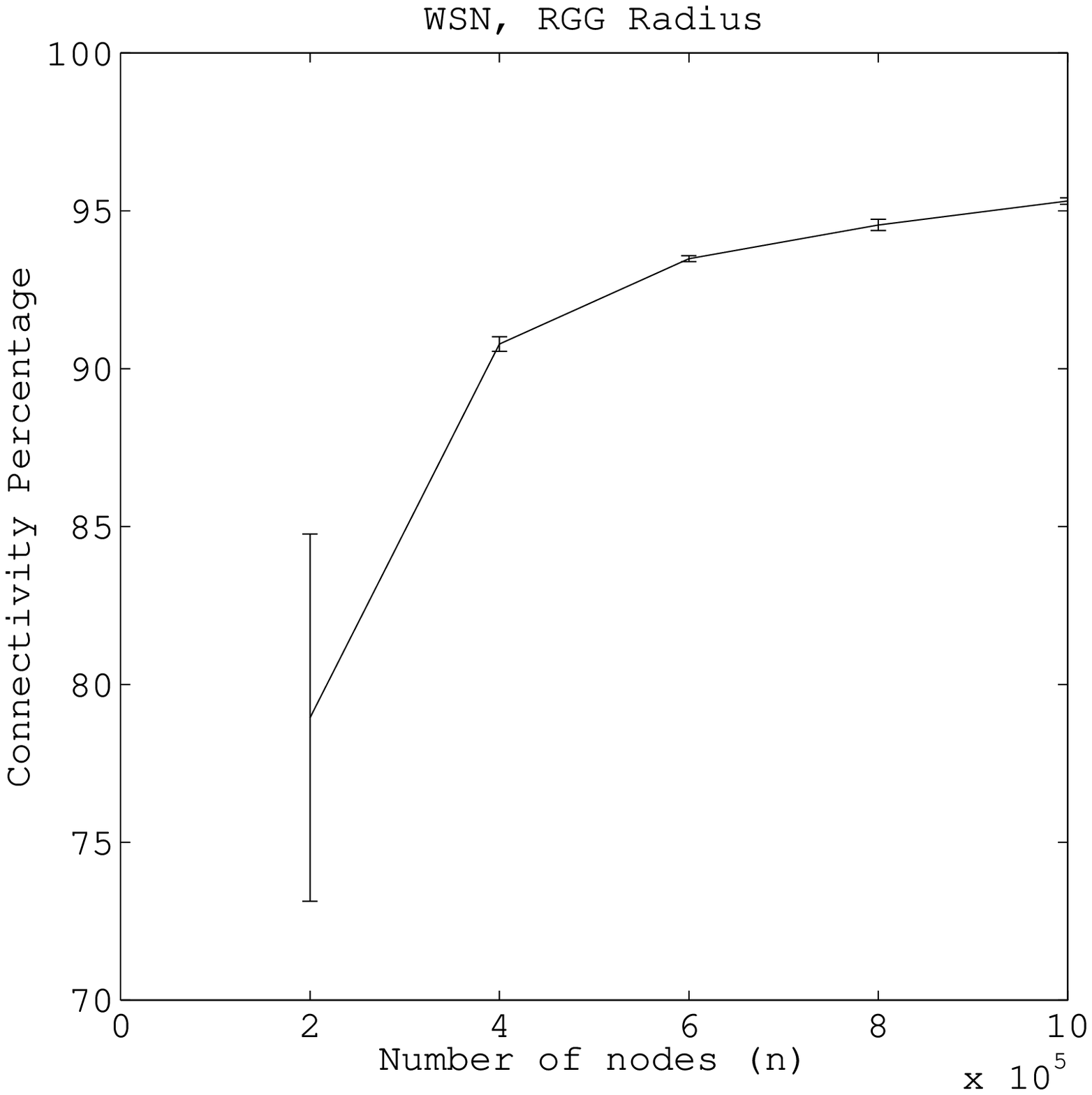}
	}
	\caption{Percentage of sensors in the largest connected component for all awake WSNs adopting the lowest possible radius for connectivity,
i.e., the RGG radius. }\label{fig:rgg-radius}
\end{figure*}

\subsection{Optimal connectivity condition}
\label{sec:simulation:optimal}
Let now concentrate on the strong connectivity result that leads to the {\em optimal} radius, that is  the radius $r(n)$ that satisfies
Theorem~\ref{thm:vb-gamma}. 
To compute the optimal radius
for the contiguous and random duty-cycle scheme, we need to compute
the probability $\gamma$ for each of them in the \vbrgg\ model.

In \dcc s, a sensor $v$ will share at least one slot with node $u$ if
$v$ chooses as its starting point any of the slots $i_u-(d-1) \bmod L,
\ldots, i_u, \ldots i_u+(d-1) \bmod L$. Hence, two sensors have
probability $\gamma=\frac{2d-1}{L}$ of sharing a slot.  Now, let the
{\em optimal \dcc\ radius} be the radius $r(n)$ that satisfies
Theorem~\ref{thm:vb-gamma} when $\gamma=\frac{2d-1}{L}$. We have:
\begin{corollary}
\label{cor:vb-gamma-DC-C-WSN}
When $\gamma=\frac{2d-1}{L}<1$,
$\pr(\dcc\ $ $\mbox{ is connected})\rightarrow 1$ as $n
\rightarrow \infty$ if and only if
\begin{equation}
\label{eq:vb-radius-DC-C}
 r(n) = \sqrt{\frac{\log n + c(n)}{(2\delta-1/L) \pi n}},
\end{equation}
where $c = \lim_{n \rightarrow \infty} c(n) = \infty$.
\end{corollary}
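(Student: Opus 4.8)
The plan is to recognize Corollary~\ref{cor:vb-gamma-DC-C-WSN} as a direct specialization of Theorem~\ref{thm:vb-gamma} to the contiguous scheme, so that almost all the work reduces to (a) exhibiting \dcc\ as a concrete instance of the vertex-based random connection model and (b) computing the associated connection probability $\gamma$. Once these are in hand, the asymptotic statement follows by substitution and the biconditional transfers verbatim from the theorem.

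First I would cast \dcc\ into the framework of Section~\ref{sec:strong:vertex-based}. The natural choice is to let the per-vertex random variable $Z_u$ be the starting slot $i_u$, drawn uniformly from $\{0,1,\ldots,L-1\}$, and to let the connection function $f$ be the indicator that the two contiguous awake intervals $\{i_u,\ldots,i_u+d-1\}$ and $\{i_v,\ldots,i_v+d-1\}$, taken $\bmod\,L$, intersect. Since each sensor picks $i_u$ independently and the graph places an edge between $u,v$ exactly when $d(u,v)\le r$ and these intervals overlap, this matches the definition of $g$ in~(\ref{eq:def}) with $\{Z_u : u\in V\}$ an independent collection, and with $\gamma=\pr(f(Z_1,Z_2)=1)$. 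I would then verify the two structural hypotheses of Theorem~\ref{thm:vb-gamma}: non-triviality~(\ref{cond:non-triviality}) holds because $\gamma=(2d-1)/L>0$ and, by the hypothesis $\gamma<1$ of the corollary, $\gamma\in(0,1)$; finite reachability~(\ref{cond:finite-reachability}) is precisely the reachability property already established for the contiguous scheme in Section~\ref{sec:simulation:discussion}, since for any two starting slots one can build a chain of overlapping contiguous windows stepping by $d-1$, each occurring with positive probability $1/L$, so any two elements of the support are $k$-reachable for some finite $k$.

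The key computation is $\gamma$ itself. Fixing $u$'s starting slot $i_u$, the window of $v$ overlaps that of $u$ if and only if $v$'s starting slot lies in $\{i_u-(d-1),\ldots,i_u+(d-1)\}\pmod L$, which is $2d-1$ admissible choices out of $L$; hence $\gamma=(2d-1)/L$. Substituting this into the connectivity condition $\pi r(n)^2\gamma=(\log n+c(n))/n$ of Theorem~\ref{thm:vb-gamma}, and using $\delta=d/L$ so that $\gamma=2\delta-1/L$, I would solve for $r(n)$ to obtain exactly~(\ref{eq:vb-radius-DC-C}). The statement ``connected with probability tending to $1$ iff this radius holds with $c(n)\to\infty$'' then inherits directly from the theorem.

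I do not expect a genuine obstacle here, as the corollary is essentially an application of the already-proved Theorem~\ref{thm:vb-gamma}. The only points needing care are the overlap count: one must handle the wraparound $\bmod\,L$ correctly and avoid double-counting the case $i_v=i_u$, so that the tally is $2d-1$ rather than $2d$, and one must confirm that the modelling map genuinely yields independent vertex marks with a distance-gated, mark-determined edge rule, so that the theorem applies with no additional hypotheses.
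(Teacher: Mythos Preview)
Your proposal is correct and follows essentially the same approach as the paper: compute $\gamma=(2d-1)/L$ from the overlap count of contiguous windows and substitute into Theorem~\ref{thm:vb-gamma}. If anything, you are more careful than the paper, which simply states the overlap computation and invokes the theorem without explicitly re-verifying non-triviality and finite reachability for \dcc.
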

Note that if $\gamma = \frac{2d-1}{L} > 1$, the radius in
(\ref{eq:vb-radius-DC-C}) goes below the \rgg\ radius
and is no longer meaningful.
However, $\gamma = \frac{2d-1}{L} > 1$ implies  $\delta > 1/2$, and by Fact~\ref{fct:deltahalf}  
the \rgg\ radius given in~(\ref{eq:radius}) guarantees the connectivity property.

In \dcr s, when a node
$u$ has chosen $d$ slots, another node $v$ has $d$ possibilities to choose
one slot in common with $u$ and the probability of doing that is at least $\delta$ each time.
Hence, the probability that two sensors share one slot is
$\gamma > \left ( 1-(1-\delta)^d\right )$.
Therefore,  by Theorem~\ref{thm:vb-gamma}, the {\em optimal \dcr\  radius} must satisfy:
\begin{corollary}
\label{cor:vb-gamma-DC-R-WSN}
$\pr(\dcr\ \mbox{ is connected})\rightarrow 1$ as $n
\rightarrow \infty$ if and only if
\begin{equation}
\label{eq:vb-radius-DC-R}
 r(n) = \sqrt{\frac{\log n + c(n)}{(1-(1-\delta)^d) \pi n}},
\end{equation}
where $c = \lim_{n \rightarrow \infty} c(n) = \infty$.
\end{corollary}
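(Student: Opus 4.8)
The plan is to recognize the \dcr\ as a concrete instance of the vertex-based random connection model and then apply Theorem~\ref{thm:vb-gamma} essentially verbatim. First I would set the model parameters: associate with each node $u$ the random variable $Z_u = A_u$, its randomly chosen awake set, so that the domain is $\CL = \{A \subseteq \{0,1,\ldots,L-1\} : |A| = d\}$, and take the connection function $f(A,B) = 1$ exactly when $A \cap B \neq \emptyset$. Since in the random selection scheme each node picks its awake set uniformly and independently from $\CL$, the family $\{A_u : u \in V\}$ is independent, and an edge $(u,v)$ is present precisely when $d(u,v) \leq r(n)$ and $A_u \cap A_v \neq \emptyset$. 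This is exactly $\vbrgg(n,r,\gamma)$ with the $Z$ and $f$ just described.

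The second step is to identify $\gamma = \pr(f(Z_1,Z_2)=1)$. For two independent uniform $d$-subsets of an $L$-element set the disjointness probability is $\binom{L-d}{d}/\binom{L}{d}$ (and $0$ once $2d > L$), so the exact connection probability is
\[
\gamma = 1 - \frac{\binom{L-d}{d}}{\binom{L}{d}}.
\]
I would then observe that $\gamma \geq 1 - (1-\delta)^d$, which is the quantity appearing in~(\ref{eq:vb-radius-DC-R}): with $A_u$ fixed, each element selected for $A_v$ lands outside $A_u$ with probability $(L-d)/L = 1-\delta$, and sampling the $d$ elements without replacement can only lower the disjointness probability below the with-replacement value $(1-\delta)^d$.

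Before invoking the theorem I would verify the two structural hypotheses of the vertex-based model for this instantiation. Non-triviality~(\ref{cond:non-triviality}) asks that $0 < \gamma < 1$: the bound above gives $\gamma > 0$, and for the relevant regime $\delta \leq 1/2$ we have $2d \leq L$, so disjoint $d$-subsets exist and $\gamma < 1$ (the complementary case $\delta > 1/2$ is already covered by Fact~\ref{fct:deltahalf}). Finite reachability~(\ref{cond:finite-reachability}) holds because the support of the selection distribution is all of $\CL$, and, by the same argument already given for the random scheme earlier in this section, for $d \geq 2$ any two disjoint awake sets are joined through a third set meeting both, so every pair in the support is $2$-reachable.

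With the model identified and both conditions checked, Theorem~\ref{thm:vb-gamma} applies directly and yields connectivity with probability tending to $1$ if and only if $\pi r(n)^2 \gamma = (\log n + c(n))/n$ with $c(n) \to \infty$; solving for $r(n)$ gives~(\ref{eq:vb-radius-DC-R}). The proof therefore carries no real technical obstacle, the entire burden resting on Theorem~\ref{thm:vb-gamma}. The only point requiring care is the use of the bound $\gamma \geq 1-(1-\delta)^d$ in place of the exact value: since a smaller $\gamma$ produces a larger radius, substituting $1-(1-\delta)^d$ preserves the sufficiency direction exactly, whereas a fully tight ``if and only if'' should be phrased with the exact $\gamma = 1-\binom{L-d}{d}/\binom{L}{d}$.
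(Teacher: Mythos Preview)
Your proposal is correct and follows essentially the same approach as the paper: the paper's justification, given in the paragraph immediately preceding the corollary, likewise argues that $\gamma > 1-(1-\delta)^d$ for the random selection scheme and then invokes Theorem~\ref{thm:vb-gamma}. You are in fact more thorough than the paper---you explicitly verify the non-triviality and finite-reachability hypotheses and correctly flag that replacing the exact $\gamma = 1 - \binom{L-d}{d}/\binom{L}{d}$ by the lower bound $1-(1-\delta)^d$ preserves sufficiency but not the full ``if and only if'' as literally stated.
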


We first study the size of the largest component under the optimal
radius. In Figure~\ref{fig:strong-n-condition} we plot the the
percentage of sensors that belong to the largest component on the
$y$-axis versus $n$ on the $x$-axis when $L=100$, $10^5 \le n \le
10^6$, $\delta=0.05, 0.15$ and $0.25$. For both schemes, fixing a
value of $\delta$, the size of the largest connected component
increases when $n$ increases. In both \dcr\ and \dcc\ we note that the
percentage of nodes in the largest component decreases as $\delta$
increases, which is expected since $\gamma$ increases as $\delta$
increases and so the optimal radius decreases (see
Table~\ref{table:radius-RGG} for more details).  Nonetheless, for all
the experiments on \dcc s, more than 90\% of the sensors belong to the
largest component.  This is also true for \dcr s for small values of
$\delta$ and even for larger values of $\delta$ for large enough $n$
(see Figure~\ref{fig:strong-n-condition-DC-R-WSN}).  The substantial
drop off in the size of the largest component of \dcr\ for $\delta =
0.25$ is due to the fact that $\gamma$ rises very quickly to 1 in this
case and so the optimal radius of DC-R-WSN
falls quickly down to the \rgg\ radius (see
Table~\ref{table:radius-RGG}).  


\begin{figure*}[h]
	\centering
	\subfloat[\label{fig:strong-n-condition-DC-C-WSN}]{
		\includegraphics[scale=0.38]{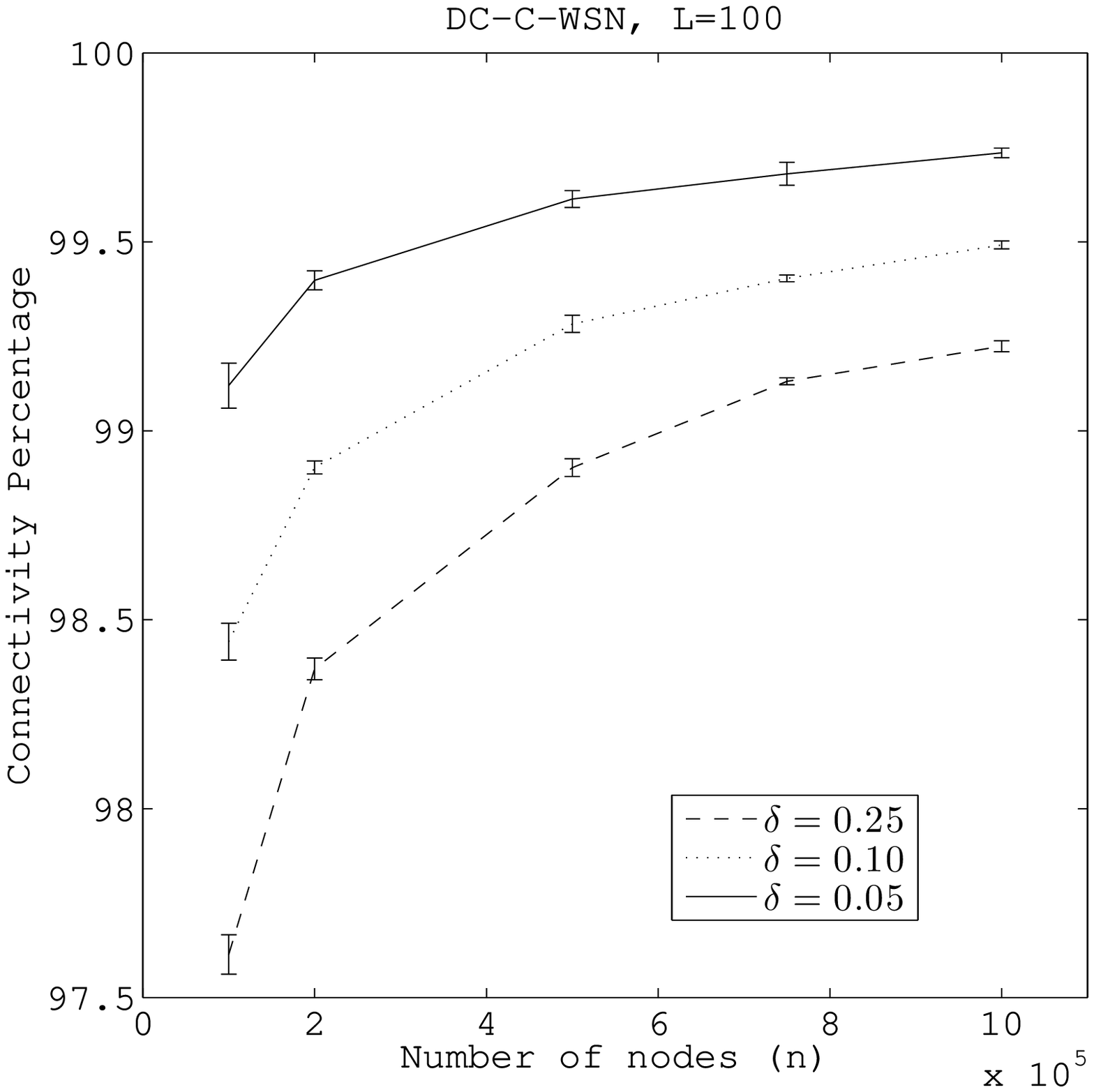}
	}
	\subfloat[\label{fig:strong-n-condition-DC-R-WSN}]{
		\includegraphics[scale=0.38]{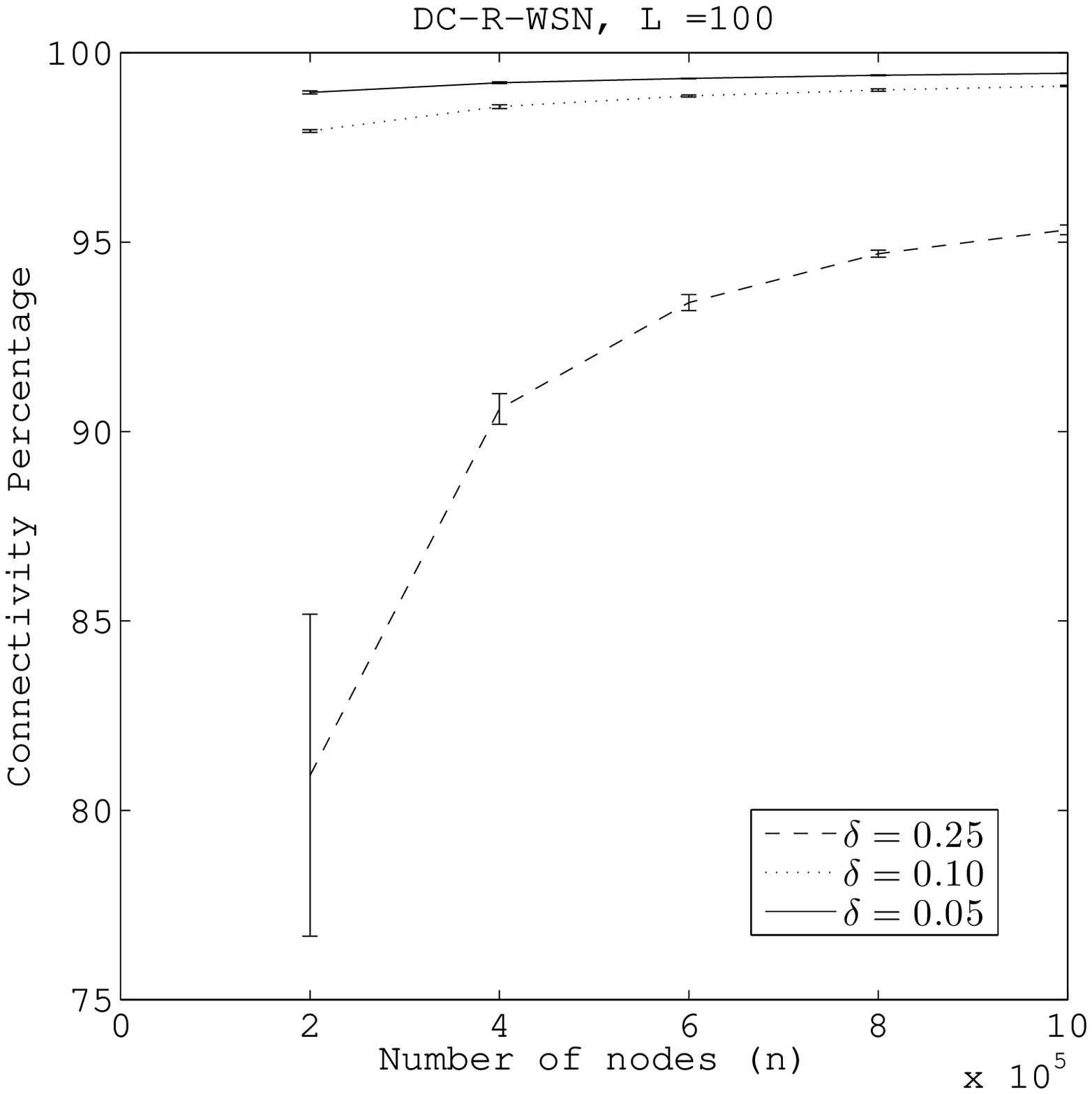}
	}
	\caption{Optimal radius: Percentage of sensors in the largest connected component for different values of $\delta$ when $n$ varies \protect \subref{fig:strong-n-condition-DC-C-WSN} in DC-C-WSN \protect \subref{fig:strong-n-condition-DC-R-WSN}  in DC-R-WSN.}\label{fig:strong-n-condition}
\end{figure*}
\begin{table}[h]
\centering
\begin{tabular}{|l|c|c|c|c|}
\hline
 &  \multicolumn{2}{c|} {$L=200$} & \multicolumn{2}{c|} {$L=100$} \\ \hline
$\delta$ &  \dcc\  & \dcr & \dcc\ & \dcr\ \\ \hline
0.02&  1.322 & 1.970 &  1.224 & 1.407\\ \hline 
0.05&  1.378 & 2.832 &  1.341 & 2.127\\ \hline  
0.10&  1.396 & 2.963 &  1.378 & 2.552\\ \hline 
0.15&  1.402 & 2.572 &  1.390 & 2.466\\ \hline
0.20&  1.405 & 2.236 & 1.396 & 2.249\\ \hline
0.50&  1.410 & 1.414 & 1.407 & 1.414\\ \hline
\end{tabular}
\caption{The ratio between the weak and the optimal radius in \dcc\ and  \dcr\ for different $\delta$ and $L$.}
\label{table:radius-G-RGG}
\vspace{-.3cm}
\end{table}

We tabulate in Table~\ref{table:radius-G-RGG} the ratio of the weak
and optimal radii in both schemes to explain
Figures~\ref{fig:weak-condition} and ~\ref{fig:strong-n-condition}.
We note that whereas the weak radius is the same for both \dcc\ and
\dcr\ whenever $\delta$ is fixed, there is a radical difference in the
optimal radius. 
observe that as suggested by~(\ref{eq:vb-radius-DC-C}), for \dcc\ the
weak radius is approximately a $\sqrt{2}$ factor longer than the
optimal radius. Such a factor decreases when the cycle length $L$
decreases, but it remains always below $\sqrt{2}$.  For \dcr\ in contrast when $0.05 \le \delta \le 0.4$ and $d > 4$, the ratio
between the weak and the optimal radius
is always above
$\sqrt{2}$, implying that \dcr s require a smaller transmission radius
to be connected than the \dcc s. We also note that the ratio is
consistently higher for \dcr\ although as $\delta$ reaches 0.5 the
ratio becomes about the same since $\gamma$ reaches close to 1 for
both schemes.

\begin{figure*}
	\centering
	\subfloat[\label{fig:strong-delta-fixNL-connectivity-DC-C}]{
		\includegraphics[scale=0.35]{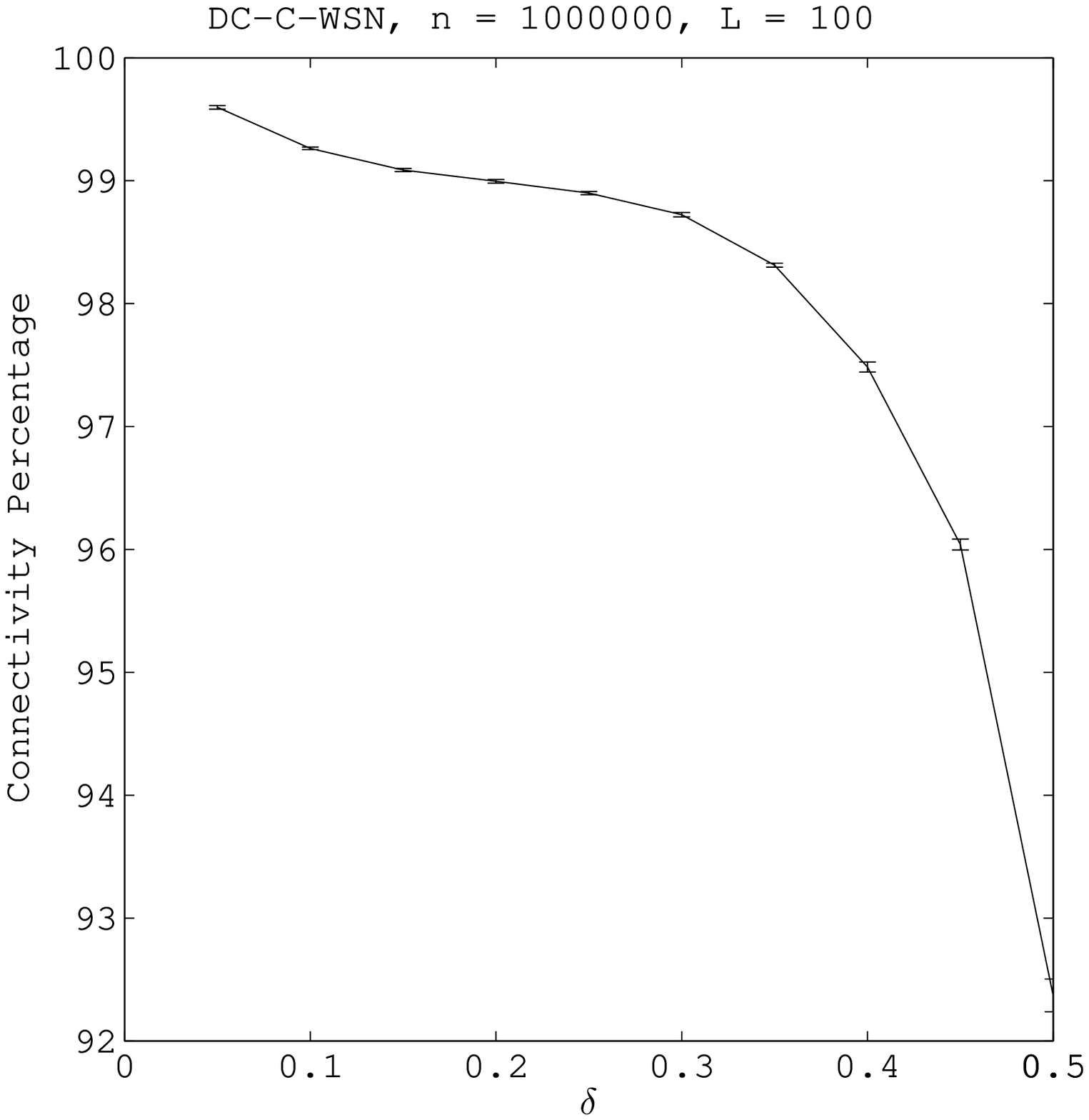}}
	\subfloat[\label{fig:strong-delta-fixNL-connectivity-DC-R}]{
		\includegraphics[scale=0.35]{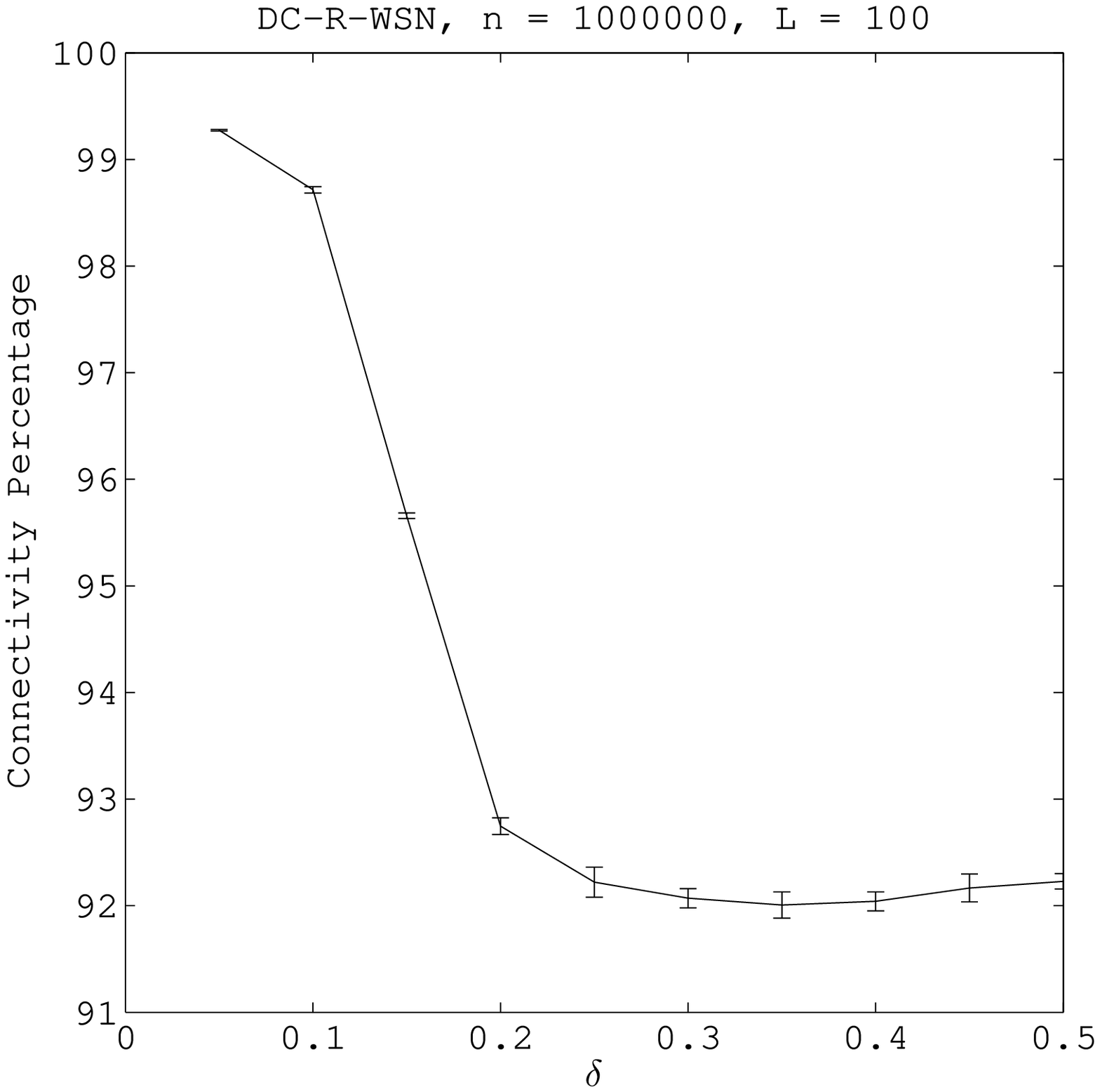}}
	\caption{Optimal radius: Percentage of sensors in the largest connected component when $\delta$ varies, $n$ and $L$ are fixed in \protect  \subref{fig:strong-delta-fixNL-connectivity-DC-C} \dcc s 
 \protect \subref{fig:strong-delta-fixNL-connectivity-DC-R} \dcr s.}\label{fig:strong-delta-fixNL-condition}
\end{figure*}

Figure~\ref{fig:strong-delta-fixNL-condition} depicts the connectivity
percentage of \dcc s and \dcr s under the optimal radius when
$n=10^6$, $\delta$ varies in $[0.1,0.5]$ and $L$ is fixed to $100$. We
see that as $\delta$ increases up to $0.15$, the optimal radius in
\dcr s decreases fast towards the \rgg\ radius, and so the
connectivity of \dcr s in
Figure~\ref{fig:strong-delta-fixNL-connectivity-DC-R} decreases. Past
this point, i.e. for 
larger values of $\delta$, the optimal radius is close to the \rgg\ 
radius and the connectivity in \dcr s remains stable and substantially
close to that of regular sensor networks.  In
Figure~\ref{fig:strong-delta-fixNL-connectivity-DC-C} the connectivity
remains stable and 5\% above that of regular sensors as expected since
the optimal radius is approximately $3/2$ times the \rgg\ radius and
decreases slowly.  Clearly, it appears that, for both schemes, the
connectivity performance decrease happens when the radius approaches
the \rgg\ radius model, but the performance is never worse than that
of regular networks.

This observation is backed up by Table~\ref{table:radius-RGG} where
the weak and the optimal radius are compared with the minimum possible
radius given by the \rgg\ model for two different values of $L$.
\begin{table}[h]
\centering
\begin{tabular}{|l|c|c|c|c|}\hline
 &  \multicolumn{2}{c|} {$L=200$} & \multicolumn{2}{c|} {$L=100$} \\ \hline
$\delta$ & \dcc\ & \dcr\ & \dcc\ & \dcr\  \\ \hline
0.02& 5.345 &  3.589  & 5.773 & 5.025\\ \hline
0.05& 3.244 &  1.578 & 3.333 & 2.102\\ \hline
0.10& 2.264 & 1.067 &2.294 & 1.239\\ \hline
0.15& 1.841  & 1.003 & 1.857 & 1.046\\ \hline
0.20& 1.591 & 1.000 & 1.601 & 1.005\\ \hline
0.50& 1.002  & 1 & 1.005 & 1.005\\ \hline
\end{tabular}
\caption{The ratio between the optimal radius and the minimum possible transmission radius (i.e, the \rgg\ radius).}
\label{table:radius-RGG}
\vspace{-.3cm}
\end{table}

We also studied the influence of $L$ on the percentage of connectivity
in Figure~\ref{fig:strong-delta-fixNd-condition}. Here we fix $d$,
while $L$ changes accordingly to $d/\delta$.  One immediately notes
that the drop off in \dcr s is less than the drop off in
Figure~\ref{fig:strong-delta-fixNL-condition}.  In the case of \dcr s,
this different behaviour is due to the fact that, as reported in
Table~\ref{table:optimal-dcc-delta-varies}, the optimal radius for the
\dcr s decreases slower when $L$ is fixed than when $d$ is fixed.  For
\dcc s, the variation of the optimal radius when $\delta$ changes and
either $d$ or $L$ are fixed is minimal showing that \dcc s are less
influenced by $L$.  To be precise, the optimal radius when $d=5$ is
slightly greater than when $L=100$ and so is the connectivity, which
remains above 97\% even when $\delta=0.5$.

\begin{figure*}
	\centering
	\subfloat[\label{fig:strong-delta-fixNd-connectivity-DC-C}]{
		\includegraphics[scale=0.35]{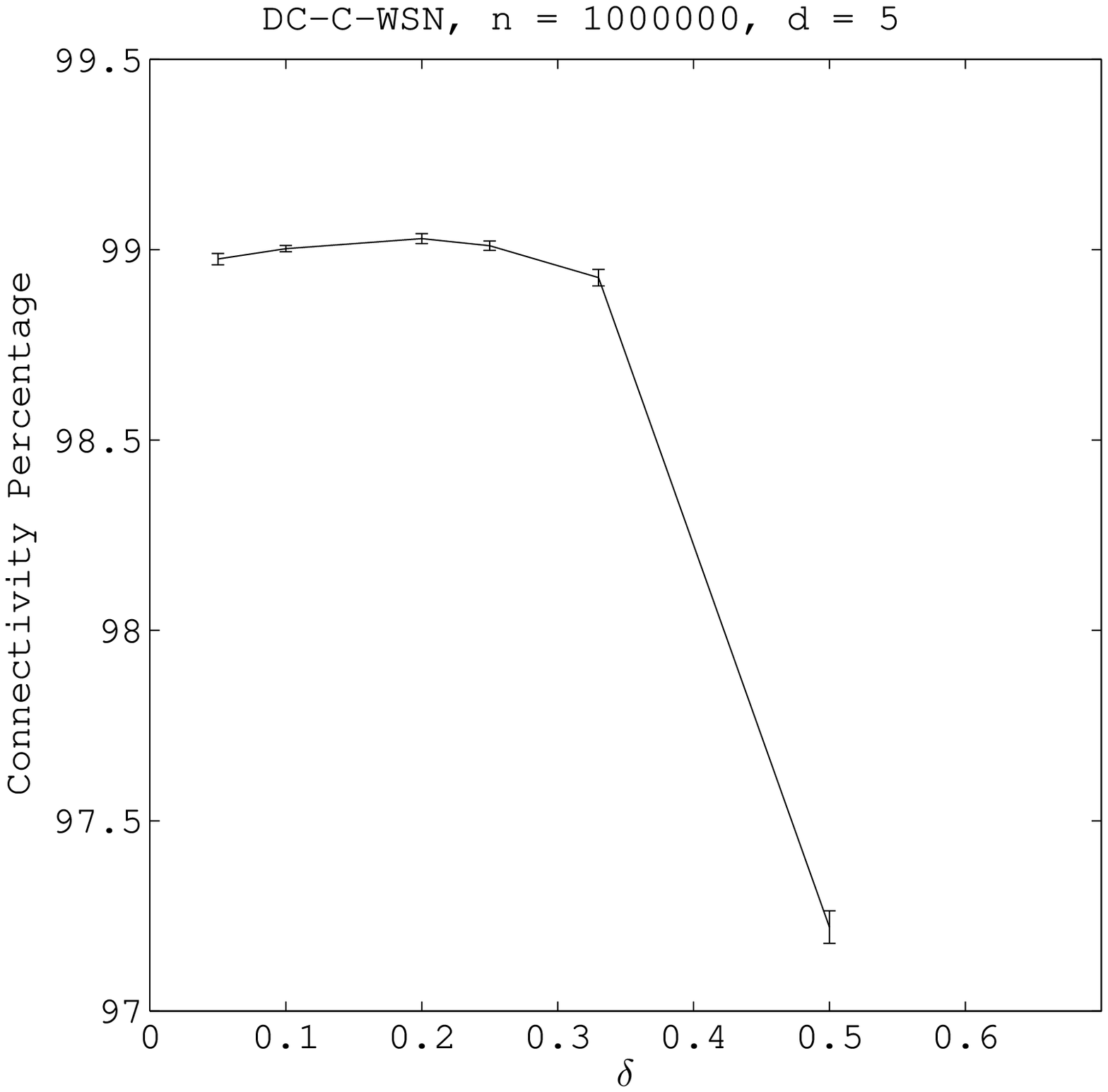}}
	\subfloat[\label{fig:strong-delta-fixNd-connectivity-DC-R}]{
		\includegraphics[scale=0.35]{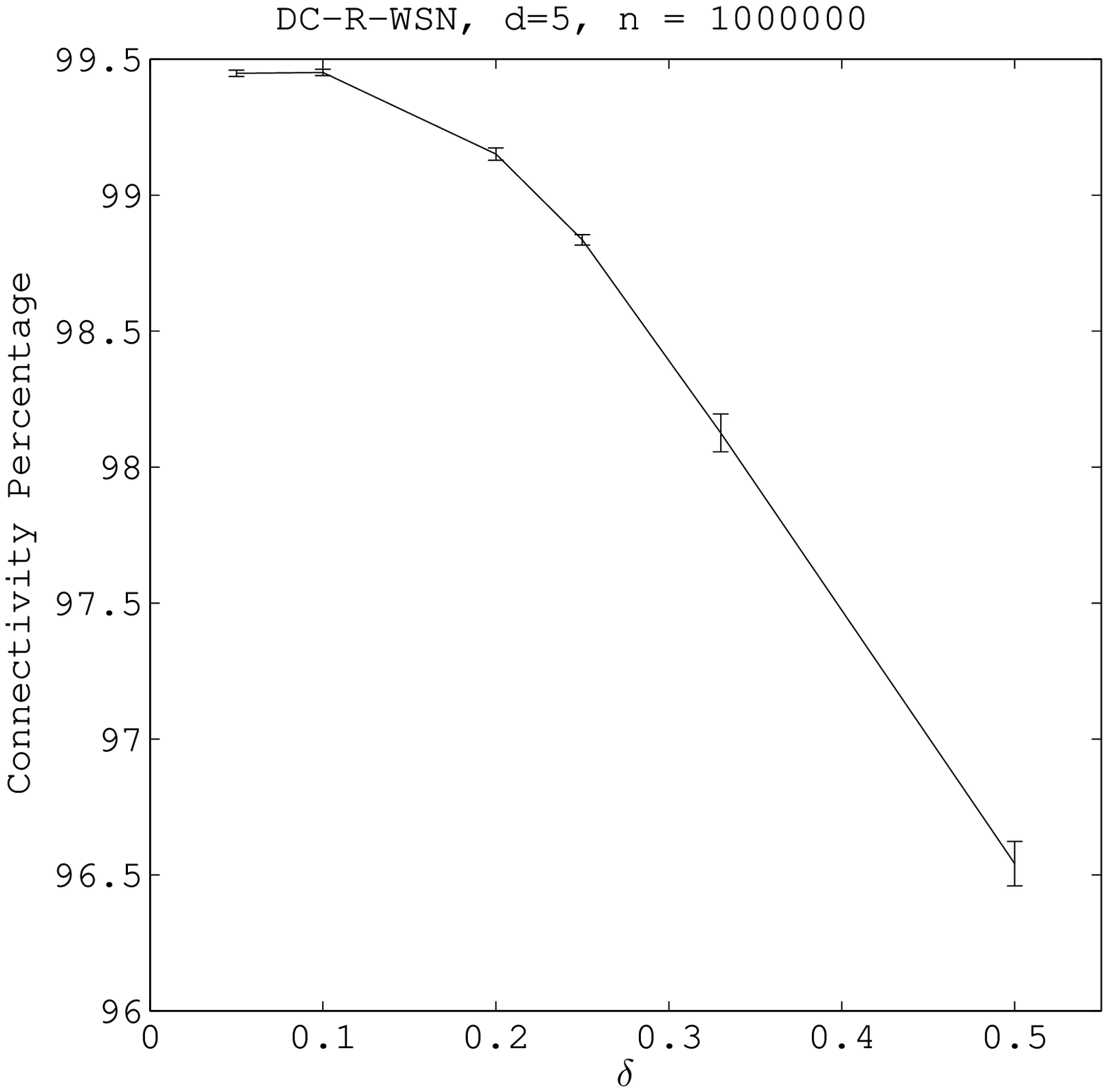}}
	\caption{Optimal radius: Percentage of sensors in the largest connected component when $\delta$ varies, $n$ and $d$ are fixed in \protect \subref{fig:strong-delta-fixNd-connectivity-DC-C} \dcc s 
\protect \subref{fig:strong-delta-fixNd-connectivity-DC-R} \dcr s.}\label{fig:strong-delta-fixNd-condition}
\end{figure*}

\begin{table}
\centering
\begin{tabular}{|l|c|c|c|c|}\hline
 &  \multicolumn{2}{c|} {\dcc s} & \multicolumn{2}{c|} {\dcr s} \\ \hline
$\delta$ & $d=5$ & $L=100$ & $d=5$ & $L=100$  \\ \hline
0.05& 3.333 &  3.333 & 2.102 & 2.102\\ \hline
0.10& 2.357 & 2.294 & 1.562 & 1.239\\ \hline
0.20& 1.667 &  1.601 &  1.219 & 1.005\\ \hline
0.40& 1.178  & 1.125 & 1.042 & 1 \\ \hline
\end{tabular}
\caption{The ratio between the optimal radius and the minimum possible transmission radius (i.e, the \rgg\ radius) when $\delta$ varies 
and either $d$ or $L$ are fixed.}
\label{table:optimal-dcc-delta-varies}
\vspace{-.3cm}
\end{table}

\ignore{

Indeed, when $L$ is fixed and $\delta$ increases, $d$ increases too.
To corroborate our finding we run an experiment with $n=10^6$, $d=5, 10$ and $20$.
The results are reported in Figure~\ref{fig:dcc-sensible-to-d}.

\begin{figure*}
	\centering
	{
		\includegraphics[scale=0.35]{finalfigures/optimal-DC-C-fixN-vardDELTAnew.eps}
	}
	\caption{Percentage of sensors in the largest connected component in \dcc s when $d$ and $\delta$ varies, and $n$ is fixed.}\label{fig:dcc-sensible-to-d}
\end{figure*}

}

We now show that under the strong connectivity condition, the energy
saving is effective.  On average, \dcc s spend half the energy of the
regular WSNs since there are $n \delta$ awake sensors and each sensor
transmits with energy proportional to $\frac{1}{2 \delta}$ times the
energy spent by a sensor in an always-awake WSN.  A higher saving is
possible for \dcr s.  In fact, each awake sensor transmits with energy
proportional to the energy spent by a sensor in an always awake WSN
multiplied by $\frac{1}{1-(1-\delta)^d} \approx
\frac{1}{1-e^{-d\delta}}$, which becomes very close to $1$ when
$\delta > 0.20$.  Hence, \dcr s spend energy proportional to the
number $n \delta$ of awake sensors, which is the most desirable
situation.  In conclusion, the optimal connectivity condition
undoubtedly leads to a great gain in the radius length, and thus leads
to a great energy saving in power transmission for both schemes, but
especially for \dcr s.

\begin{figure*}[t]
	\centering
	\subfloat[\label{fig:cc1-1}]{
		\includegraphics[scale=0.35]{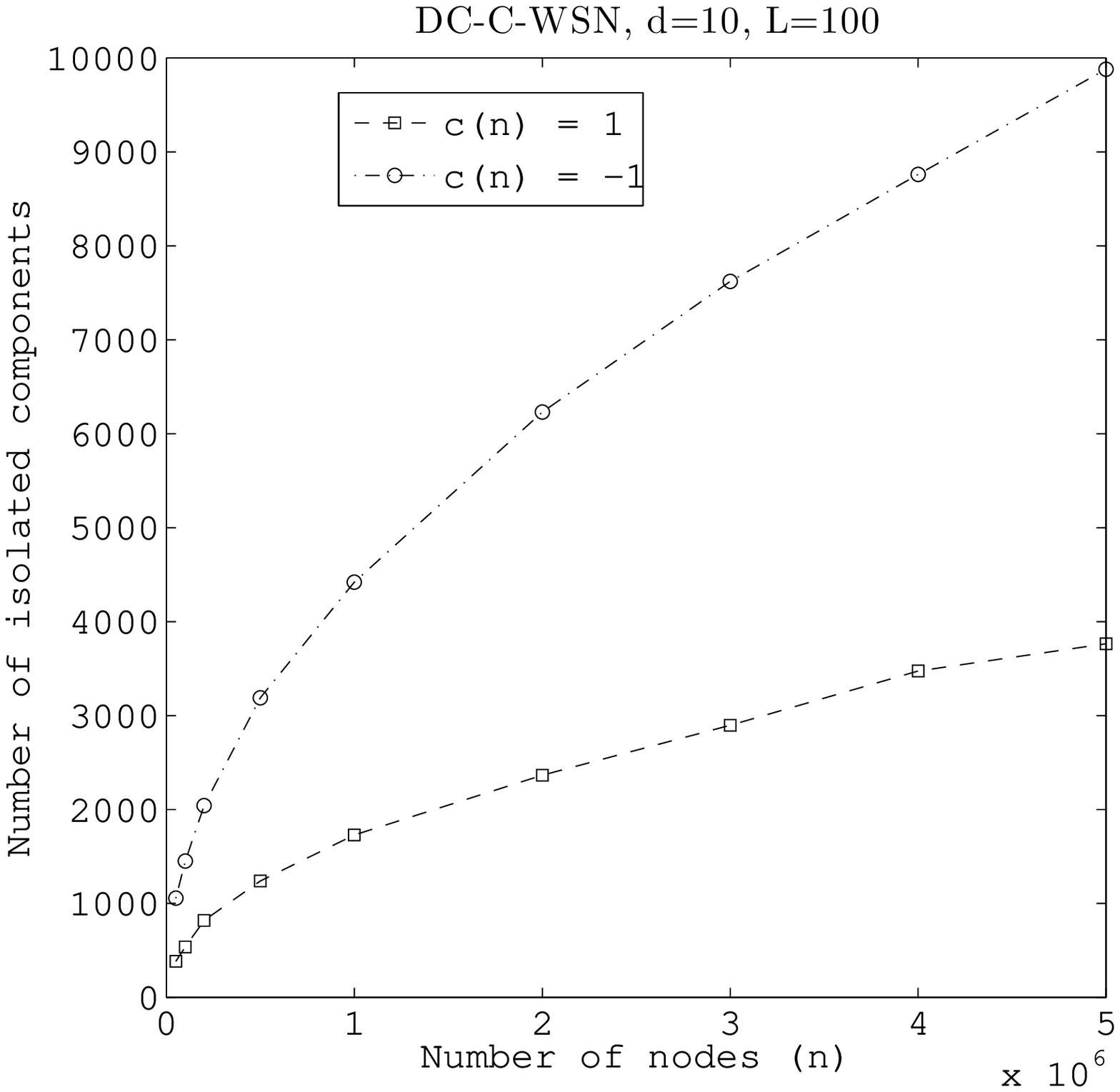}
	}
	\subfloat[\label{fig:largestsecond}]{
		\includegraphics[scale=0.35]{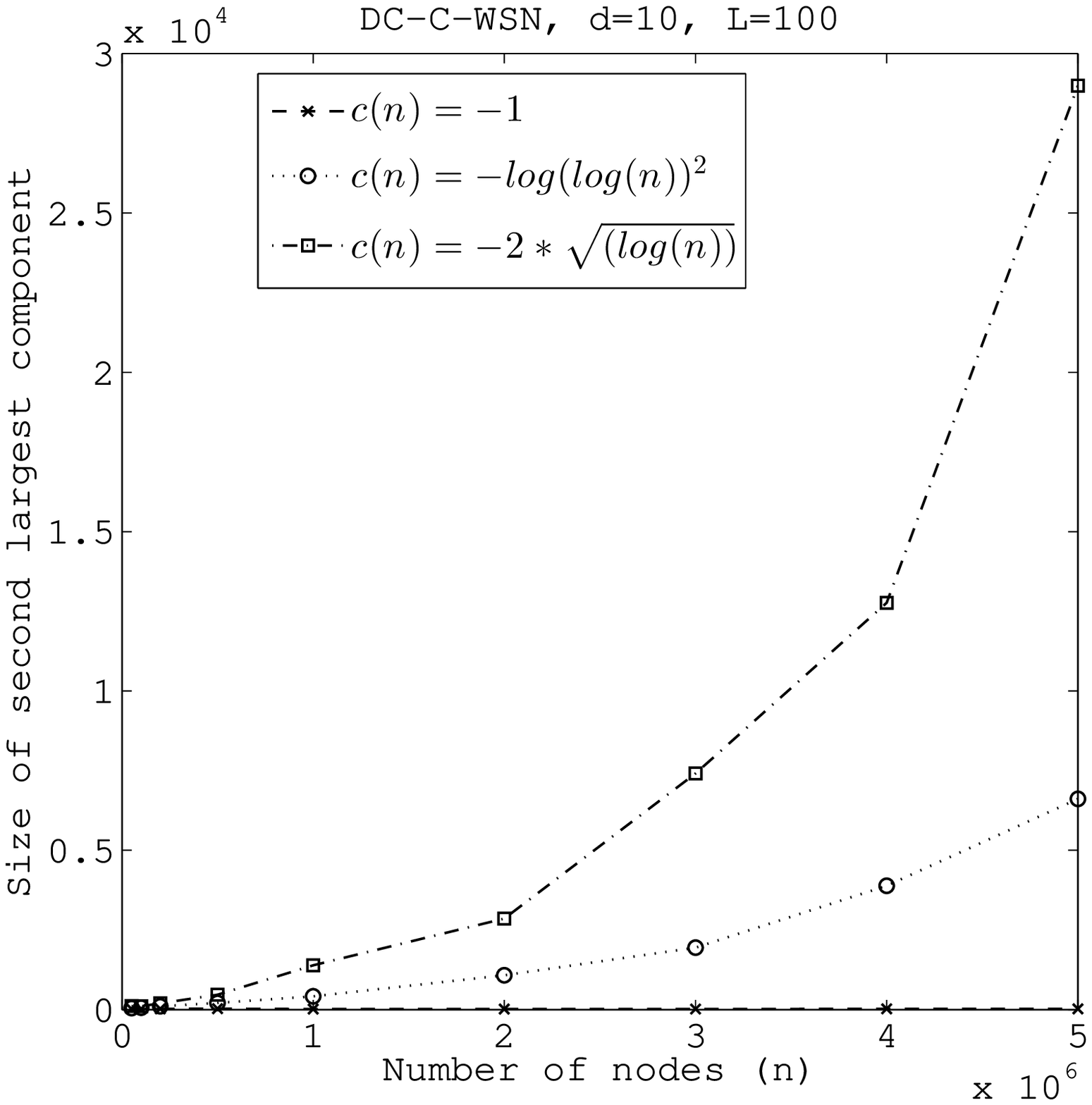}
	}
	\caption{\protect \subref{fig:cc1-1} The number of isolated points in \dcc . \protect \subref{fig:largestsecond} The size of the largest second component in \dcc s.}\label{fig:connected component}
\end{figure*}

To establish that the radius condition given in
Corollary~\ref{cor:vb-gamma-DC-C-WSN} is necessary we studied the
situation where $c(n)$ does {\em not} grow to $\infty$ as $n$
grows (Figure~\ref{fig:connected component}).
Figure~\ref{fig:cc1-1} shows that although the connectivity is still
high in \dcc\ when $c(n)=1$ and when $c(n) = -1$, the number of
isolated nodes is rapidly increasing, the increase
being faster in the case where $c(n) = -1$. This indicates that the
percentage connectivity is continuously dropping.  Moreover,
Figure~\ref{fig:largestsecond} shows that when $c(n) < 0$, the size of
the second largest component increases as $|c(n)|$ increases which
implies that the probability of connectivity tends to 0 as $n
\rightarrow \infty$ if $\lim_{n \rightarrow \infty} c(n) =
-\infty$. These results show the necessity of the condition on
$c(n)$.

We conducted a series of experiments to establish the optimality of
the optimal radius.  In particular, we varied the additive factor
$c(n)$ in the optimal connectivity condition.  As we expect from the
previous discussion, since the optimal \dcr\ radius falls more sharply
than the optimal radius for \dcc\ (and in fact is not far from the
minimum \rgg\ radius), the connectivity in \dcr s drops before it does
in \dcc s.  Figure~\ref{fig:final-percentage} shows that when
$c(n)=-\log \log (n)$ \dcr s are below 10\% of connectivity
independent of $n$, while \dcc s still reach a good percentage of
connectivity, especially for large $n$.  Figure~\ref{fig:constant}
shows for which values of $c(n)$ both schemes experience a comparable
and drastic loss of connectivity, dropping below $0.2$\%.  For \dcc s,
this happens between $c(n)=-(\log{\log{n}})^2$ and
$c(n)=-2\sqrt{\log{n}}$ (see Table~\ref{table:cn} for the absolute
values), while for \dcr s this happens when $c(n)=-5$.
\begin{table}
\centering
\begin{tabular}{|c|c|c|c|}\hline
& \multicolumn{3}{c|}{$c(n)$} \\ \hline
n & -$(\log \log n)^2$ & -2*$\sqrt{\log n}$ & -2.5*$\sqrt{\log n}$ \\ \hline
$0.5*10^6$ & -6.60 & -5.25 & -6.57\\ \hline
$10^6$ & -6.86& -7.43  & -9.29 \\ \hline
$1.5*10^6$ & -7.02& -9.10& -11.38 \\ \hline
$2*10^6$ & -7.12 & -10.51 & -13.14\\ \hline
$2.5*10^6$ & -7.23 & -11.75 & -14.69\\ \hline
\end{tabular}
\caption{The values of $c(n)$ when $n$ varies.}
\label{table:cn}
\vspace{-.3cm}
\end{table}

\begin{figure*}

	\centering
	\subfloat[\label{fig:percentage-constant-DC-C}]{
		\includegraphics[scale=0.33]{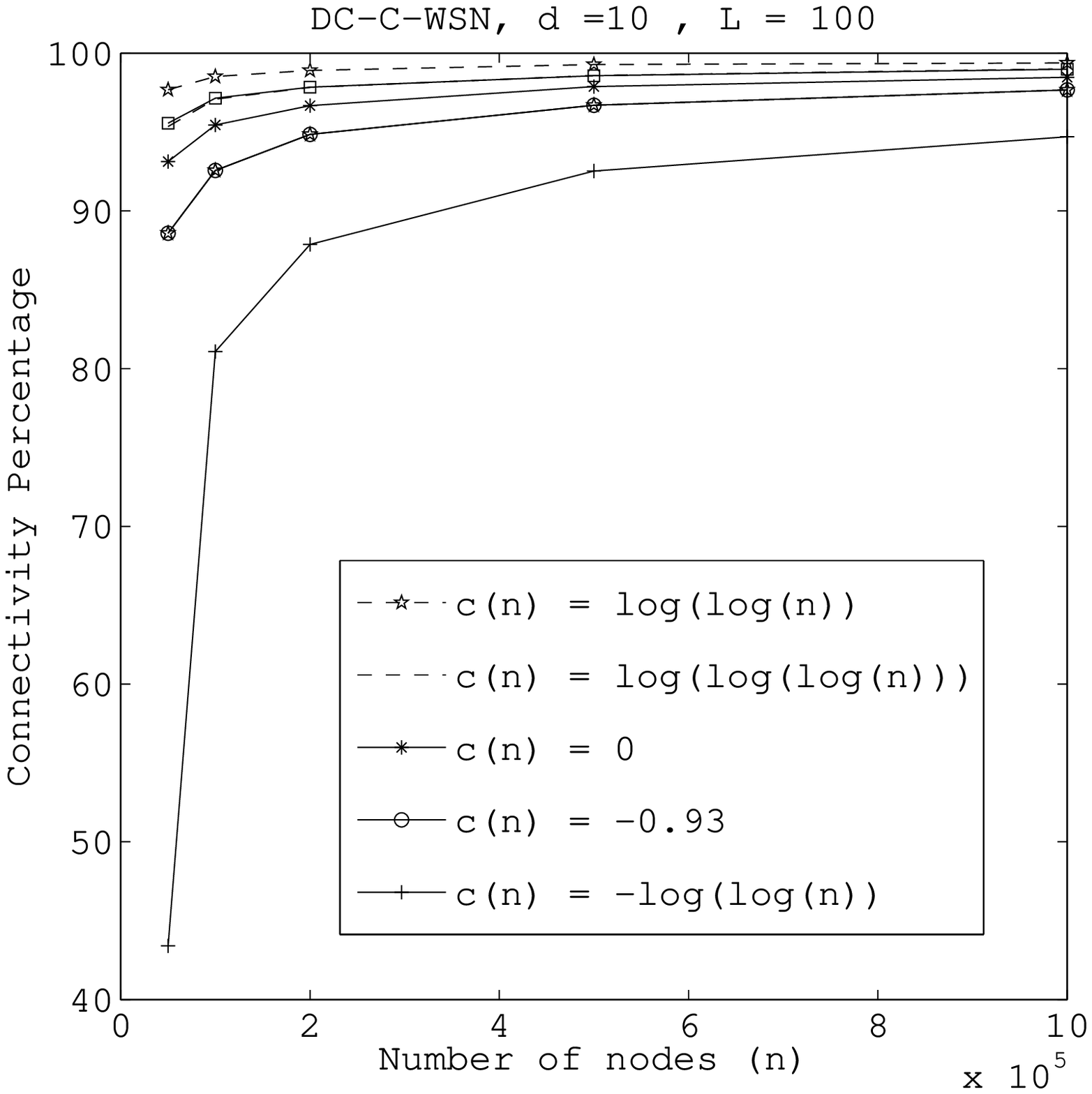}
	}
	\subfloat[\label{fig:percentage-constant-DC-R}]{
		\includegraphics[scale=0.33]{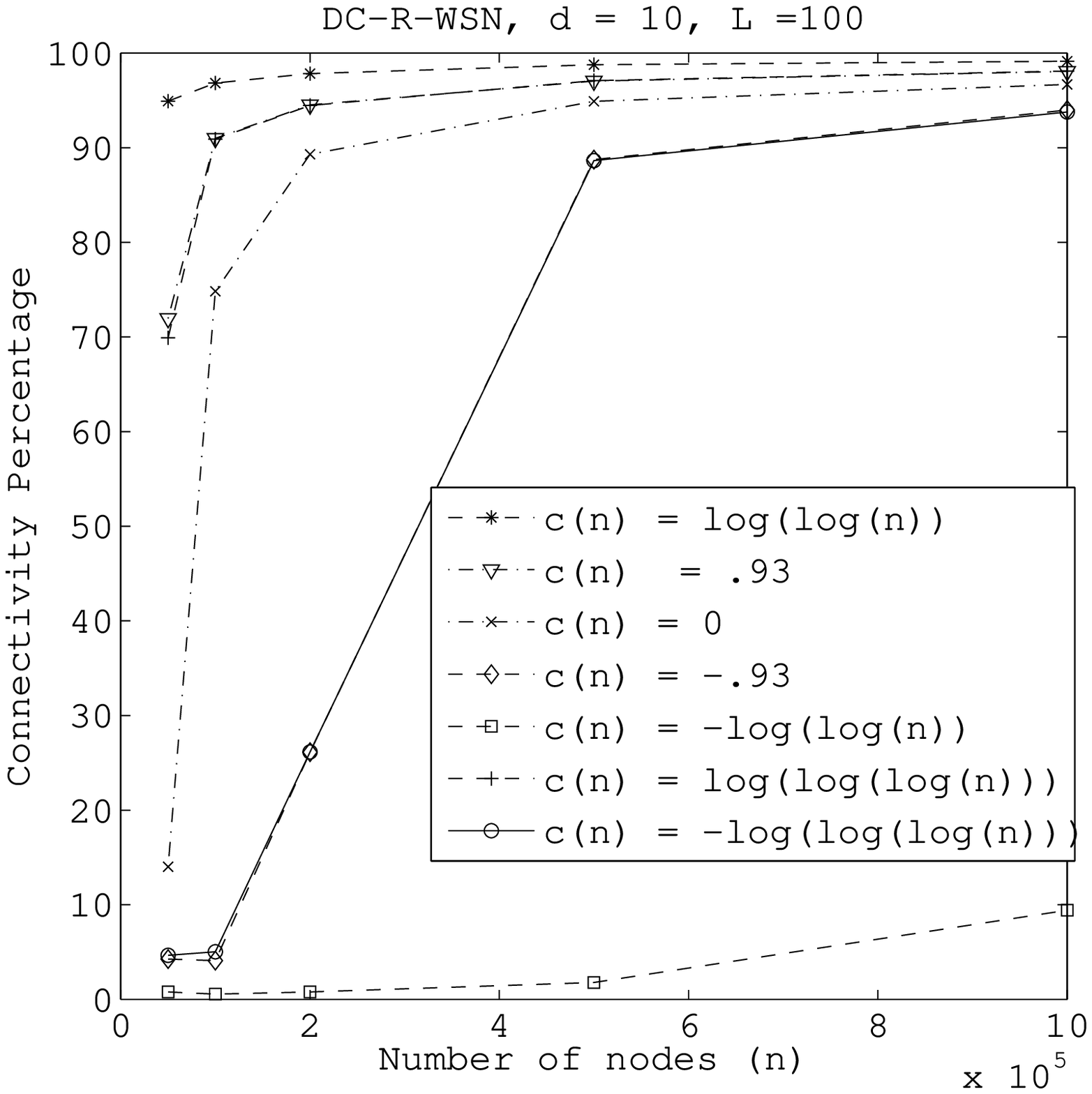}
	}
	\caption{ Percentage of connectivity \protect \subref{fig:percentage-constant-DC-C} in DC-C-WSN \protect \subref{fig:percentage-constant-DC-R}  in DC-R-WSN.}\label{fig:final-percentage}
\end{figure*}

\begin{figure*}

	\centering
	\subfloat[\label{fig:constant-DC-C}]{
		\includegraphics[scale=0.33]{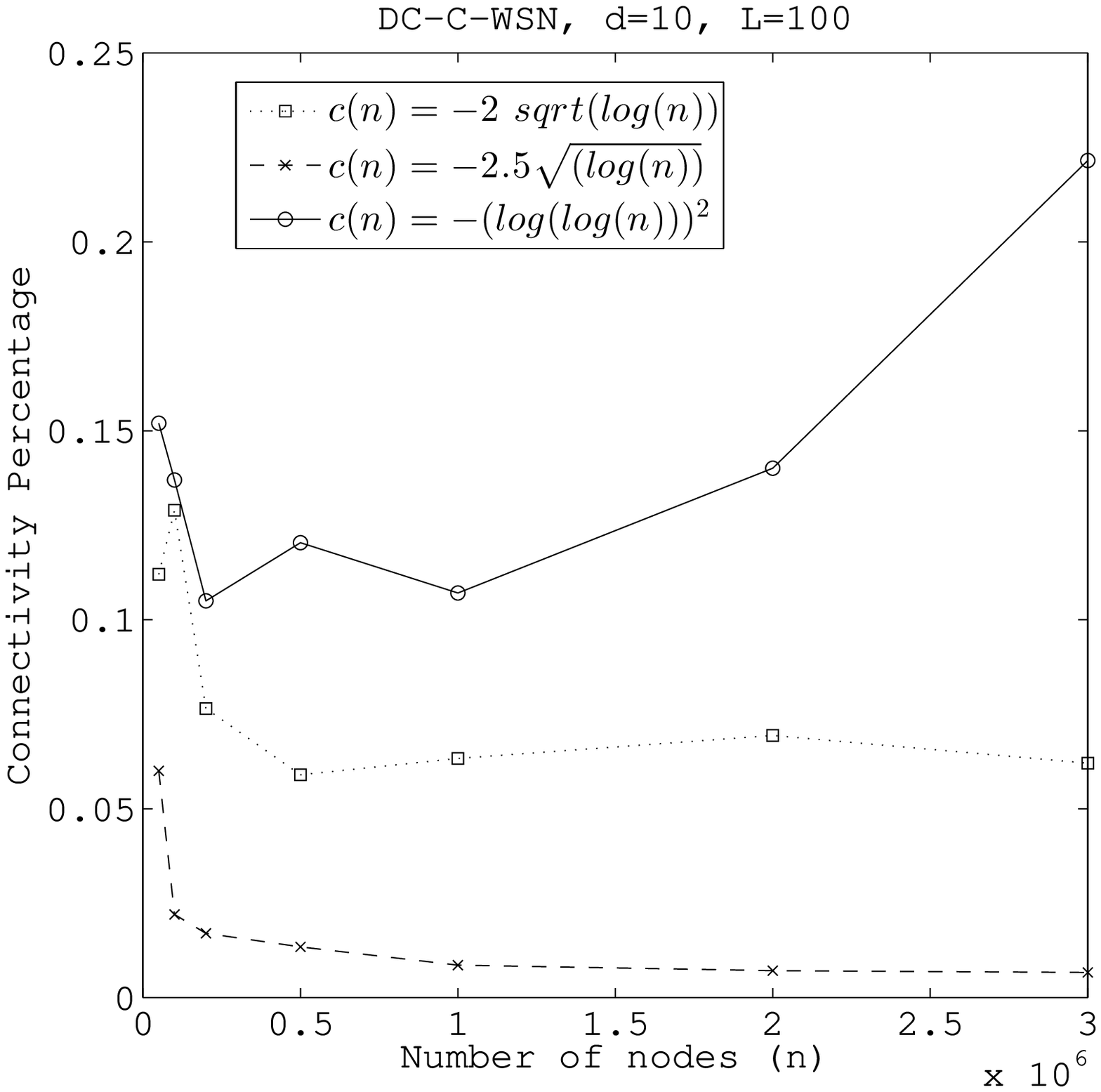}
	}
	\subfloat[\label{fig:constant-DC-R}]{
		\includegraphics[scale=0.33]{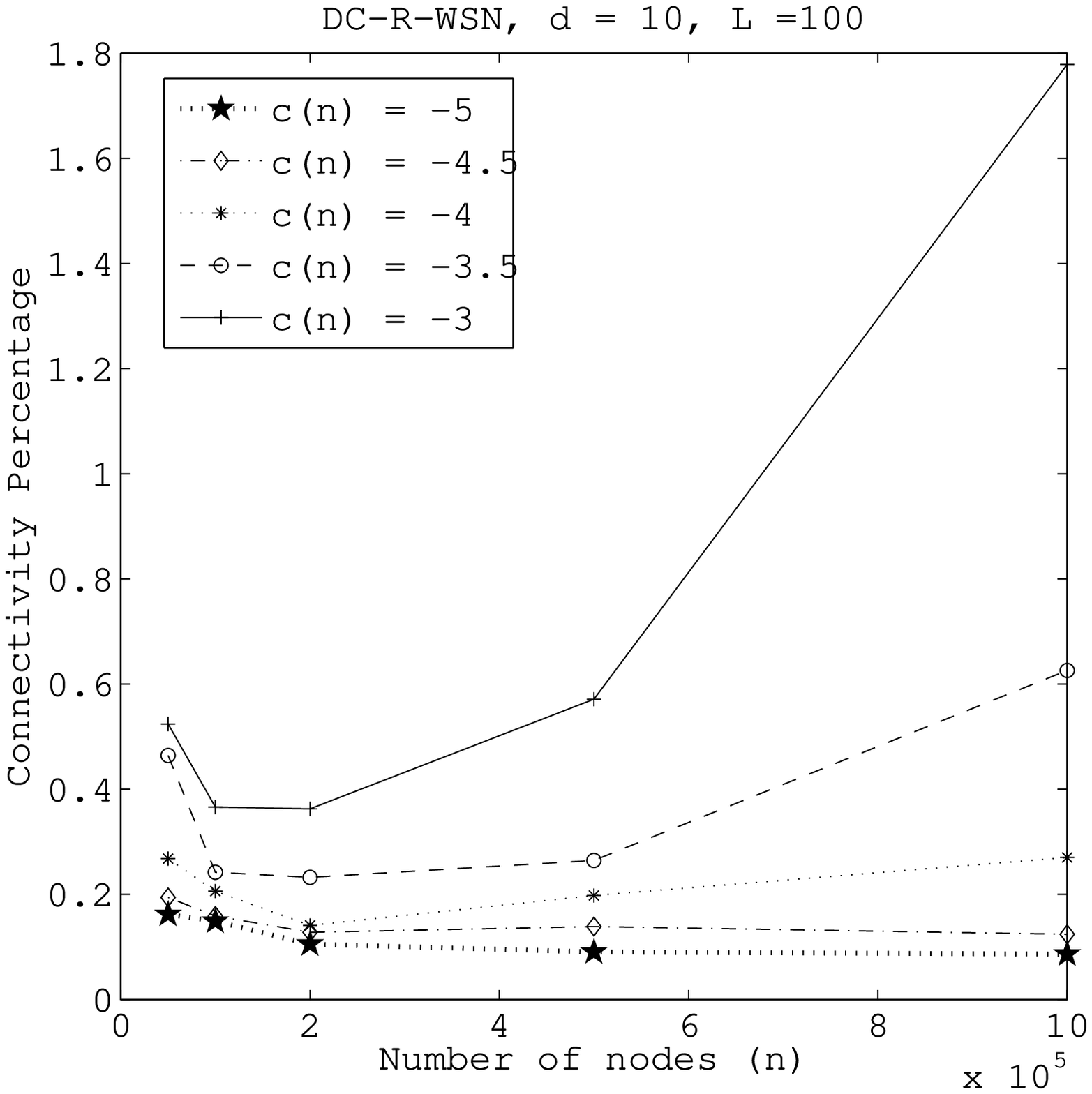}
	}
	\caption{Percentage of connectivity with different values of $c(n)$ in \protect \subref{fig:constant-DC-C} \dcc \protect \subref{fig:constant-DC-R} \dcr.}\label{fig:constant}
\end{figure*}

\subsection{Power consumption in DC-C-WSNs and DC-R-WSNs}
\label{sec:simulation:power}

The main focus of this paper is to study the minimum radius that
guarantees connectivity of the duty-cycling graph for a large family
of duty-cycling schemes, i.e. to study the minimum {\em local}
transmission power spent by a duty-cycled sensor. It is not clear that
minimizing local power leads to minimum {\em global} power consumption
for any given set of packets being sent across the network. We do not
attempt any comprehensive characterization of the power consumption of
different routing problems in the duty-cycled case since that is
outside the scope of this paper. Instead we pick a simple
point-to-point communication task and compare the power consumption of
the weak and optimal radii of \dcr\ and \dcc\ for that task. We also
study the power consumption of the ``always on'' network that uses the
\rgg\ radius to connect which corresponds to a local power level that
is lower than the local power levels of the duty-cycled networks we
study.

\paragraph{Simulation setup}
In the experiments related to global power consumption, we use the
sensor power consumption levels given in
Table~\ref{tab:powerconsum}~\cite{Shin2014,TCS08}. To calibrate our
experiments we set $P=50 mW$ as the power spent for transmitting up to
the \rgg\ radius $r(n)$ (see Equation~\ref{eq:radius}) with $n=2*10^5$
and $c(n)= \log \log n$.  The transmission power $P(\overline{r}(n'))$
to cover an arbitrary radius $\overline{r}(n')$ is then given by
$P(\overline{r}(n')) = P(r(n)) \cdot
\frac{\overline{r}^2(n')}{r^2(n)}$. Note that the radius
$\overline{r}(n')$ varies according to the duty-cycle scheme and the
number of sensors.  Moreover, since the transmission is considered
successful if the SINR (Signal-to-Interference-plus-Noise-Ratio) is
above a certain threshold, we assume that the power for receiving $M$
is independent of the distance between the transmitting and the
receiving sensor.

\begin{table}[t]
\caption{Estimate of sensor power consumption in 
different operational modes at $2.5$ Volt and with a sensor \rgg\ transmission radius $r(n)= \sqrt\frac{\log n + \log\log n}{n}$  and $n=2*10^5$.}
\label{tab:powerconsum}
\footnotesize
\begin{center}
\begin{tabular}{|c|c|c|}\hline
{\bf Sensor Mode} & {\bf Current Drawn} & {\bf Power Consumed} \\ \hline
Sleep  & & \\ 
(CPU inactive, timer on, radio off)  & 6 $\mu A$ & 0.015 $mW$ \\ \hline
CPU switch on, radio startup  & 3 $mA$ & 15 $mW$ \\ \hline
CPU switch off, radio shutdown & 3 $mA$ & 15 $mW$ \\ \hline
Awake   & &  \\ 
(CPU active, radio listening or RX) & 12 $mA$ & 32 $mW$ \\ \hline
CPU active, radio TX  & 20 $mA$ & 50 $mW$ \\ \hline
\end{tabular}
\end{center}
\end{table}

We assume that if a sensor is awake, the CPU is active and the sensor
listens to the radio.  In this way, awake sensors can receive the
incoming message $M$ without extra overhead~\cite{Shin2014}.
Moreover, we do not consider interference in our simplified
simulations. 
Power is also spent to switch between sleep and awake modes, i.e., to
switch on/off both the CPU and the radio.  Note that, in this model,
the only extra power for performing any network task is the power
spent in actually transmitting packets which is directly proportional to
the square of the current transmission radius. We will see ahead that
the power consumed by transmissions in our network task is the least
in the ``always on'' case which is expected since the radius of
transmission is the lowest. But, the power required to operate the
network in the ``always on'' case is much higher. For example, a
simple calculation shows that over 100 times slots the ``always on''
network consumes 3.2 W per node only for operation without any
transmission. On the other hand for the duty-cycled case, setting
$\delta = 0.05$ and $L = 100$ i.e. with $d = 5$, assuming one
transition from sleep to waking and one from waking to sleep in the
\dcc\ case we find that the total power consumed at a node is 0.19W
which is 16 times lower than the ``always on'' case. Since in
\dcr\ the number of transitions between sleep and waking are more the
power consumed is slightly higher but is no more than 0.31W (counting
five transitions from sleep to wake and five in the opposite
direction) which is more than 10 times lower than the ``always on''
case. Therefore we see an order of magnitude difference in the
operation of the network that more than compensates (as we will see)
for the extra power spent in transmission in most network tasks.

\paragraph{The network task}
The network operation we will study, $\Broad(M,S,D)$ involves sending
message $M$ from source node $S$ to destination node $D$. We define
the {\em total transmission power} of this operation as the sum of the
power spent {\em in transmission only} by each sensor that receives
and retransmits $M$ during the broadcast operation. We experimentally
measure by simulations the total transmission power for the two
concrete \dcc\ and \dcr\ schemes when the weak and the optimal radius
are used.  We also consider the total transmission power for
$\Broad(M,S,D)$ on always awake WSNs, where each sensor transmits
adopting the \rgg\ radius.  The results presented, unlike our previous
results in Theorems~\ref{thm:conn-weak} and ~\ref{thm:vb-gamma}, have
no general validity. We claim them only for the two specific
duty-cycling schemes we have studied so far, and that too only for the
operation $\Broad(M,S,D)$.

%
%
\ignore{
\begin{figure*}
	\centering
		\includegraphics[scale=0.40]{finalfigures/5_no-operation_power_bw.eps}
	\caption{Power consumed by the duty-cycled and always awake sensors during a period of $L=100$ time slots when $\delta=0.05$.}\label{fig:5_nooperation}
\end{figure*}
}
%
%

In our experiments, we select the source $S$ for the $\Broad(M,S,D)$
operation in the center of the deployment area, which is a disk of
unit radius, and the destination $D$ is an arbitrary sensor at
Euclidean distance $d(S,D)=0.1$ from $S$.  We assume that each sensor
knows its polar coordinates in the deployment. 

In order to a simulate realistic situation we implement
$\Broad(M,S,D)$ using a greedy algorithm which is a kind
of partial flooding of the network as follows: 
\begin{enumerate}
\item $S$ sends the message $M$ to all its neighbors. 
\item Node $u$ retransmits $M$ only the first time it receives the
  message under the following condition: u transmits M for d slots
  (i.e. for one duty-cycle) if it receives from a sender v which is
  further from D than u itself. 
\end{enumerate}

We call this algorithm the {\em greedy directional} algorithm since it
tries to move the message in the direction of the destination i.e. the
distance between transmitting sensor and $D$ is guaranteed to decrease
as the hop count of the message increases. We also implement a {\em
  relaxed greedy directional} (or simply, {\em relaxed greedy})
algorithm which is the same as the greedy directional algorithm except
that a sensor retransmits $M$ if its Euclidean distance from $D$ is no
larger than $1.2$ times the distance of the sender i.e. in the relaxed
greedy algorithm, the distance between the transmitting sensor and $D$
is not guaranteed to decrease at every hop, although it increases by a
bounded amount 

Whenever not otherwise specified, $\Broad(M,S,D)$ is implemented by
the greedy directional algorithm. We use the relaxed greedy for the
cases where greedy directional does not find a path (despite the
network being connected) e.g. in the ``always on'' case that uses the
\rgg\ radius and \dcr\ when it uses the optimal radius. 
\ifrevision
{\bf @Cristina: Please see this explanation for relaxed greedy:::::
\fi
The failure
of the greedy to find a path in these situations is due to the fact
that since the radii here are very low, the paths from $S$ to $D$ can
sometimes encounter twists and turns and may not always move in the
desired direction.
\ifrevision
}
\fi

\paragraph{Results and analysis}
To contextualize the power consumption of $\Broad(MSD)$ we began by
plotting the number of hops needed to complete the task under the
\dcr\ and \dcc\ with both the optimal radius and the weak radius, as
well as the number of hops needed by the ``always on'' network using
the \rgg\ radius (Figure~\ref{fig:1_hops}). Additionally
we plotted the number of time slots taken to complete the task in the
five cases considered above (Figure~\ref{fig:4_timer}). We found that
the number of hops increases more or less as the radius decreases,
with the \rgg\ scheme having the maximum number hops. Although the
weak radius is the same for \dcc\ and \dcr\, the number of hops to
reach $D$ is smaller in the latter case because of the greater
probability of connection in \dcr. We find that for the duty-cycled
schemes the completion times report in Figure~\ref{fig:4_timer} are in
the same order as the number of hops i.e. lower the number of hops for
a scheme, the quicker the message reaches the destination. The
\rgg\ scheme has a much lower completion time than the duty-cycled
schemes, which is to be expected since sensors are always on in this
case.
\begin{figure*}
	\centering
	\subfloat[\label{fig:1_hops}]{
		\includegraphics[scale=0.33]{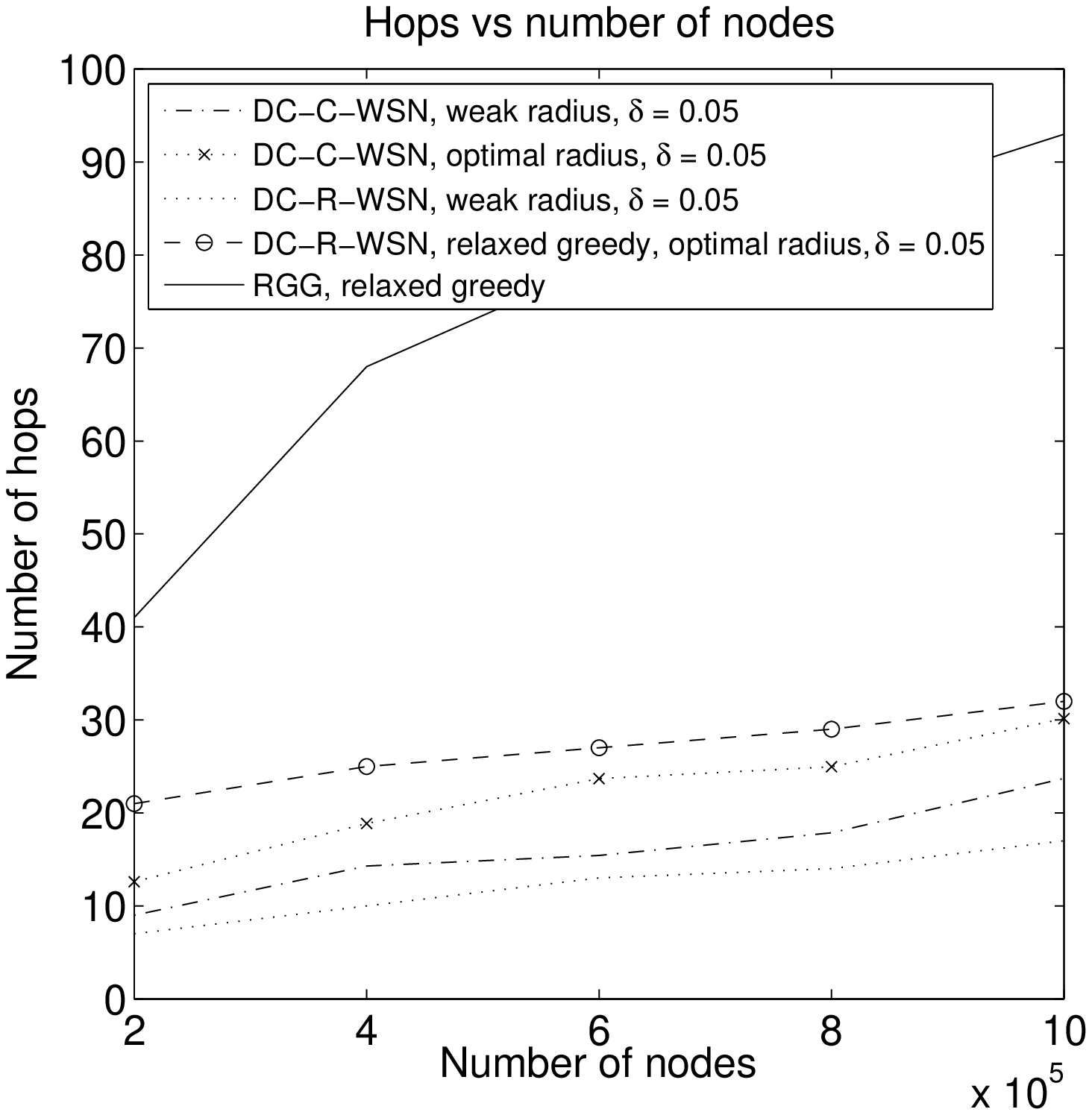}
	}
	\subfloat[\label{fig:4_timer}]{
		\includegraphics[scale=0.33]{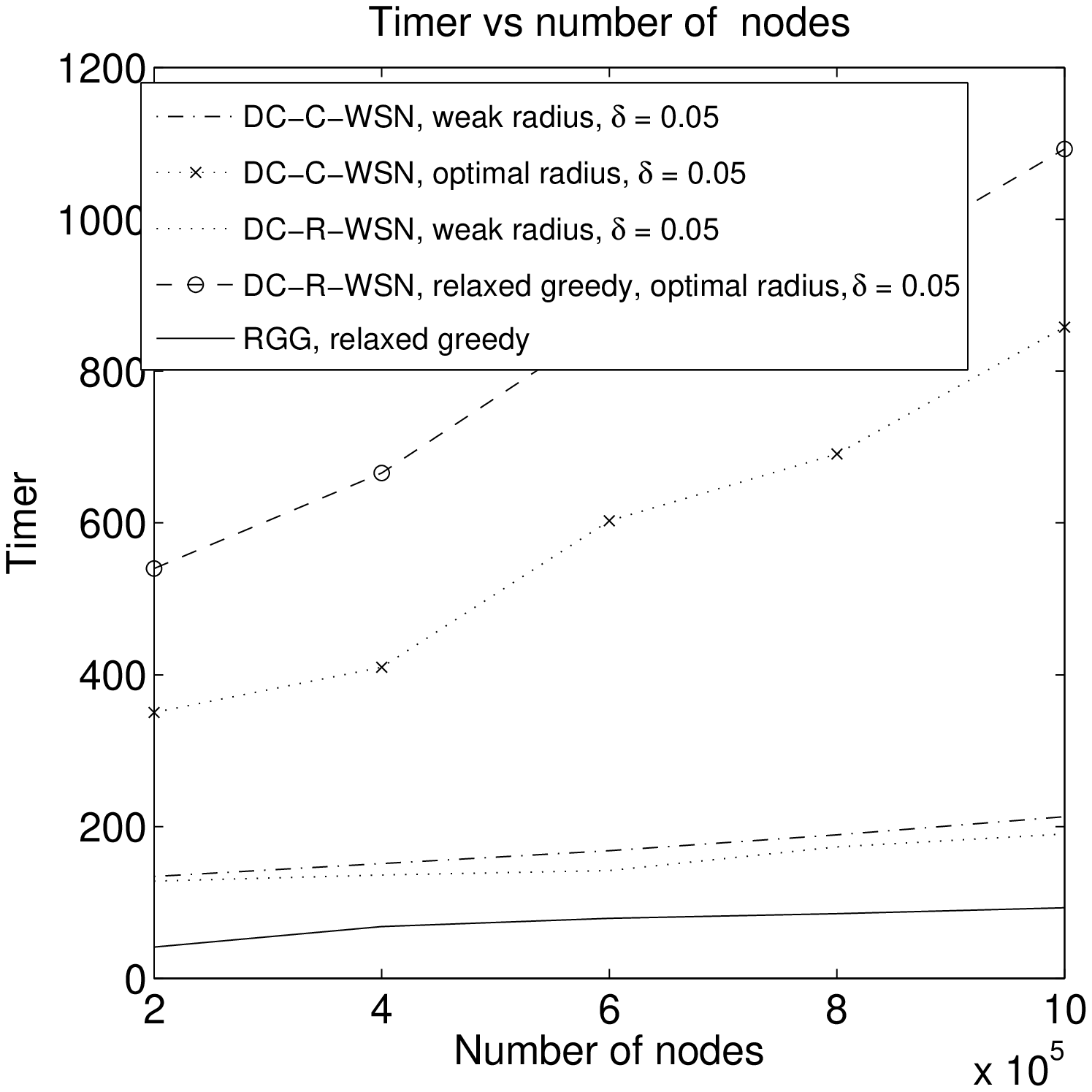}
	}
\caption{When $\delta=0.05$, $L=100$ and $d=5$: \protect \subref{fig:1_hops} Number of hops to reach $D$ from $S$
\protect \subref{fig:4_timer} Total completion time for $\Broad(M,S,D)$.}\label{fig:1-4}
\end{figure*}

\begin{figure*}
	\centering
		\includegraphics[scale=0.33]{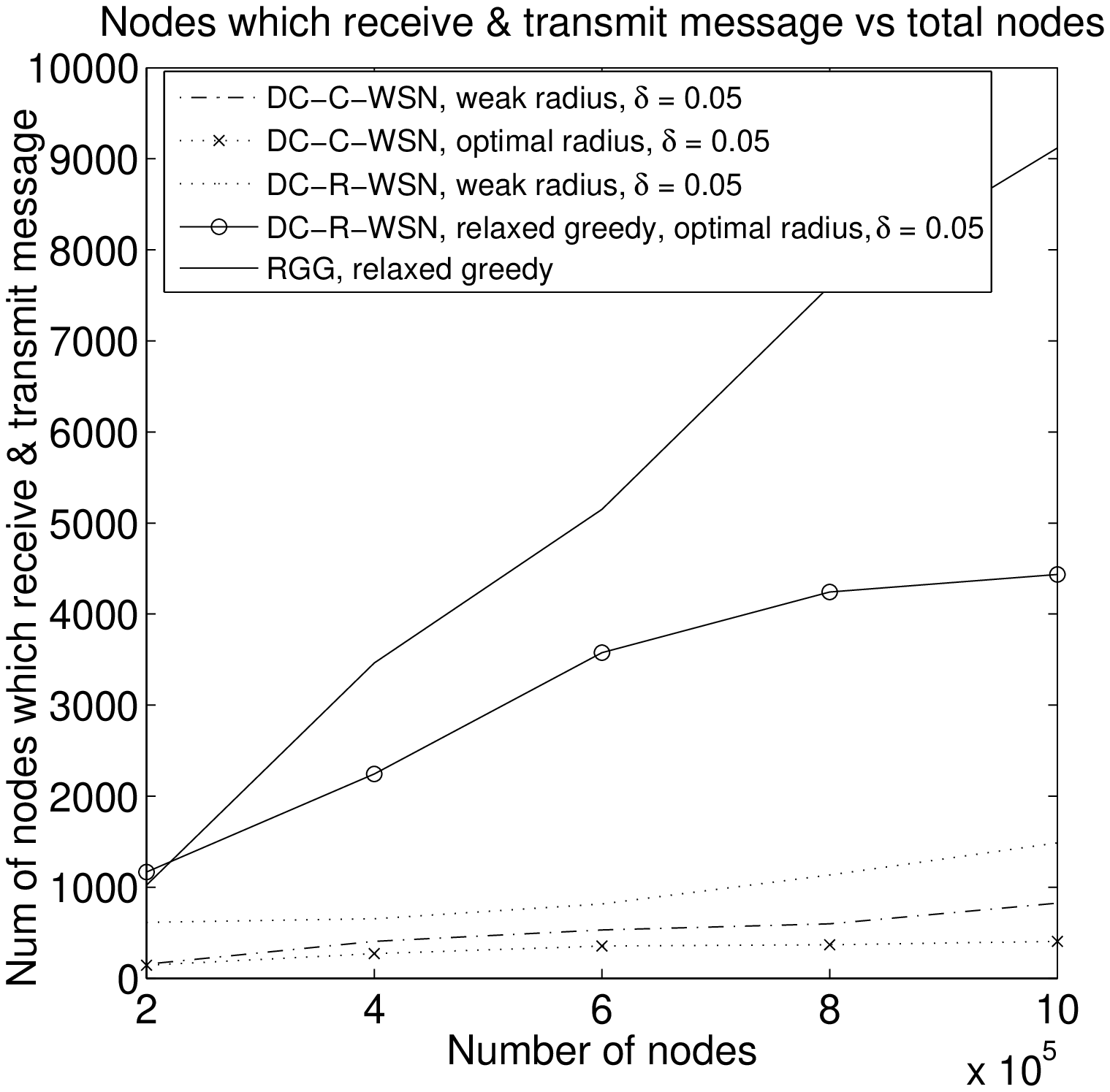}
	\caption{Number of sensors that receive $M$ and retransmit it during the $\Broad(M,S,D)$ operation when $\delta=0.05$.}\label{fig:2_N'}
\end{figure*}

Figure~\ref{fig:2_N'} plots the number $N'$ of sensors that receive
$M$ and retransmit it during the $\Broad(M,S,D)$ operation from $S$ to
$D$. In general $N'$ increases with the number of hops with the
``always on'' sensors registering the highest values of the five
scenarios studied.  When the radius is the same for both \dcc\ and
\dcr, i.e. the weak radius, we find that $N'$ is larger for
\dcr\ although we had seen in Figure~\ref{fig:1_hops} that number of
hops for \dcr\ with weak radius is lower than that of \dcc\ with weak
radius. Here we see that a node's ability to make more connections
under the \dcr\ scheme becomes a disadvantage in terms of power
consumption.

%
%
%
%

\begin{figure*}
	\centering
	\subfloat[\label{fig:3_diff}]{
		\includegraphics[scale=0.26]{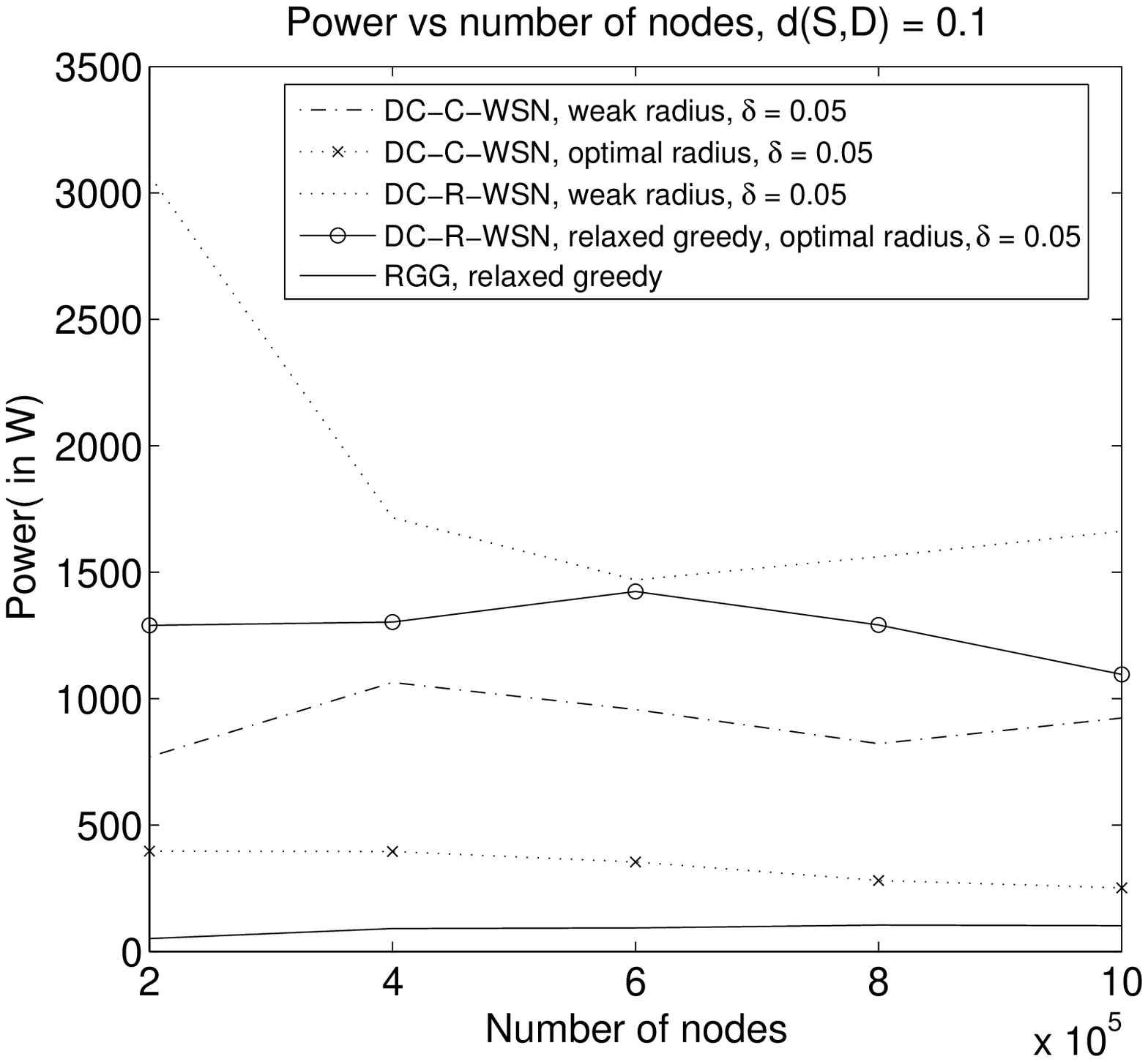}
	}
	\subfloat[\label{fig:7_power}]{
		\includegraphics[scale=0.25]{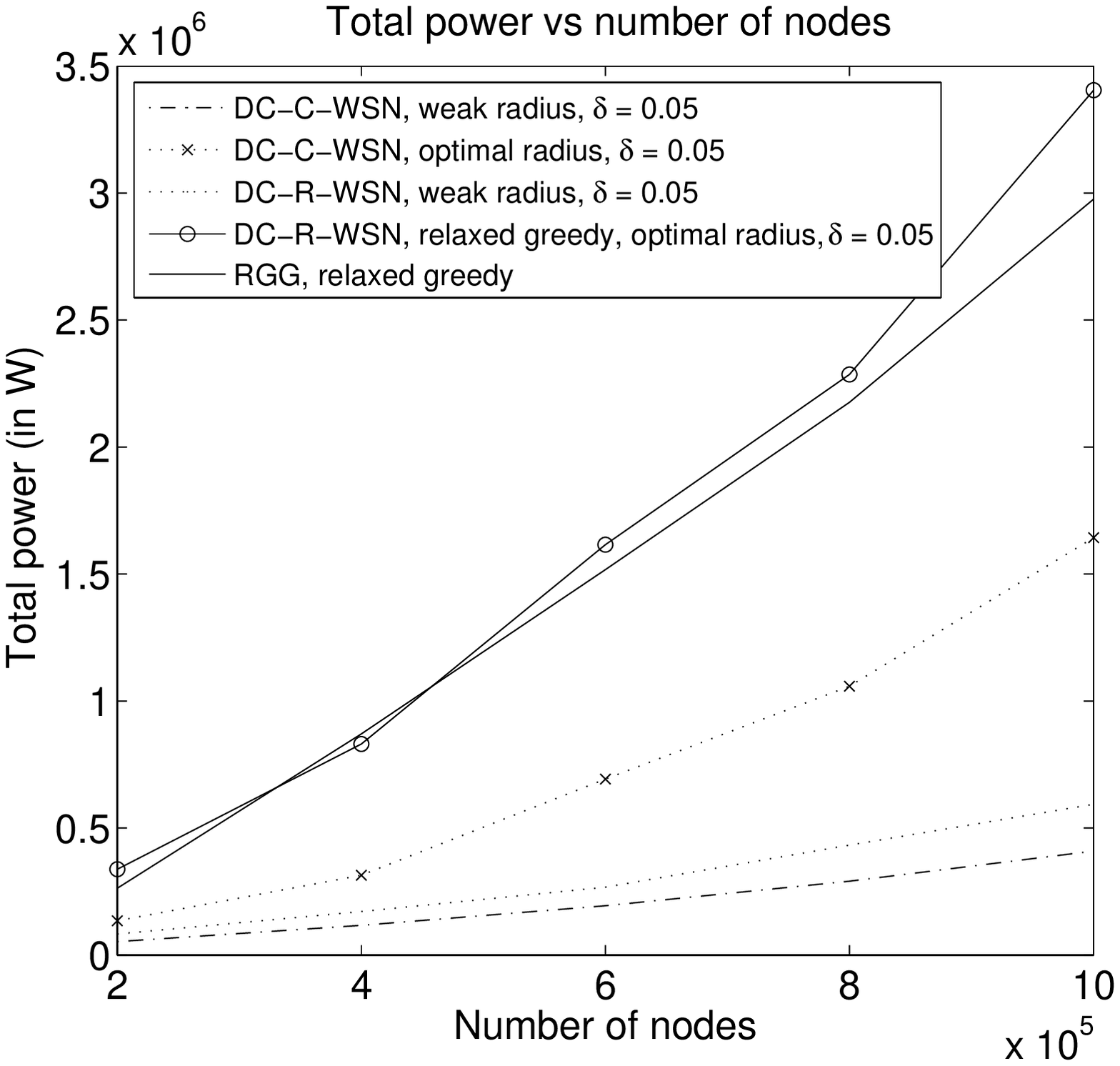}
	}
\caption{For $\Broad(M,S,D)$, when $d(S,D)=0.1$, $\delta=0.05$, and $L=100$: \protect \subref{fig:3_diff} Total transmission power
\protect \subref{fig:7_power} Total power consumed in task.}\label{fig:3-7}
\end{figure*}

Figure~\ref{fig:3_diff} plots the total transmission power, i.e. the
power spent {\em only in transmission} by all the sensors that participate
in $\Broad(M,S,D)$. Among the duty-cycled schemes the winner here is
\dcc\ with optimal radius which reflects the fact that it has a low
number of transmitting nodes $N'$. The \dcr\ schemes both perform
worse than the \dcc\ schemes which is a clear consequence of having
made a larger number of transmissions (as we saw in
Figure~\ref{fig:2_N'}). But, on the other hand, \dcr\ with optimal
radius beats \dcr\ with weak radius, despite having made more
transmissions which shows the benefit of having a lower local power
consumption.  Always awake WSNs spend very little amount of energy for
transmission, but, as we see in Figure~\ref{fig:7_power}, the total
power they consume to complete the task is among the highest which is
clearly a consequence of their high cost of operation.  We note in
Figure~\ref{fig:7_power} that the duty-cycled networks that adopt the
weak radius are those that consume the minimum total power.  Among
these the \dcc s demand less power networks since they need lower
transmission power, as seen in Figure~\ref{fig:3_diff}, and because
they consume less power for their basic operation since they just
switch once between sleep and awake mode in each period. Overall the
worst performer is \dcr\ with the optimal radius which is due to the
fact that the poor connectivity it offers forces us to use the relaxed
greedy algorithm, thereby incurring a greater number of transmissions
(as we saw in Figure~\ref{fig:2_N'}) that offsets the benefit of the
low power consumption in local transmissions.

%
%
%
%

In conclusion we see that the global power consumption for a given
network task depends on several factors, many of which are specific to
the task that we are attempting. Minimizing the local power
consumption does not automatically minimize the global power
consumption for a given task. However the low cost of basic operation
of duty-cycled networks benefits them as long as they offer good
connectivity that allows us to carry out the given task. 

\section{A minimum-radius duty-cycling scheme}
\label{sec:deterministic}
\ifrevisiontwo
{\bf
\fi
So far we have studied the situation where given $d$ and $L$ and some
scheme for selecting $d$ waking slots out of $L$ we have a probability
$\gamma$ for connection between two neighboring nodes and hence a
minimum transmission radius for achieving connectivity in the network.
This transmission radius is $1/\sqrt{\gamma}$ higher than the radius
required for connectivity in RGGs. Now we show that with a careful
choice of such a scheme for selecting the slots we can achieve
connectivity even at the RGG radius. Namely, in this section we move away
from the general approach spanning a whole family of duty-cycling
schemes of the previous section and give an algorithm for finding a
particular awake scheduling scheme that ensures that the duty-cycled
network achieves connectivity using the RGG radius.
\ifrevisiontwo
}
\fi

Given an integer $k > 1$, consider a duty cycling scheme with the
following properties:
\begin{enumerate}
\item Each node chooses a duty cycle from one of $k$ predefined
  options, we call them $C_1, \ldots, C_k$ where each $C_i \subseteq
  \{0,1,\ldots,L\-1\}$ and $|C_i| = \delta \cdot L = d$.
\item All $k$ duty-cycle options overlap i.e. for each $1 \leq i \ne
  ,j \leq k$, $C_i \cap C_j \ne \emptyset$. 
\item No time instance is left uncovered i.e. $\cup_{i=1}^k C_i = \{0,1,\ldots,L-1\}$.
\end{enumerate}

Note that if such a scheme were to exist, we would be able to send
data from any node to node provided the base network is connected
i.e. if the Gupta-Kumar bound is satisfied (without any extra factor)
we achieve connectivity, since a node with any of the $k$ duty-cycles
can send a message to any of its neighbors. It just has to wait for
the overlap point of time to come.  Also, from a time coverage point
of view this scheme is good since there is no point of time when all
the sensors are off. And, in fact, on average $1/k$ fraction of all
sensors at least are guaranteed to be on at any time step.

The question is: Does such a schedule exist? The answer is
yes. Consider the following simple definition of a schedule:
\begin{quote}
Given $k$ we build $k$ schedules $A_1, \ldots A_k$ by randomly picking
$d$ time slots for each one of them independently of the others. 
\end{quote}

Clearly if $\delta > 1/2$ then any $k$ duty cycles we choose have the
property that all of them overlap, so we focus on the case where
$\delta \leq 1/2$.
\begin{claim}
For a given $k$, the random selection schedule described above has the
property that all the $k$ schedules overlap, with probability at least
\[ 1 - \frac{k(k+1)}{2} \cdot e^{-\delta d}.\]
\end{claim}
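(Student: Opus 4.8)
The plan is to reduce the claim to a single-pair disjointness estimate and then apply a union bound over the pairs of schedules. First I would observe that the $k$ schedules fail to pairwise overlap precisely when there exist distinct indices $i,j$ with $A_i \cap A_j = \emptyset$. So it suffices to control $\pr(A_i \cap A_j = \emptyset)$ for a single fixed pair and then sum this over the pairs, of which there are $\binom{k}{2} \leq \frac{k(k+1)}{2}$.

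Next I would compute the single-pair probability exactly. Since $A_i$ and $A_j$ are chosen independently and uniformly among the $d$-subsets of the $L$-element slot set $\{0,1,\ldots,L-1\}$, conditioning on $A_i$ and counting the $d$-subsets disjoint from it gives
\[ \pr(A_i \cap A_j = \emptyset) = \frac{\binom{L-d}{d}}{\binom{L}{d}} = \prod_{m=0}^{d-1}\frac{L-d-m}{L-m}. \]
The key step is to bound each factor of this product uniformly. Since $L-m \leq L$ for every $0 \leq m \leq d-1$, we have $\frac{L-d-m}{L-m} = 1 - \frac{d}{L-m} \leq 1 - \frac{d}{L} = 1 - \delta$, where I use $\delta = d/L$ (valid here because the scheme is built with $\delta L = d$). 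Applying the elementary inequality $1 - x \leq e^{-x}$ then yields
\[ \pr(A_i \cap A_j = \emptyset) \leq (1-\delta)^d \leq e^{-\delta d}. \]

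Finally, a union bound over the pairs combined with this estimate gives
\[ \pr(\exists\, i \neq j : A_i \cap A_j = \emptyset) \leq \binom{k}{2}\, e^{-\delta d} \leq \frac{k(k+1)}{2}\, e^{-\delta d}, \]
and taking complements establishes the claim. I do not anticipate any genuine obstacle: the whole argument is a union bound together with an elementary product estimate. The only points that require a little care are the monotonicity used in the per-factor bound (that $d/(L-m)$ is smallest at $m=0$) and the observation that the stated coefficient $\frac{k(k+1)}{2}$ is simply a harmless overcount of the true number $\binom{k}{2} = \frac{k(k-1)}{2}$ of unordered pairs, so the weaker stated bound follows a fortiori.
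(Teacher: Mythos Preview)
Your proof is correct and follows essentially the same approach as the paper: compute the single-pair disjointness probability as $\binom{L-d}{d}/\binom{L}{d}$, bound it by $(1-\delta)^d \leq e^{-\delta d}$ via the per-factor estimate, and apply a union bound over pairs. You are in fact slightly more careful than the paper, which asserts there are $k(k+1)/2$ pairs rather than the correct $\binom{k}{2}=k(k-1)/2$; your observation that this is merely a harmless overcount is exactly right.
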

\begin{proof}
The probability that two schedules $A_i$ and $A_j$ do not overlap is
computed by calculating the probability that $A_j$ is picked only from
$\{0,1,2,\ldots,L-1\} \setminus A_i$ i.e.

\[\pr(A_i \cap A_j = \emptyset) = {L-d \choose d}/{L \choose d} =
\frac{L-d \cdot (L-d-1)\cdots (L-2d+1)}{L \cdot (L-1) \cdots (L-d+1)}
\leq \left(1 - \frac{d}{L}\right)^d \leq e^{-\delta d}.\]

Since there are $k(k+1)/2$ such pairs, we get the result claimed.
\end{proof}

We turn to the time coverage property. Clearly, the probability that a given slot $i$, $0 \leq i < L$ is not
covered by a given schedule is $(1 - \delta)$. Since all the schedules
are independent, slot $i$ is not covered by any schedule is $(1 -
\delta)^k$. Hence, using the union bound over all the $L$ slots, we
get:
\begin{claim}
The probability that every time slot is covered by at least one
schedule is at least
\[1 - L \cdot e^{-\delta k}.\]
\end{claim}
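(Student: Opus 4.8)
The plan is to control the complementary ``bad'' event---that at least one slot is left uncovered---by a union bound over the $L$ slots, which reduces the entire question to a one-slot marginal computation that is insensitive to the joint structure within each schedule. First I would fix an arbitrary slot $i \in \{0,1,\ldots,L-1\}$ and compute the probability that a single schedule $A_j$ fails to cover it. Since $A_j$ is a uniformly random $d$-element subset of the $L$ slots, slot $i$ belongs to $A_j$ with probability $d/L = \delta$, so $\pr(i \notin A_j) = 1 - \delta$.

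Next I would exploit the fact that the $k$ schedules $A_1,\ldots,A_k$ are drawn independently of one another. Writing $U_i$ for the event that slot $i$ is covered by none of the schedules, independence gives
\[ \pr(U_i) = \prod_{j=1}^k \pr(i \notin A_j) = (1-\delta)^k. \]
I would then apply the standard inequality $1 - x \leq e^{-x}$ with $x = \delta$ to obtain $(1-\delta)^k \leq e^{-\delta k}$, so that $\pr(U_i) \leq e^{-\delta k}$ holds uniformly over all slots $i$.

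Finally, the event that some slot is uncovered is $\bigcup_{i=0}^{L-1} U_i$, and a union bound yields $\pr\!\left(\bigcup_{i=0}^{L-1} U_i\right) \leq \sum_{i=0}^{L-1} \pr(U_i) \leq L\, e^{-\delta k}$. Taking complements then gives that every slot is covered by at least one schedule with probability at least $1 - L\, e^{-\delta k}$, as claimed.

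There is no substantive obstacle here; the only point worth flagging is that within a single schedule the $d$ chosen slots are drawn jointly (without replacement), so the events ``slot $i$ is covered'' and ``slot $i'$ is covered'' are \emph{not} independent within one schedule. This does not affect the argument, because the union bound never requires independence across slots, and the only independence actually used---across the $k$ schedules---holds by construction. I note that this argument mirrors exactly the structure of the preceding overlap claim (a single marginal estimate followed by a union bound over pairs, respectively slots), so the two claims could be established in a single uniform style.
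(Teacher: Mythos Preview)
Your proof is correct and follows essentially the same approach as the paper: compute the single-slot miss probability $(1-\delta)$, use independence across the $k$ schedules to get $(1-\delta)^k \le e^{-\delta k}$, and finish with a union bound over the $L$ slots. Your additional remark that the within-schedule dependence among slots is irrelevant (since the union bound requires no independence across slots) is a nice clarification but not something the paper spells out.
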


\begin{figure*}
	\centering{
		\includegraphics[scale=0.35]{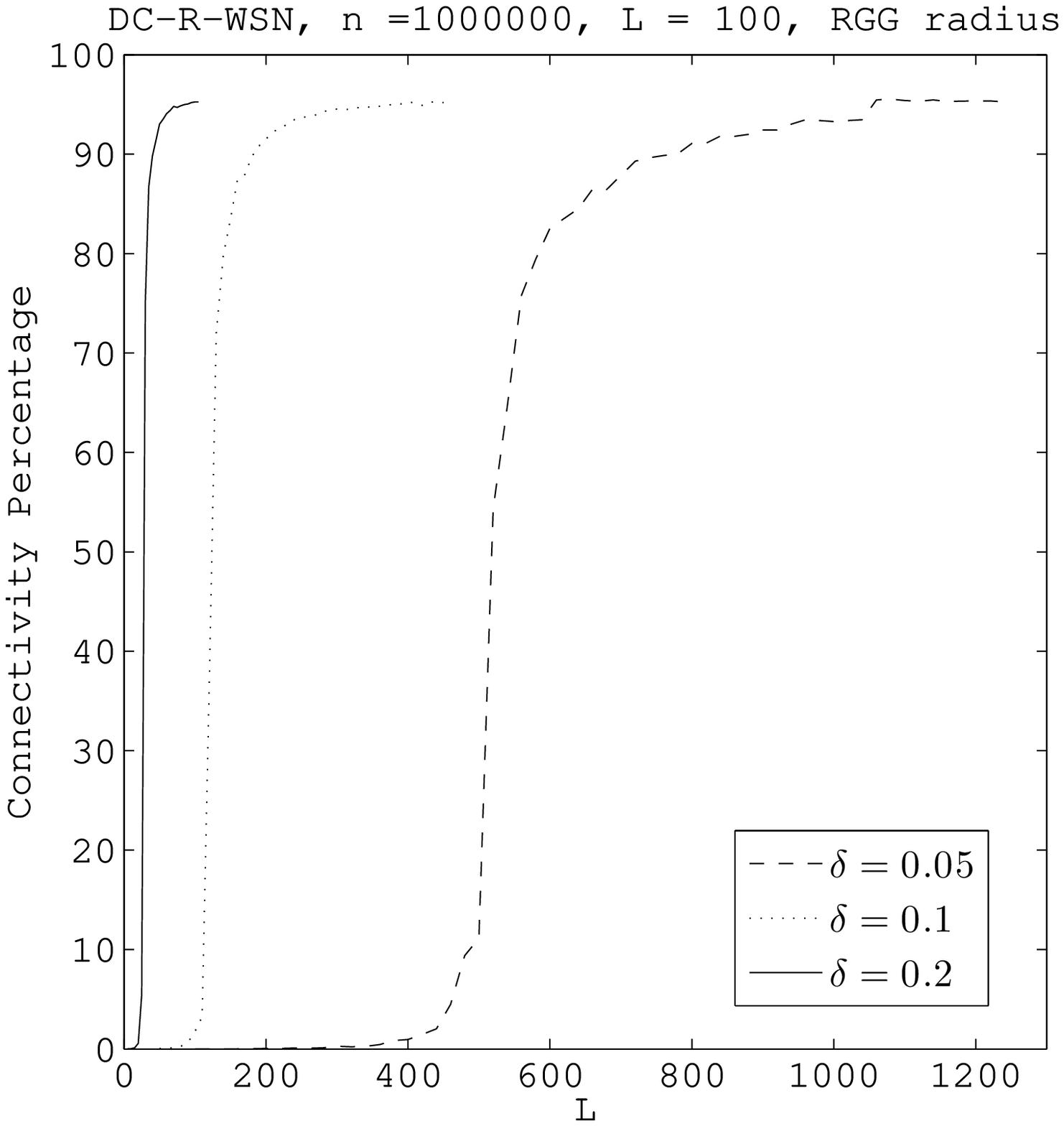}}
        \
	\caption{Connectivity Percentage in a deterministic duty-cycle scheme when $L$ and $\delta$ vary.}\label{fig:DC-R-WSN-rgg}
\end{figure*}

Hence we get that a random choice of the $k$ duty-cycle schedules
gives us a schedule with both properties with probability:
\[ 1 - \left(\frac{k(k+1)}{2} \cdot e^{-\delta d} + L \cdot e^{-\delta k}\right).\]

This probability can be made arbitrarily close to 1 by choosing $L$
large enough and setting $k$ to a value that is $\theta(\log L)$. But
even if this probability is non-zero that is fine, because we need to
find these schedules offline and then hard code them into the
sensors. So, a randomized algorithm that keeps selecting $k$-schedules
and then testing them to see if they have both properties will finish
in polynomial time with high probability since testing for the two
properties can easily be done in linear time.

Hence, we see that there is a way of organizing duty-cycles for any
given value of $\delta$ such that if the Gupta-Kumar bound is achieved
then connectivity if achieved. The parameter whose value suffers to
achieve this is $L$ which may have to be made large.
In Figure~\ref{fig:DC-R-WSN-rgg}, the percentage of connectivity is plotted when $L$ and $\delta$ vary. 
For example, for $\delta=0.05$, the 90\% of connectivity is reached for $L \ge 600$ and hence $d=30$ is a good choice 
for the deterministic duty-cycle scheme.

\section{Conclusion: The wider implications of our results}
\label{sec:conclusion}

In this paper we have studied the duty-cycled wireless sensor network
setting and provided a necessary and sufficient condition on the
radius of transmission for connectivity in such networks. The work we
built on \cite{das-icuimc:2012} provided only a
sufficient condition which was a shortcoming that we have
rectified. In the process we have defined a new random connection
model which has not, to the best of our knowledge, been proposed
earlier. The most important contribution of this paper is
Theorem~\ref{thm:vb-gamma} which is a general theorem with
implications beyond the duty-cycling setting. 

An important setting in which Theorem~\ref{thm:vb-gamma} is applicable
is in wireless network security, specifically key-predistribution for
secure communication. In this setting, Eschenauer and Gligor proposed
a scheme in which each node of the network chooses $K$ keys at random
from a pool of $P$ available keys, and secure communication is
possible if two nodes share a common
key~\cite{eschenauer-ccs:2002}. Clearly if these nodes are part of a
sensor network with limited transmission power at each node, the
question of the radius of connectivity arises. It is also fairly clear
that the model here is {\em exactly} similar to that of our {\em
  random selection} duty cycle with $P$ playing the role of $L$ and
$K$ playing the role of $d$. 

In fact, in a recent paper~\cite{yagan-tit-a:2012} it was pointed
out that K. Krzywdzi\'{n}ski and
K. Rybarczyk~\cite{krzywdzinski-mfcs:2011} had proved that if
\[ \pi r(n)^2 \cdot \alpha_n = c \frac{\log n}{n}, \]
where $\alpha_n$ is the probability of two nodes sharing a key
(corresponding to $\gamma$ in our case), then the random geometric
graph with the Eschenauer-Gligor scheme is connected with probability
tending to 1 if $c > 8$ and is disconnected with probability tending
to 1 if $c < 1$. In~\cite{yagan-tit-a:2012} the author also claims
that Yi. et. al.~\cite{yi-dmaa:2010} had conjectured a stronger result
on the lines of Gupta and Kumar's result ~\cite{gupta-chapter:1999},
i.e. the Eschenauer-Gligor scheme on a graph is connected with
probability tending to 1 if and only if
\[ \pi r(n)^2 \cdot \alpha_n =  \frac{\log n + c(n)}{n}, \]
and $c(n) \rightarrow \infty$ as $n \rightarrow \infty$.  This
conjecture (Eq. (3) in~\cite{yagan-tit-a:2012}) can now be
considered closed since it is nothing other than our
Corollary~\ref{cor:vb-gamma-DC-R-WSN} which, as we have seen, follows
easily from Theorem~\ref{thm:vb-gamma}. It is our belief that for the
more abstract problem of key-predistribution on a complete graph (see
e.g.~\cite{yagan-tit:2012}), our method of defining a vertex-based
model of connectivity and proving the requisite properties may help
improve the current best known results in that area as well.

We feel that as in the case of key-predistribution on \rgg s, there
may be other settings where Theorem~\ref{thm:vb-gamma} may be
applicable, for example the study of connectivity in WSNs with
directional antennas where the direction is fixed at random
independently at each node. Our contribution, therefore, is a general
and foundational contribution, as well as a detailed and in-depth
study of the particular setting of duty-cycled WSNs.

\bibliographystyle{abbrv}
\bibliography{dutycycle}

\appendix
\section*{APPENDIX}

\section{Proof of Lemma~\ref{lem:penrose-modified}}

In order to prove this theorem we will show that Penrose's proof
technique can be followed for our model as well. To show this we
define some notation first defined in~\cite{penrose-aap:1991}.

Suppose $U = \{x_1, \ldots, x_k\}$ is finite set of points in $\RR^2$
  and $x_0$ is another point in $\RR^2$. Suppose we form a random
  graph $G$ with the points $U \cup \{x_0\}$ using our connection function
  $g(\cdot,\cdot)$ i.e. we associate a copy of $Z$ with each of the
  points and then draw edges as defined in~(\ref{eq:def}). 
\begin{itemize}
\item Define
  $g_1(x_0; U)$ to be the probability that $x_0$ is {\em not} isolated
  in this graph i.e. that $x_0$ is connected to at least one of the
  points in $U$. 
\item Also define $g_2(x_0, x_1, \ldots, x_k)$ to be the
  probability that the graph $G$ is connected.
\end{itemize}

The proof technique partitions the event $\{ |W| = k\}$ by
mapping the points of $W$ to a lattice of points of the form
$\delta z, z \in \ZZ^2$ and studying the number of such points in the
mapping. This is done as follows: We define a map $F_\delta : \RR^2
\rightarrow \delta \ZZ^2$ which sends a point of $\RR^2$ to the
nearest point of the lattice $\delta \ZZ^2$. Since in general there
may be more than one such point, we note that in a Poisson point
process this does not happen with probability 1, and hence this map is
well defined with probability 1. What could happen, however is that a
number of points of $\RR^2$ get mapped to the same point of $\delta
\ZZ^2$ and in fact we can see that a box of width $\delta$ centred 
at a point of the lattice is mapped to the lattice point at the centre
by $F_\delta$. To be able to describe such boxes, we denote by $B_l$
the box $[-l,l] \times [-l,l]$ i.e. the box of width $2l$ centred at
the origin. Further we will denote by $S_\delta$ the set of points of
$\delta \ZZ^2$ which are images of the points in $W$. 

In order to prove the theorem, we will prove the following lemmas that
Penrose demonstrated are true for the random connection model.

\begin{lemma}
\label{lem:penrose-1}
For sufficiently small $\delta$,
\[ \lim_{\lambda \rightarrow \infty} \frac{\sum_{k=2}^{\infty}
  \pr_\lambda(|W| = k \cap |S_\delta| = 1) }{q_1(\lambda)} = 0.
\]
\end{lemma}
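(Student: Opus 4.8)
The plan is to read off the event $\{|S_\delta|=1\}$ geometrically and then use the connection diversity condition~(\ref{eq:conn-div}) to show that confining a component of size at least two to a single $\delta$-box is exponentially less likely (in $\lambda$) than the origin being isolated. Since the origin is fixed at $(0,0)$ and $F_\delta(0)=0$, the event $\{|S_\delta|=1\}$ forces every point of $W$ to lie in the box $Q:=[-\delta/2,\delta/2]^2$ centred at the origin, so that $\sum_{k\ge2}\pr_\lambda(|W|=k\cap|S_\delta|=1)=\pr_\lambda(|W|\ge2,\,W\subseteq Q)=:p_2(\lambda)$. For $\delta$ small enough that $\sqrt2\,\delta<r$, all points of such a $W$ are mutually within distance $r$, which simplifies the internal bookkeeping.

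First I would condition on the configuration inside $Q$ (the Poisson points in $Q$, their marks, and the internal edges), which determines the component $W_Q$ of the origin within $Q$. Given $W_Q$, the event that $W=W_Q$, i.e.\ that no point outside $Q$ connects to $W_Q$, is independent of the internal configuration and, by the void property of the marked Poisson process, has probability $\exp(-\lambda\int_{\RR^2\setminus Q}\rho_{W_Q}(x)\,dx)$, where $\rho_{W_Q}(x)=\pr(\exists w\in W_Q:\|x-w\|\le r,\ f(Z,Z_w)=1)$ for an independent copy $Z$ is the probability that a point placed at $x$ connects to the cluster. This factorisation is the structural heart of the argument.

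The key estimate is a lower bound on the isolation exponent. Restricting the integral to the bulk disc $D=\{\delta\le\|x\|\le r-\delta\}$, on which every point of $Q$ (hence every $w\in W_Q$) lies within distance $r$ of $x$, the integrand equals $\pr(\bigcup_{w\in W_Q}f(Z,Z_w)=1)$. When $|W_Q|\ge2$ this union involves the origin's mark together with at least one further independent mark, so by the connection diversity condition~(\ref{eq:conn-div}) it exceeds the single-point value $\gamma$ by at least a constant $c>0$ that is uniform in the cluster size. Since $|D|=\pi r^2-O(r\delta)$, the isolation probability of any cluster of size $\ge2$ in $Q$ is at most $\exp(-\lambda(\pi r^2-O(r\delta))(\gamma+c))$, whereas $q_1(\lambda)=\exp(-\lambda\gamma\pi r^2)$ (up to the averaging over the origin's mark); each admissible cluster therefore contributes a factor $e^{-\lambda(\pi r^2c-O(r\delta))}$ beyond $q_1(\lambda)$.

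Finally I would sum over all admissible clusters. The expected number of size-$k$ components of the origin confined to $Q$ is at most $(\lambda\delta^2)^{k-1}/(k-1)!$, so summing the per-cluster bound over $k\ge2$ contributes a multiplicity factor of at most $e^{\lambda\delta^2}-1$, giving
\[
\frac{p_2(\lambda)}{q_1(\lambda)}\ \le\ \big(e^{\lambda\delta^2}-1\big)\,e^{-\lambda(\pi r^2c-O(r\delta))}\ \le\ e^{-\lambda(\pi r^2c-\delta^2-O(r\delta))},
\]
which tends to $0$ as $\lambda\to\infty$ once $\delta$ is small enough that $\pi r^2c-\delta^2-O(r\delta)>0$; this is exactly the meaning of ``for sufficiently small $\delta$.'' I expect the main obstacle to be making the diversity gain $c$ genuinely uniform: for a fixed origin mark and a single neighbour the extra coverage can degenerate, so one must either retain \emph{all} neighbours of the origin in $Q$ (whose collective marks cover the complement of the origin's connection set, as guaranteed by (\ref{eq:conn-div}) together with non-triviality~(\ref{cond:non-triviality})) or average over the origin's mark as in the statement of (\ref{eq:conn-div}). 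Controlling this uniformity, together with the entropy term $e^{\lambda\delta^2}$ arising from arbitrarily large clusters confined to $Q$ --- both of which are what force $\delta$ to be taken small --- is the crux of the proof.
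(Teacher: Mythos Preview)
Your proposal is correct and follows essentially the same route as the paper: identify $\{|S_\delta|=1\}$ with $\{W\subset B_{\delta/2}\}$, bound the isolation exponent on a region where every outside point is within distance $r$ of every point of $W$, invoke connection diversity~(\ref{eq:conn-div}) to gain an extra constant $c$ over the single-point value, and absorb the combinatorics of placing $k-1$ points in the box via the factor $(\lambda\delta^2)^{k-1}/(k-1)!$ summing to $e^{\lambda\delta^2}$. The paper avoids the uniformity issue you flag by never conditioning on the marks: it works throughout with the \emph{averaged} functions $g_1,g_2$ via Proposition~\ref{prp:penrose}, so that the diversity condition applies directly to i.i.d.\ copies of $Z$; the comparison region is the disc $C(0,r/2)$ rather than your annulus $D$, but this is cosmetic.
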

This lemma is clearly not enough to prove
Lemma~\ref{lem:penrose-modified}, but it helps us prove the
following lemma from which the theorem follows:
\begin{lemma}
\label{lem:penrose-m}
For sufficiently small $\delta$ and for each fixed $m$,
\[ \lim_{\lambda \rightarrow \infty} \frac{\sum_{k=1}^{\infty}
  \pr_\lambda(|W| = k \cap |S_\delta| = m) }{q_1(\lambda)} = 0.
\]
\end{lemma}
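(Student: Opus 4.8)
The plan is to argue by induction on $m$, taking Lemma~\ref{lem:penrose-1} as the base case and reducing the $m$-cell estimate to the $(m-1)$-cell estimate by peeling off one occupied cell. Before starting I would record a small bookkeeping point: for $m\geq 2$ the term $\pr_\lambda(|W|=1\cap|S_\delta|=m)$ is zero, since an isolated origin has $W=\{0\}$ and hence $|S_\delta|=1$; thus for $m\geq 2$ the sum $\sum_{k=1}^\infty$ in the statement equals $\sum_{k=2}^\infty$, and it is this ``nontrivial finite component, coarse-grained to $m$ cells'' quantity, which I write $T_m(\lambda)=\sum_{k=2}^\infty\pr_\lambda(|W|=k\cap|S_\delta|=m)$, that the induction controls. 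With this reading the case $m=1$ is exactly Lemma~\ref{lem:penrose-1}.

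For the inductive step I would assume $T_{m-1}(\lambda)=o(q_1(\lambda))$ and deduce $T_m(\lambda)=o(q_1(\lambda))$. The first ingredient is the Mecke/Palm formula, which expresses $\pr_\lambda(|W|=k\cap|S_\delta|=m)$ as an integral over the positions of the $k-1$ non-origin points of $W$ of the probability that the $k$ points form a connected graph (the quantity $g_2(0,x_1,\ldots,x_{k-1})$ defined above) times the probability that no further point of the process connects into $W$. Because the marks $\{Z_u\}$ are independent across vertices, this ``sealing'' probability factorizes over the external points and takes the exponential form $\exp\{-\lambda\int(\text{probability a point at }y\text{ connects to }W)\,dy\}$, exactly as in Penrose's computation but with our connection function $g$ in place of his; the non-triviality condition~(\ref{cond:non-triviality}) ensures this per-point connection probability is bounded below by a positive constant whenever $y$ lies within distance $r$ of a point of $W$, which is all his estimates use. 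Dependence among the edges inside $W$ is irrelevant here because every external point uses a fresh independent copy of $Z$.

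I would then peel off the occupied cell $z\in S_\delta$ farthest from the origin (ties broken by a fixed rule). Occupying this extra cell relative to an $(m-1)$-cell configuration has two effects. First, the points of $W$ in $z$ must join the remainder, and summing over the number and positions of these points within the single cell $z$ is controlled by a factor of exactly the kind already bounded in Lemma~\ref{lem:penrose-1}; here the hypothesis that $\delta$ is \emph{sufficiently small} is essential, since it guarantees that the entropic gain from the unboundedly many points that may occupy a cell (growing like $\exp\{c\lambda\delta^2\}$) is dominated by the exponential sealing penalty. Second, because $z$ is the extreme cell there is a region of area at least some $a_0(\delta)>0$ on its far side, disjoint from the neighbourhood already sealed by the other $m-1$ cells, which must be free of any point connecting into $W$; at high density this contributes an extra factor of order $\exp\{-c\lambda a_0\}\to 0$. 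Combining, I obtain a bound $T_m(\lambda)\leq C(\delta,m)\,T_{m-1}(\lambda)$ with $C(\delta,m)<\infty$ independent of $\lambda$, so that $T_m(\lambda)=o(q_1(\lambda))$ by the induction hypothesis.

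The step I expect to be the main obstacle is making this peeling rigorous and uniform in $\lambda$. One must exhibit a cell whose removal leaves the remaining occupied cells connected \emph{at the level of the component} (a non-cut cell, which exists for any connected cell-structure on $\geq 2$ cells), verify that the area $a_0(\delta)$ freshly sealed by the peeled cell is genuinely disjoint from the area sealed by the others so that the two sealing factors multiply, and control the connection factor for the peeled points uniformly over their number and positions while carrying the whole estimate through the sum over component sizes $k$. The within-cell structure is the delicate point specific to our model, since proximity alone does not force a connection; but the only properties actually needed --- factorization of the sealing probability and a uniform lower bound on connection probabilities --- follow from the independence of the vertex marks and from non-triviality~(\ref{cond:non-triviality}), so Penrose's geometric estimates for the random connection model transfer essentially verbatim with $g$ replacing his connection function.
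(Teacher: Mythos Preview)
Your approach is genuinely different from the paper's, and the inductive step as written has a real gap.

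The paper does \emph{not} induct on $m$. Instead, because the connection function has bounded support, for each $m$ there are only finitely many admissible shapes $\eta\subset\delta\ZZ^2$ with $|\eta|=m$, and the paper treats each $\eta$ separately. Writing the event $\{|W|=k,\ S_\delta=\eta\}$ as $E(\eta,k)\cap H_\eta$ (where $E(\eta,k)$ says the component restricted to $F_\delta^{-1}(\eta)$ has $k$ points and touches every cell, and $H_\eta$ says no point outside $F_\delta^{-1}(\eta)$ connects to it), the paper bounds
\[
\pr_\lambda(H_\eta\mid E(\eta,k))=\exp\Bigl\{-\lambda\int_{\RR^2\setminus F_\delta^{-1}(\eta)}g_1(y;W_\eta)\,dy\Bigr\}
\]
\emph{directly against} $q_1(\lambda)=\exp\{-\lambda\int_{\RR^2}g(0,x)\,dx\}$. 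The geometric point is that the integral of $g_1$ over three disjoint exterior regions $A_l,A_r,A_t$ (to the left, right, and above the extremal cells of $\eta$) already exceeds $\int_{\RR^2}g(0,x)\,dx$ by a fixed constant $c>0$, so the ratio is at most $e^{-\lambda c}$. Then $\sum_k\pr_\lambda(E(\eta,k))\le 1$ finishes the job with no induction and no need to track the internal structure of $W$.

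Your peeling argument, by contrast, tries to produce a bound $T_m(\lambda)\le C(\delta,m)\,T_{m-1}(\lambda)$. The obstruction you flag is exactly the one that breaks this: when you remove the points of $W$ lying in the peeled cell $z$, the remaining $k-j$ points need \emph{not} form a connected component, even if $z$ is a non-cut cell at the level of the cell adjacency graph. A point in some other cell may only connect to the rest of $W$ through a vertex in $z$; there is no guarantee that the residual graph is connected. Consequently the connectivity factor $g_2(0,x_1,\ldots,x_{k-1})$ does not dominate (nor is dominated by) $g_2$ of the residual configuration, and the $(m-1)$-cell contribution you obtain after peeling is not a term of $T_{m-1}$. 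Without this, the inequality $T_m\le C\,T_{m-1}$ is unsupported.

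The paper's route sidesteps the whole issue: it never manipulates the connectivity factor at all, because once $\pr_\lambda(H_\eta\mid E(\eta,k))/q_1(\lambda)\le e^{-\lambda c}$ is established uniformly in $k$, the sum over $k$ is bounded by $1$ for free. If you want to salvage an inductive argument you would need to control the connectivity of the residual component, which is strictly harder than what the direct comparison requires.
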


We will need the following general characterization:
\begin{proposition}
\label{prp:penrose}
For any $k \in \NN$,
\begin{align}
\pr_\lambda(|W| = k \cap W \subset B_l) = & \frac{\lambda^{k-1}}{(k-1)!} \int_{B_l} \cdots
\int_{B_l} g_2(0,x_1,\ldots, x_k-1)\nonumber \\
& \cdot \exp\left\{ -\lambda \int_{\RR^2}
    g_1(y;\{0,x_1,\ldots,x_{k-1})dy \right\} dx_1\cdots dx_{k-1}.
\end{align}
\end{proposition}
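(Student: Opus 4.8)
The plan is to read the displayed identity off the Palm calculus of the underlying Poisson process, mimicking the route Penrose uses for the ordinary random connection model. Write $\eta$ for the Poisson process of intensity $\lambda$ together with the deterministic point at the origin. The event $\{|W| = k,\, W \subset B_l\}$ says exactly that the component of the origin consists of the origin together with $k-1$ further points $x_1, \ldots, x_{k-1}$ of $\eta$, that all of these lie in $B_l$, and that no remaining point of $\eta$ attaches to this set. I would write $\pr_\lambda(|W| = k \cap W \subset B_l)$ as the expectation of a sum, over all $(k-1)$-element subsets of $\eta$, of the indicator that that subset together with the origin is precisely the component of the origin, and then apply the multivariate Mecke (reduced Campbell) formula for the Poisson process. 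This converts the sum over subsets into a $(k-1)$-fold Lebesgue integral over $B_l$, produces the prefactor $\lambda^{k-1}$, and the factor $1/(k-1)!$ appears because the added points are unlabelled whereas the integral ranges over ordered tuples. Equivalently, and more elementarily, one may condition on the number of points of $\eta$ in a large box $B_L$, use that conditionally they are i.i.d.\ uniform, count combinatorially, and let $L \to \infty$; this yields the same identity.

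After this reduction the integrand, for fixed positions $0, x_1, \ldots, x_{k-1}$, is the probability that in the augmented configuration $\eta + \sum_i \delta_{x_i}$ these $k$ points form exactly the component of the origin. I would split this into two requirements: (i) the induced graph on $\{0, x_1, \ldots, x_{k-1}\}$ is connected, which contributes $g_2(0, x_1, \ldots, x_{k-1})$; and (ii) no other point of $\eta$ is joined to this set. For (ii) I would invoke the thinning/marking property of the Poisson process: call a point $y \in \eta$ ``bad'' if it connects to at least one of $0, x_1, \ldots, x_{k-1}$. The bad points form a thinned Poisson process whose intensity at $y$ equals $\lambda$ times the connection probability $g_1(y; \{0, x_1, \ldots, x_{k-1}\})$, and this probability vanishes once $y$ lies farther than $r$ from every point of the set (since $r < \infty$), so the integral over $\RR^2$ is finite. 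The void probability of this thinned process is $\exp\{-\lambda \int_{\RR^2} g_1(y; \{0, x_1, \ldots, x_{k-1}\})\, dy\}$, which is the second factor in the statement.

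The step that needs the most care, and the only place the vertex-based model departs from Penrose's independent-edge setting, is the factorisation of (i) and (ii). In the ordinary random connection model edges are drawn independently, so internal connectivity and external isolation are independent and the product form is immediate. Here both events depend on the shared vertex marks $\mathbf{z} = (Z_0, Z_{x_1}, \ldots, Z_{x_{k-1}})$ carried by the internal points, so they are \emph{not} independent unconditionally. My plan is to condition on $\mathbf{z}$ first: given $\mathbf{z}$ and the positions, the induced graph on the $k$ internal points is a deterministic indicator, while each external point $y$ carries its own independent copy of $Z$, so conditionally on $\mathbf{z}$ the two events are genuinely independent and the thinning argument above applies verbatim with the conditional connection probability. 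This yields the exact conditional expression, namely the average over $\mathbf{z}$ of the connectivity indicator times the conditional void factor. Recovering the displayed \emph{product} of $g_2$ and $\exp\{-\lambda \int g_1\}$ is precisely the point where one must exploit that the internal connectivity decouples from the effect of the internal marks on the void term; I expect this decoupling to be the main obstacle, since it is exactly what the edge-independence of the random connection model supplies for free and what must be justified separately here. The remaining bookkeeping (the Mecke reduction, the combinatorial factor $1/(k-1)!$, and the finiteness of the support of $g_1(\cdot; U)$) is routine.
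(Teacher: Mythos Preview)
Your overall route is the same as the paper's: you mention the ``equivalent, more elementary'' approach of conditioning on the number of Poisson points in a box and then unconditioning, and that is precisely what the paper does (condition on $|V(B_l)|=m$, use that the $m$ points are i.i.d.\ uniform, sum over $m$). The Mecke reduction and the $1/(k-1)!$ bookkeeping are routine either way.

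The substantive point is the one you yourself flag as the main obstacle, and here you are more careful than the paper. The paper simply writes the product
\[
g_2(0,x_1,\ldots,x_{k-1})\cdot\exp\Bigl\{-\lambda\!\int g_1(y;0,x_1,\ldots,x_{k-1})\,dy\Bigr\}\cdot\prod_{i=k}^m\bigl(1-g_1(x_i;0,x_1,\ldots,x_{k-1})\bigr)
\]
as the probability that $W=\{0,x_1,\ldots,x_{k-1}\}$, justifying it with the remark that ``the events that $x_i$ is isolated from $0,x_1,\ldots,x_{k-1}$ form an independent collection.'' In the vertex-based model this is not correct: those isolation events, and the internal connectivity event, all share the internal marks $\mathbf z=(Z_0,Z_{x_1},\ldots,Z_{x_{k-1}})$, so neither the factor $g_2$ nor the factors $1-g_1$ can be pulled out of an expectation over~$\mathbf z$. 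Your conditional-on-$\mathbf z$ argument is the right repair and gives the exact identity
\[
\frac{\lambda^{k-1}}{(k-1)!}\int_{B_l}\!\!\cdots\!\int_{B_l}
\EE_{\mathbf z}\!\left[\hat g_2(\mathbf z)\,\exp\Bigl\{-\lambda\!\int_{\RR^2}\hat g_1(y;\mathbf z)\,dy\Bigr\}\right]dx_1\cdots dx_{k-1},
\]
with $\hat g_2$ the connectivity indicator and $\hat g_1$ the $\mathbf z$-conditional connection probability. The ``decoupling'' you were hoping for, turning this $\mathbf z$-average into the product $g_2\cdot\exp\{-\lambda\int g_1\}$, does not hold in general (take $Z$ uniform on $\{0,1\}$ and $f(z_1,z_2)=\mathbf 1\{z_1\neq z_2\}$: conditioned on $0$ and $x_1$ being connected they have opposite marks, so any nearby $y$ connects with probability $1$, whereas the unconditional $g_1$ is $3/4$). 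So the displayed equality in the proposition is not literally true for the vertex-based model; the paper's proof does not address this, and your instinct that this is the crux is right. For the downstream use (upper bounds in the proofs of Lemmas~\ref{lem:penrose-1} and~\ref{lem:penrose-m}) the mark-conditional version suffices after bounding $\hat g_2\le 1$ and lower-bounding $\hat g_1$ via the connection-diversity condition, so the programme survives, but the proposition as stated should be the $\EE_{\mathbf z}[\cdots]$ form.
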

\begin{proof}
Let $E(n,l)$ denote the event that $|W| = k$ and $W \subseteq
B_l$. Let us condition on the event that the number of points of the
Poisson process in $B_l$ is $m$, we call this event $|V(B_l)| =
m$. Recall that the $m$ points inside $B_l$ are uniformly distributed
when we condition on $|V(B_l)| = m$. Hence, we have that:
\begin{equation}
\label{eq:e_n_l}
\pr(E(n,l) | |V(B_l)| = m) = {m \choose k-1}\cdot
\left(\frac{1}{2l}\right)^{2m} \cdot \int_{B_l} \cdots \int_{B_l}
\pr'(W = \{0,x_1,\ldots,x_{k-1}\}) dx_1\cdots dx_{k-1},
\end{equation}
where
$\pr'(0,x_1,\ldots,x_{k-1})$ is the probability measure in the
subspace where there are $m$ uniformly distributed points in $B_l$,
there is a point at the origin, and there is a Poisson point process
of density $\lambda$ in $\RR^2 \setminus B_l$. The integral on the
right hand side is obtained by conditioning on the events that each of
the $m$ points is in a square of area $dx_i, 1 \leq i \leq m$ (with
probability $dx_i/(2l)^2$ and then choosing $k-1$ of them to be part
of $W$. We decondition by integrating over $B_l$ for all $m$
points. The $m-(k-1)$ that don't get chosen simply contribute a factor
of $1/(2l)^2$ to the integral.

If $W = \{0,x_1,\ldots,x_{k-1}\}$ then it must be the case that no
point from $\RR^2 \setminus B_l$ is connected to the points
$0,x_1,\ldots,x_l$. We use the function $g_1(\cdot;\cdots)$ defined
above in this case. As remarked above, given a fixed set of points,
$0,x_1,\ldots,x_{k-1}$ in this case, the collection of events that any
point of $\RR^2\setminus B_l$ 
is connected to (or isolated from) these points is an independent
collection. Hence, by Proposition 1.3 of Meester and
Roy~\cite{meester:1996}, the set of points connected to these points
forms an (inhomogeneous) thinning of the Poisson point process $V(\RR^2
\setminus B_l$, with a thinning factor which is exactly equal to
$g_1(y;0,x_1,\dots,x_{k-1})$ for any point $y \in \RR^2 \setminus
B_l$. Hence, the probability that no point of $\RR^2 \setminus B_l$ is
connected to any of $0,x_1,\ldots,x_{k-1}$ is 
\[\exp\left\{ -\lambda \int_{\RR^2 \setminus B_l}
  g_1(y;0,x_1,\dots,x_{k-1}) dy\right\}.\]
From the definition of $g_2(\cdots)$ we obtain that
\begin{align*}
\pr'(W = \{0,x_1,\ldots,x_{k-1}\}) & = g_2(0,x_1,\ldots,x_{k-1}) \cdot \exp\left\{ -\lambda \int_{\RR^2 \setminus B_l}
  g_1(y;0,x_1,\dots,x_{k-1}) dy\right\}\\
&\cdot \prod_{i=k}^m (1 - g_1(x_i;0,x_1,\ldots,x_{k-1})),
\end{align*}
where again we use the fact that the events that $x_i$ is isolated from
$0,x_1,\ldots,x_{k-1}$, for $k \leq i \leq m$ form an independent
collection. Now, substituting into~(\ref{eq:e_n_l}), we get 
\begin{align}
\pr(E(n,l) \cap |V(B_l)| = m) & =  e^{-\lambda (2l)^2} \frac{(\lambda (2l)^2)^m}{m!}
{m \choose k-1}\cdot
\left(\frac{1}{2l}\right)^{2m} \nonumber \\
& \cdot \int_{B_l} \cdots \int_{B_l}  g_2(0,x_1,\ldots,x_{k-1}) \nonumber \\
& \cdot \exp\left\{ -\lambda \int_{\RR^2 \setminus B_l}
  g_1(y;0,x_1,\dots,x_{k-1}) dy\right\} \nonumber \\
& \cdot \left( \int_{B_l} (1 -
  g_1(z;0,x_1,\ldots,x_{k-1}))dz\right)^{m-(k-1)} dx_1\cdots
dx_{k-1}. \label{eq:e_n_l_cond}
\end{align}
Now consider the quantity:
\begin{align*}
a_m & = e^{-\lambda (2l)^2}\frac{(\lambda (2l)^2)^m}{m!}
{m \choose k-1}\cdot
\left(\frac{1}{2l}\right)^{2m}  \cdot \left( \int_{B_l} (1 -
  g_1(z;0,x_1,\ldots,x_{k-1}))dz\right)^{m-(k-1)}\\
& = e^{-\lambda (2l)^2}\frac{\lambda^m}{(m-(k-1))!(k-1)!} \cdot \left( \int_{B_l} (1 -
  g_1(z;0,x_1,\ldots,x_{k-1}))dz\right)^{m-(k-1)}.
\end{align*}
Summing $a_m$ over all $m \geq k-1$, we get 
\[\sum_{i=k-1}^\infty a_m = \frac{e^{-\lambda (2l)^2}
  \lambda^{k-1}}{(k-1)!}\cdot \exp - \left\{ \lambda \int_{B_l} (1 -
  g_1(z;0,x_1,\ldots,x_{k-1}))dz\right\}.\]
Combining this with~(\ref{eq:e_n_l_cond}), we get
\begin{align*}
\pr(E(n,l)) & = \frac{e^{-\lambda (2l)^2}
  \lambda^{k-1}}{(k-1)!}  \cdot \int_{B_l} \cdots \int_{B_l}
g_2(0,x_1,\ldots,x_{k-1})  \\
& \cdot \exp\left\{ -\lambda \int_{\RR^2 \setminus B_l}
  g_1(y;0,x_1,\dots,x_{k-1}) dy\right\} \cdot \exp\left\{ -\lambda \int_{B_l}
  g_1(y;0,x_1,\dots,x_{k-1}) dy\right\}  \\
& \cdot \exp\left\{\int_{B_l} dz\right\}
\end{align*}
The result follows by observing that the last term is equal to $e^{\lambda (2l)^2}$.
\end{proof}

We now move to the proof of the two lemmas.

\begin{proofof}{Lemma~\ref{lem:penrose-1}}
Denote by $q^\delta_k(\lambda)$ the quantity $\pr(|W| = k \cap
|S_\delta| =1)$. Since the condition that $|S_\delta| = 1$ is the same
as saying that all the points of $W$ are contained in $B_{\delta/2}$
i.e. the box with width $\delta$ centred at the origin, we can use
Proposition~\ref{prp:penrose} to say that 
\begin{align*}
\frac{q_k^\delta(\lambda)}{q_1{\lambda}} &= \frac{\lambda^{k-1}}{(k-1)!} \int_{B_{\delta/2}} \cdots
\int_{B_{\delta/2}} g_2(0,x_1,\ldots, x_k-1) \\
& \cdot \exp\left\{ -\lambda \int_{\RR^2}
    g_1(y;\{0,x_1,\ldots,x_{k-1})dy \right\} dx_1\cdots dx_{k-1}.\\
& \leq \frac{\lambda^{k-1}}{(k-1)!} \int_{B_{\delta/2}} \cdots
\int_{B_{\delta/2}} \exp\left\{ -\lambda \int_{\RR^2}
    g_1(y;\{0,x_1,\ldots,x_{k-1})dy \right\} dx_1\cdots dx_{k-1}.
\end{align*}
The second step following by the fact that the function
$g_2(\cdot;\cdots)$ takes value at most 1. 
Now, consider a $\delta$ small enough that $B_{\delta}$ is completely
contained in a circle centred at 0 with radius $r/2$, where $r$ is the
radius defined in the connection function. Let us denote this circle
$C(0,r/2)$. In order to apply the connection
diversity condition we use the lower bound
\[\int_{\RR^2}
    g_1(y;\{0,x_1,\ldots,x_{k-1})dy \geq \int_{C(0,r/2)}
    g_1(y;\{0,x_1,\ldots,x_{k-1})dy.\]
Clearly a point in $C(0,r/2)$ is within distance $r$ of
any other point in $C(0,r/2)$, so we can apply the connection
diversity condition~(\ref{eq:conn-div}) to get
\begin{align*}
\frac{q_k^\delta(\lambda)}{q_1{\lambda}} & \leq \frac{\lambda^{k-1}}{(k-1)!} \int_{B_{\delta/2}} \cdots
\int_{B_{\delta/2}} \exp\left\{ -\lambda \int_{C(0,r/2)} c dy \right\}
dx_1\cdots dx_{k-1}.\\
& = \frac{(\lambda\cdot\delta^2)^{k-1}}{(k-1)!}\cdot e^{-c\pi r^2 /2}.
\end{align*}
Summing over $k\geq 2$, we get
\begin{align*}
\frac{\sum_{k=2}^\infty q_k^\delta(\lambda)}{q_1{\lambda}} & \leq
e^{-c\pi r^2 /2}. \sum_{k=2}^\infty
\frac{(\lambda\cdot(\delta)^2)^{k-1}}{(k-1)!}\\
& = e^{\lambda \delta^2 -c\pi r^2 /2},
\end{align*}
Which tends to 0 as $\lambda \rightarrow \infty$ for all $\delta$
satisfying $\lambda \delta^2 -c\pi r^2 /2 < 0$.
\end{proofof}

\begin{proofof}{Lemma~\ref{lem:penrose-m}}
Since our connection function allows no connection beyond a fixed
length (the so-called ``bounded support'' condition), there are only
finitely many configurations of $S_\delta$ with $|S_\delta| = m$. So,
we will show that for any $\eta$ which is a finite subset of $\delta
\ZZ^2$, 
\[ \sum_{k=1}^\infty \pr_\lambda(|W| = k \cap S_\delta = \eta)/q_1(\lambda)
\rightarrow 0 \mbox{ as }\lambda \rightarrow \infty.\]
We denote by $W_\eta$, the connected component containing the origin
when we remove all the points of $V$ that lie outside
$F^{-1}_\delta(\eta)$ i.e. the area of the plane outside the set of
squares of side $\delta$ depicted in Figure~\ref{fig:penrose-k}
\begin{figure}[htbp]
\begin{center}
\input{finalfigures/penrose-k.pstex_t}
\caption{$F_\delta$ with each point shown surrounded by a box of the
  form $B_{\delta/2}$. $A_l$, $A_r$ and $A_t$ are also shown.}
\label{fig:penrose-k}
\end{center}
\end{figure}
Let $E(\eta,k)$ be the event that $W_\eta$ has $k$ points in it and
for each point of $\eta$ at least one point of $W_\eta$ lies in the
square of side $\delta$ centred at that point. Let $H_\eta$ be the
event that there is no point of $V$ in $\RR^2 \setminus
F^{-1}_\delta(\eta)$ that is connected to any point of $W_\eta$.  Then
the event $\{ |W| = k \cap S_\delta = \eta\}$ is the same as the event
$E(\eta,k) \cap H_\eta$. We will estimate the probability
\begin{equation}
\label{eq:enk}
\pr_\lambda(H_\eta | E(\eta,k)) = \exp\left\{-\lambda\int_{\RR^2
    \setminus F^{-1}_\delta(\eta)} g_1(y;W_\eta)dy\right\}.
\end{equation}
Where, as before, the equality follows from Proposition 1.3 of Meester
and Roy~\cite{meester:1996}.  Suppose we have $\delta_1$ small enough
for the conclusion of Lemma~\ref{lem:penrose-1} to hold. In that case
if $F^{-1}_\delta(\eta)$ can be contained in $B_{\delta_1/2}$ then the
we are done. So we will assume that the width of $F^{-1}_\delta(\eta)$
is at least $\delta_1$. An argument symmetric to the one we will show
can be made if the height is more than $\delta_1$.

Define $x_l = \inf \{x : (x,y) \in \eta\}$, $x_r = \sup \{x : (x,y)
\in \eta\}$ and $x_t = \sup \{y : (x,y) \in \eta\}$ to be the $x$
coordinates of the leftmost and rightmost points, and the $y$
coordinate of the topmost points of $\eta$. Using these, define the
sets $A_l = \{(x,y) : x < x_l - \delta/2\}$, $A_r = \{(x,y) : x < x_l
+ \delta/2\}$ and $A_t = \{(x,y) : y > x_t + \delta/2, x_l - \delta/2
\leq x \leq x_r + \delta/2\}$ (see Figure~\ref{fig:penrose-k}.) Note
that $A_l$, $A_r$ and $A_t$ are disjoint regions of $\RR^2\setminus
F^{-1}_\delta(\eta)$. 

By the definition of $E(\eta,k)$ and $x_l$, if $E(\eta,k)$ occurs
there must exist a point $u_1 = (x_1,y_1) \in W_\eta$ such that $|x_1
- x_l| \leq \delta/2$. We lower bound the probability of a point in
$A_l$ connecting to any point in $W_\eta$ by the probability that it
connects to $u_1$. Hence
\[ \int_{A_l} g_1(y,W_\eta)dy \geq \int_{A_l} g(y,u_1)dy\]
Since the distance between the boundary of $A_l$ and $u_1$ could be as
large as $\delta$ (but not larger), and shifting $u_1$ to the origin, we have
\[ \int_{A_l} g_1(y,W_\eta)dy\geq \int_{(-\infty,-\delta)\times(-\infty,\infty)} g(y,0)dy\]
The same argument holds for $A_r$ with a similarly chose $u_2$ and so
\begin{equation}
\label{eq:al-ar}
\int_{A_l \cup A_r} g_1(y,W_\eta)dy \geq
\int_{\RR^2} g(y,0)dy - \int_{(-\delta,\delta)\times(-\infty,\infty)} g(y,0)dy
\end{equation}
Also, if $E(\eta,k)$ occurs, there is a point $u_3 = (x_3,y_3) \in
W_\eta$ such that $|x_t - y_3| \leq \delta/2$. And so, as before,
\begin{equation}
\label{eq:at}
\int_{A_t} g_1(y,W_\eta)dy\geq \int_{A_t} g(y,u_3)dy
\end{equation}
Now, we define two sets in $\RR^2$, $A_+ = (0,\delta_1/2) \times
(\delta,\infty)$ and $A_- (-\delta_1,0) \times (\delta,\infty)$. 
From the non-triviality condition on $f$ it is easy to see that for any
value of $r > 0$,
\[\min\left\{ \int_{(0,\delta_1/2)\times(0,\infty)} g(0,x)dx,
  \int_{(-\delta_1/2,0)\times(0,\infty)} g(0,x)dx\right\} >0. \]
Hence, if we choose a small enough value of $\delta$, we can find a $c
> 0$ such that 
\begin{equation}
\label{eq:aplus}
\min\left\{ \int_{A_+} g(0,x)dx,
  \int_{A_-} g(0,x)dx\right\}> c + \int_{(-\delta,\delta)\times(-\infty,\infty)} g(0,x)dx.
\end{equation}
Note that $A_-$ and $A_+$ have width $\delta_1/2$ and we are in the
case where the width of $\eta$ is at least $\delta_1$. So, if we
recenter $A_-$ and $A_+$ at $u_3$, at least one of them will be fully
contained in $A_t$. Combining this observation with~(\ref{eq:at})
which bounds the integral of the isolation function in terms of the
connection function around $u_3$ we get that 
\begin{equation*}
\int_{A_t} g_1(y,W_\eta)dy \geq\min\left\{ \int_{A_+} g(0,x)dx,
  \int_{A_-} g(0,x)dx\right\}
\end{equation*}
Further substituting~(\ref{eq:aplus}), this given us
\begin{equation}
\label{eq:at2}
\int_{A_t} g_1(y,W_\eta)dy \geq c + \int_{(-\delta,\delta)\times(-\infty,\infty)} g(0,x)dx.
\end{equation}
Combining this with~(\ref{eq:al-ar})
\[ \int_{A_t\cup A_l \cup A_r} g_1(y,W_\eta)dy \geq c +\int_{\RR^2} g(0,x)dx.\]
Observing that $A_l \cup A_r \cup A_t \subseteq \RR^2
\setminus F^{-1}_\delta(\eta)$, the last equation substituted
into~(\ref{eq:enk}) gives us that
\[\pr_\lambda(H_\eta | E(\eta,k))\leq \exp\left\{-\lambda \left(c +
 \int_{\RR^2} g(0,x)dx\right)\right\}.\]
Since, $q_1(\lambda) = \exp\left\{ -\lambda \int_{\RR^2}
  g(0,x)dx\right\}$, we get that 
\[\frac{\pr_\lambda(H_\eta | E(\eta,k))}{q_1(\lambda)} \leq
  e^{-\lambda c},\]
which in turn means that 
\[\frac{\pr_\lambda(H_\eta \cap E(\eta,k))}{q_1(\lambda)} \leq
  \pr_\lambda(E(\eta,k))\cdot  e^{-\lambda c}.\]
Summing over $k \geq 1$, we have 
\[\frac{\sum_{k=1}^{\infty}\pr_\lambda(H_\eta \cap E(\eta,k))}{q_1(\lambda)} \leq
 e^{-\lambda c} \cdot \sum_{k=1}^{\infty}  \pr_\lambda(E(\eta,k)).\]
For any fixed $\eta$, since $\sum_{k=1}^{\infty}
\pr_\lambda(E(\eta,k))$ is at most 1, the right hand side tends to 0
as $\lambda \rightarrow \infty$.
\end{proofof}

\ignore{

\section{A generalization of Gupta and Kumar's result}
\label{sec:appendix:gk}
In this section we prove Theorem~\ref{thm:gamma}. In the concluding
remarks of~\cite{gupta-cdc:1998,gupta-chapter:1999} the authors had
said that they felt that a stronger lower bounding technique would be
needed to prove Theorem~\ref{thm:gamma} which is a generalization of
their theorem. We give a new argument to prove the necessary condition
in this case. Gupta and Kumar had also mentioned that they felt that
the sufficient condition should follow easily, but in fact they had
gloss over certain important details, specifically the scaling
argument that we present in Lemma~\ref{lem:isolated}. In that sense
this proof can be considered a completing of the proofs
in~\cite{gupta-cdc:1998,gupta-chapter:1999}.

We begin by fixing our notation. Given $n, r \in \RR_+$ and $\gamma
\in [0,1]$ we define 4 random geometric graph models that are clearly
related to each other. We will use the notation $B(x,r)$ to denote the
disk of radius $r$ centered at $x$, and we will use the abbreviated
notation $B(r)$ to denote the disk of radius $r$ centred at the
origin. We will use the notation $\PP(\lambda)$ to denote a Poisson
Point Process of density $\lambda$ in $\RR$.

\begin{enumerate}
\item $\rgg(n,r,\gamma)$: Vertex set $V$ consists of $n$ points
  distributed uniformly at random in $B(1)$ and one point at the origin. 
\item $\prgg(n,r,\gamma)$: Vertex set $V = \PP(n) \cap B(1)$ and one point at the origin. 
\item $\prgg(n,r,\gamma,\ell)$: Vertex set $V = \PP(n) \cap B(\ell)$ and one point at the origin. 
\item $\prgg(n,r,\gamma, \infty)$: Vertex set $V = \PP(n)$ and one point at the origin. 
\end{enumerate}

The edge set of each of these 4 random graphs is defined in the same
way: each pair $u,v \in V$ such that $d(u,v) \leq r$ is connected
by an edge with probability $\gamma$ independently of all other
pairs.

\subsection{Necessary condition}
We first show that if $c$ is a constant then the probability that
$\rgg(n,r(n),\gamma)$ is connected tends to 0 as $n \rightarrow
\infty$. 

Clearly the probability that $\rgg(n,r(n),\gamma)$ is disconnected is
lower bounded by the probability that the $\rgg(n,r(n),\gamma)$
contains an isolated vertex. So, we will focus on the probability that
there is an isolated vertex, and try to lower bound that.
\begin{equation}
\label{eq:both-isolated}
\pr\left(\bigcup_{v \in V}\{v \mbox{ is isolated}\}\right) \geq 
\sum_{v \in V} \pr(v\mbox{ is isolated}) - \sum_{u\in V}\sum_{w \in V,
w \ne u}
\pr(u,w\mbox{ are isolated})
\end{equation}
The probability that a given vertex far (i.e. at least $r(n)$) from
the edge of the unit disc is isolated is $(1 - \gamma \pi
r(n)^2)^{n-1}$. A detailed argument for this fact can be found in the
proof of Lemma~\ref{lem:gk-poisson}, but this also follows from
Proposition 1.3 of~\cite{meester:1996}. Therefore, following Gupta and
Kumar's argument~\cite{gupta-chapter:1999} that shows the
negligibility of boundary effects as $n \rightarrow \infty$ we have
that
\begin{equation}
\label{eq:1-isolated}
\sum_{v \in V} \pr(v\mbox{ is isolated}) \tilde e^{-c}.
\end{equation}
Now, we turn to finding an upper bound for the event $A_{u,v} =
\{u,w\mbox{ are isolated}\}$. The argument we present here is
different from the argument presented in~\cite{gupta-chapter:1999}.

Consider the set of points $M_\beta$ which is those points of $\{\beta
z : z \in \ZZ\}$ that lie within the unit disc centred at the origin,
for some $\beta \in \RR_+$. The set $M_\beta$ discretizes the interior
of the unit disc in such a way that every point within the unit disc
lies at distance at most $\beta/\sqrt{2}$ from a point of
$M_\beta$. Associate each point of $V$ with its nearest point in
$M_\beta$. Let $F: V \rightarrow M_\beta$ describe this association
(i.e. $F(v)$ is the nearest point of $M_\beta$ from $v \in V$.) With
probability 1 $F$ is well defined (i.e. the probability of a point of
$V$ being exactly equidistant from two points of $M$ is 0). 

Setting $\epsilon = \beta/\sqrt{2}$ we note that if a point $v \in V$
is isolated in $\rgg(n,r,\gamma)$, then it cannot be connected to any
point that lies within a radius $r - \epsilon$ of $F(v)$.

Now, it is clear by this construction that if $u \in V$ is isolated
then it must not be connected to any other points of $V$ that lie in
$F(u)$. This will help us upper

\subsection{Sufficient condition}

To prove Theorem~\ref{thm:gamma} we will closely follow the proof given by Gupta
and Kumar in~\cite{gupta-cdc:1998}. The steps in the proof will be the
following: 
\begin{enumerate}
\item We will show that for $\prgg(n,r,\gamma)$ the origin is either
  isolated or part of an infinite component with probability 1. This
  will use Meester and Roy's theorem which shows the same property for
  $\prgg(n,r,\gamma,\infty)$.  This step has been left out
  in~\cite{gupta-cdc:1998}. The model $\prgg(n,r,\gamma,\ell)$ will
  be used in this proof.
\item We will use the lemma proved in the previous step to show that under
  condition~(\ref{eq:radius}) the probability that $\prgg(n,r,\gamma)$
  has an isolated vertex is upper bounded by $e^{-c}$. This step
  involves minor modifications in the proof given in~\cite{gupta-cdc:1998}.
\item The second step will help us show that the probability of
  $\rgg(n,r,\gamma)$ being disconnected is upper bounded by $2e^{-c}$. This step
  involves minor modifications in the proof given in~\cite{gupta-cdc:1998}.
\end{enumerate}

We restate Meester and Roy's lemma for reference:
\begin{lemma}
\label{lem:meester}
\cite[pp 174]{meester:1996} For a random connection model built on a Poisson Point Process of
density $\lambda$ with a random connection function $g(\cdot)$ that
has bounded support
\[ \lim_{\lambda \rightarrow \infty}
\frac{1-\theta_g(\lambda)}{\pr(|W| = 1)} =1,\]
where $\theta_g(\lambda)$ is the probability that the component
containing the origin (denoted $W$) is infinite in size.
\end{lemma}

For the 3 models based on Poisson point processes we will use the
notation $W^{\lambda}_\ell(x)$ to denote the connected component
containing the point $x$, and use only the notation $W^{\lambda}_\ell$
when $x$ is the origin, where $\lambda$ will be the density
of the model ($n$ for all three models in the description above) and
$\ell$ will be the radius of the disc around the origin in which the
points of the model are placed (1, $\ell$ and $\infty$ respectively
for models 2, 3 and 4).
\begin{lemma}
\label{lem:isolated}
For $\prgg(n,r(n),\gamma)$ if we denote by $A_n$ the event that there
exists a sequence $\{s_n\}_{n\geq 1}$ such that $|W^n_1| \geq s_n$ and $1 \leq s_n \leq n,
\forall n\geq 1$ and $s_n \rightarrow \infty$ as $n \rightarrow \infty$, then
\[\lim_{n\rightarrow \infty} (1 - \pr(A_n)) = \lim_{n \rightarrow
  \infty} \pr(|W^n_1| = 1),\]
as long as 
\begin{equation}
\label{eq:condition-radius}
\exists c : r(n)\cdot n^{1/3} \leq c,
\forall n \geq 1
\end{equation}
\end{lemma}

\begin{proofof}{Lemma~\ref{lem:isolated}} By scaling we couple
$\prgg(n,r,\gamma)$ to a random graph model on a disc of larger radius
such that the probability that the component containing
the origin is of any particular size remains
exactly the same in the coupled model. This coupled model is
$\prgg(n^{1/3},r\cdot n^{1/3}, \gamma, n^{1/3})$ i.e. the random graph
model with a lower density than $\prgg(n,r,\gamma)$ by a factor of
$n^{2/3}$ and a radius longer by a factor of $n^{1/3}$ on a disc of
radius $n^{1/3}$. This basically involved expanding the unit disc with
density to a
disc of radius $n^{1/3}$. All the edges and non-edges are preserved
since the connection radius increases in exactly the same proportion
as the distances between points. The increase in distances brings the
density down by a factor of the square of the increase in distances
i.e. by $n^{2/3}$. Hence it is easy to see that:
\begin{equation*}
\pr(|W^n_1| = k) = \pr(|W^{n^{1/3}}_{n^{1/3}}| = k), \forall n,k \geq 1.
\end{equation*}
 In particular, taking the limit as $n \rightarrow \infty$ on both
 sides for $k = 1$,
\begin{equation}
\label{eq:lim-1}
\lim_{n \rightarrow \infty} \pr(|W^n_1| = 1) = \lim_{n \rightarrow \infty}\pr(|W^{n^{1/3}}_{n^{1/3}}| = 1).
\end{equation}
But $\prgg(n^{1/3},r\cdot n^{1/3}, \gamma, n^{1/3})$ has the
property that as $n \rightarrow \infty$, the density of the process
tends to infinity, and the disc it covers expands to the entire
plane. In other words it converges to
$\prgg(\lambda,r(\lambda),\gamma,\infty)$ in the limit $\lambda
\rightarrow \infty$. So (\ref{eq:lim-1}) implies that 
\begin{equation}
\label{eq:lim-1-infty}
\pr(|W^n_1| = 1) = \lim_{\lambda \rightarrow \infty} \pr(|W^{\lambda}_{\infty}| = 1).
\end{equation}
The condition on the connection function ($r(n)\cdot n^{1/3} < c$)
implies that the connection function has bounded support (i.e. beyond
a radius that is at most $c$ the probability of forming an edge is
0). Hence we can use Lemma 6.6 of~\cite{meester:1996}, restated above
as Lemma~\ref{lem:meester}. 
This lemma along with (\ref{eq:lim-1-infty}) implies that 
\begin{equation}
\label{eq:lim-meester-infty}
\lim_{n \rightarrow \infty} \pr(|W^n_1|=1) = \lim_{\lambda \rightarrow
  \infty} (1 - \pr(|W^\lambda_\infty| = \infty)).
\end{equation}
Noting that as $n \rightarrow \infty$, the event $A_n$ tends to the
event $\{|W^\lambda_\infty| = \infty\}$ as $\lambda \rightarrow
\infty$, the lemma follows from (\ref{eq:lim-meester-infty}).
\end{proofof}
We now move to step 2 in the proof outline i.e. the following lemma:
\begin{lemma}
\label{lem:gk-poisson}
If (\ref{eq:radius}) holds then
\[\lim \sup_{n \rightarrow \infty} \pr(\exists x \in V : |W^n_1(x)| = 1)
\leq e^{-c},\]
where $c = \lim_{n\rightarrow \infty} c(n)$.
\end{lemma}
\begin{proofof}{Lemma~\ref{lem:gk-poisson}}
Let us assume that the Poisson process places $j$ points, $x_1, \ldots,x_j$ in the unit
disc. For any given point out of these $j$, the probability that it is
isolated (i.e. its component has size 1) is computed by observing that
this happens only when the points lying in disc of
radius $r(n)$ around it are not
connected to it. To compute this probability we observe that if $k
\leq j-1$ of these points lie inside this disc then they must all be
not connected which happens with probability $(1 - \gamma)^k$ for a
fixed set of $k$ points. The number of ways of choosing $k$ points out
of $j-1$ is ${j-1 \choose k}$ and for a fixed set of $k$ points out of
$j-1$, the probability that they lie within the disc of radius $r(n)$
around the point of interest while the other $j-1-k$ do not is given
by $(\pi r^2(n))^k (1 - \pi
r^2(n))^{j-1-k}$.  Hence we have that 
\[ \pr(x_1 \mbox{ is isolated}) \leq
\sum_{k=0}^{j-1} {j-1 \choose k} (\pi r^2(n))^k (1 - \pi
r^2(n))^{j-1-k}\cdot (1 - \gamma)^{k} =(1 - \gamma\pi
r^2(n))^{j-1},\]
and so 
\[ \pr(\exists i: 1\leq i\leq j, x_i \mbox{ is isolated}) \leq j \cdot
(1 - \gamma\pi r^2(n))^{j-1}.\] 
From here we get the proof of the lemma just as Gupta and Kumar do by
conditioning on the event that the Poisson point process places $j$
points in the unit disc. 
\end{proofof}
We note that to be precise we must observe that the disc of radius
$r(n)$ centred on an arbitrary point in the unit disc may not lie
entirely within the unit disc. It is easy to see that the this problem
occurs in a ring of width $r(n)$ at the boundary of the unit
disc. This complication disappears in the limit since $r(n)
\rightarrow 0$ as $n \rightarrow \infty$. Gupta and Kumar have handled
this complication in precise and tedious detail in the appendix
of~\cite{gupta-chapter:1999} and so we don't repeat that here.

Finally we show that the bound on $\prgg(n,r(n),\gamma)$ containing an
isolated vertex translates into a bound on $\rgg(n,r(n),\gamma)$ being
disconnected if (\ref{eq:radius}) holds.

\begin{proofof}{Theorem~\ref{thm:gamma}} Since the radius bound of
  (\ref{eq:radius}) satisfies the condition
  (\ref{eq:condition-radius}), we can apply Lemma~\ref{lem:isolated}
  to claim that for any $\epsilon > 0$ there is a sufficiently large
  $n$ such that
\begin{equation}
\label{eq:gk-3.1-1}
\pr(\prgg(n,r(n),\gamma)\mbox{ is
  disconnected}) \leq (1 + \epsilon) \cdot \pr(\exists x \in V :
|W^n_1(x)| = 1). 
\end{equation}
We follow Gupta and Kumar's calculations, noting only that in our case
\[\pr(\mbox{node }k\mbox{ is isolated in }\rgg(k,r(n),\gamma)) \leq (1
- \gamma \pi r^2(n))^{k-1},\]
as argued in the proof of Lemma~\ref{lem:gk-poisson}. Hence, we get that
\[ \pr(\rgg(n,r(n),\gamma)\mbox{ is disconnected}) \leq 2(1 - 4 \epsilon)\left(\pr(\exists x \in V :
|W^n_1(x)| = 1) + \frac{e^{-\gamma \pi r^2(n)}}{\gamma \pi
  r^2(n)}\right).\]
Under condition (\ref{eq:radius}) and using Lemma~\ref{lem:gk-poisson}
we get 
\[ \lim_{n \rightarrow \infty} \pr(\rgg(n,r(n),\gamma)\mbox{ is disconnected}) \leq 2(1 - 4
\epsilon)\left( e^{-c}\right).\]
Since $\epsilon$ can be taken to be arbitrarily small, the theorem
follows since $e^{-c} = 0$.
\end{proofof}
}


\end{document}